\documentclass[11pt,british]{article}
\usepackage[utf8]{inputenc}
\setcounter{secnumdepth}{2}
\setcounter{tocdepth}{2}
\usepackage{mathtools}
\usepackage{amsmath}
\usepackage{amsthm}
\usepackage{amssymb}
\usepackage{setspace}
\usepackage{microtype}
\usepackage{DaveThesis}
\usepackage{babel}
\usepackage{stmaryrd}
\usepackage{slashed}
\usepackage{cancel}
\usepackage{bbm}
\usepackage{booktabs}
\onehalfspacing

\makeatletter
\usepackage{tikz}
\usepackage{tikz-cd}
\usetikzlibrary{backgrounds}
\usepackage[framemethod=tikz]{mdframed}
\AtBeginEnvironment{mdframed}{%
\tikzset{every picture/.style={}}%
}
\mdfsetup{roundcorner=.5ex}

\theoremstyle{definition}
\newtheorem*{defn*}{Definition}


\usepackage{jheppub}                   
\usepackage[linecolor=blue,backgroundcolor=blue!25,bordercolor=blue,textsize=scriptsize]{todonotes}

\makeatletter
\gdef\@fpheader{\ }                    
\makeatother

\usepackage{setspace}
\setstretch{1.15}

\usepackage{slashed}	 	
\usepackage{amsfonts} 		
\SetTracking{encoding={*}, shape=sc}{0} 	
\usepackage{color} 		
\definecolor{darkblue}{rgb}{0.0,0.0,0.3} 	
\allowdisplaybreaks		
\date{\today} 		
\numberwithin{equation}{section}	

\makeatletter
\g@addto@macro\bfseries{\boldmath}
\makeatother

\let\originalleft\left
\let\originalright\right
\renewcommand{\left}{\mathopen{}\mathclose\bgroup\originalleft}
\renewcommand{\right}{\aftergroup\egroup\originalright}


\title{Exceptional complex structures and the hypermultiplet moduli of 5d Minkowski compactifications of M-theory}

\author[a]{David Tennyson}
\emailAdd{d.tennyson16@imperial.ac.uk}
\author[a]{and Daniel Waldram}
\emailAdd{d.waldram@imperial.ac.uk}

\affiliation[a]{Department of Physics,
	Imperial College London, \\
	Prince Consort Road, London, SW7 2AZ, UK}
	
\subheader{\hfill\textrm{Imperial/TP/21/DW/2}}
	
\abstract{We present a detailed study of a new mathematical object in $\ER{6}$ generalised geometry called an `exceptional complex structure' (ECS). It is the extension of a conventional complex structure to one that includes all the degrees of freedom of M-theory or type IIB supergravity in six or five dimensions, and as such characterises, in part, the geometry of generic supersymmetric compactifications to five-dimensional Minkowkski space. We define an ECS as an integrable $\Us{6}\times \bbR^{+}$ structure and show it is equivalent to a particular form of involutive subbundle of the complexified generalised tangent bundle $L_{1} \subset E_{\bbC}$. We also define a refinement, an $\SUs{6}$ structure, and show that its integrability requires in addition a vanishing moment map on the space of structures. We are able to classify all possible ECSs, showing that they are characterised by two numbers denoted `type' and `class'. We then use the deformation theory of ECS to find the moduli of any $\SUs{6}$ structure. We relate these structures to the geometry of generic minimally supersymmetric flux backgrounds of M-theory of the form $\bbR^{4,1}\times M$, where the $\SUs{6}$ moduli correspond to the hypermultiplet moduli in the lower-dimensional theory. Such geometries are of class zero or one. The former are equivalent to a choice of (non-metric-compatible) conventional $\SL{3,\bbC}$ structure and strikingly have the same space of hypermultiplet moduli as the fluxless Calabi--Yau case.}

\begin{document}

\maketitle


\section{Introduction}

Supersymmetric flux backgrounds of string/M-theory remain of great interest for both phenomenology and the AdS/CFT correspondence. They also provide extensions of conventional geometrical structures that are of mathematical interest in their own right \cite{Hitchin02,Gualtieri04,Grana:2004bg,Pacheco:2008ps,Grana:2011nb,Coimbra:2014uxa,Ashmore:2015joa,Coimbra:2016ydd,Grana:2016dyl,Ashmore:2016qvs}. In the absence of flux, compactifications on manifolds with special holonomy are required in order to preserve supersymmetry. The integrable $G$-structures that describe such backgrounds can be used to understand key properties of the background, such as the massless moduli. Following \cite{Gauntlett:2002sc,Gauntlett:2002fz,Gauntlett:2002nw}, turning the fluxes on breaks the integrability of the conventional $G$-structures, and understanding properties of the geometry becomes much more difficult. 

In this paper we will focus on generic minimal supersymmetric flux compactifications to five-dimensional Minkowski space. Although the geometrical structures we describe will be equally applicable to type II theories, we will focus on M theory backgrounds. They are the natural M-theory extensions of a Calabi--Yau threefold, a geometry that enters many classic problems such as braneworld models \cite{Lukas:1998yy,Lukas:1998tt}, geometrical engineering of five-dimensional supersymmetric field theories \cite{Morrison:1996xf,Douglas:1996xp,Intriligator:1997pq}, or the M-theory interpretation of topological string amplitudes\cite{Gopakumar:1998ii,Gopakumar:1998jq}. When analysed using conventional $G$-structure techniques one finds a local $\SU2$ structure \cite{Gauntlett:2004zh} and the solutions naturally fall into two classes, one that includes the fluxless Calabi--Yau solution and the other corresponding to a back-reacted M5-brane geometry. Our focus will be in particular on the geometry that captures the moduli that fall into hypermultiplet representations in five-dimensions. For the Calabi--Yau case these come from the complex structure moduli and the three-form gauge field potentials, so our structure can be viewed as some extension of conventional complex geometry. 

We will use the formalism of exceptional generalised geometry \cite{Hull:2007zu,Pacheco:2008ps,Coimbra:2011ky} which describes arbitrary type II or M-theory flux backgrounds in terms of geometry on an extended tangent space on which there is a natural action of the exceptional group $\ER{d}$. We will define a new $G$-structure in $\ER{6}$ geometry, referred to as an \emph{exceptional complex structure} (ECS), that is a natural extension of a conventional complex structure or the generalised complex structures of Hitchin and Gualtieri. Its moduli space is quaternionic-K\"ahler and parameterises the hypermultiplet degrees of freedom in the $D=5$ theory. The analogous objects were first introduced in $\ER{7}$ geometry in the context of $D=4$ backgrounds of string and M-theory in \cite{Ashmore:2019qii}, and analogous structures were later found in the $\mathrm{O}(6,6+n)$ geometry relevant to heterotic strings \cite{Ashmore:2019rkx}. This work can be viewed as an extension of these ideas to $D=5$ Minkowski backgrounds. As we will see the ECS is a particularly rich structure that can be viewed as having aspects of both conventional complex and hyperk\"ahler geometries.

Recall that in the language of exceptional generalised geometry, a generic supergravity background is described through a generalised metric, that is, a reduction of the structure group to the maximally compact subgroup $H_{d}\subseteq \ER{d}$ \cite{Hull:2007zu,Coimbra:2011ky,Coimbra:2012af}. If the background also preserves supersymmetry then there is a further reduction of the structure group such that the Killing spinor equations are equivalent to the existence of a torsion-free generalised connection compatible with the reduced structure group \cite{Coimbra:2014uxa,Coimbra:2016ydd}. This rephrases the conditions of a  supersymmetric background with arbitrary flux as the existence of global integrable $G$-structure in generalised geometry in analogy with conventional $G$-structure in the fluxless case\footnote{This is in fact only true for Minkowski backgrounds. AdS backgrounds still have a well-defined global $G$-structure, just now they are weakly integrable meaning that the intrinsic torsion lies in a singlet representation of the $G$-structure \cite{Ashmore:2016qvs,Ashmore:2016oug,Coimbra:2017fqv} }. The $D=5$ backgrounds we discuss in this paper are described by an integrable \emph{global} $\USp{6}$ structure and were studied in \cite{Ashmore:2015joa} where they were dubbed exceptional Calabi--Yau (ECY) spaces. They are defined in terms two sub-structures that have to obey some compatibility conditions: a V-structure, that is invariant under $\mathrm{F}_{4(4)}$ structure, and an H-structure, invariant under $\SUs{6}$ structure \cite{Grana:2009im}. Here, we will find that it is convenient to define an ECS as a slightly weaker version of an H-structure, invariant under $\bbR^{+}\times \Us{6}$. It is equivalent to specifying a particular complex subbundle of the generalised tangent bundle, the analogue of holomorphic tangent bundle $T^{1,0}\subset T_\bbC$ in complex geometry. It is these $\SUs6$ and  $\bbR^{+}\times \Us{6}$ structures that we will study in some detail in this paper.

We find that we can describe a given ECS in terms of two labels which we have called \emph{type}, analogous to the type of generalised complex structures, and \emph{class}. The class will turn out to exactly match the two classes of solutions identified in \cite{Gauntlett:2004zh} using conventional $G$-structures. Supersymmetry means that the space of (non-integrable) $\SUs6$ structures and ECSs are themselves hyperk\"ahler and k\"ahler spaces respectively, and, furthermore, admit analogues of the Hitchin functionals \cite{Hitchin00,Hitchin01} that exist, for example, for conventional complex structures. As we will discuss, supersymmetry can  be regarded as extremisation of the appropriate functional. We derive the most general form of the ECS in each case and use this to find the exact moduli in terms of natural cohomology groups. 

It is well-known that there are general no-go theorems \cite{Maldacena:2000mw,Giddings:2001yu,Gauntlett:2003cy,Gauntlett:2002sc} that exclude compact solutions with non-zero flux (unless one allows sources) so that the only allowed compact background is, in our case, a Calabi--Yau manifold. Thus at first blush our moduli space calculations should be regarded as identifying the hypermultiplet moduli for non-compact backgrounds or alternatively for spaces with boundaries where the sources have been removed. Interestingly, however, it is also possible that the sources enter only in the V-structure equations, such that the $\SUs6$ H-structure remains well defined even at the source. In this case, our expressions would be valid without excising sources.

The paper is structured as follows. In section \ref{sec:D=5 geometry} we review the geometry of $D=5$, $\mathcal{N}=1$ backgrounds in terms of local $\SU{2}$ structures and in terms of exceptional Calabi--Yau structures. Following this, we define ECS as a substructure of the ECY and study the integrability conditions in section \ref{sec: ECS}. In section \ref{sec:classification}, we completely classify ECS and $\SUs{6}$ structures, introducing the notion of \emph{type} and \emph{class}. We further study the integrability conditions, and relate their structure to the local $\SU{2}$ structure of the background. In section \ref{sec:moduli of H-structures} we use the deformation theory of ECS to find the hypermultiplet moduli, following similar results for complex structures \cite{kodaira2006complex}. We first analyse the class 0 and class 1 cases separately and then introduce exceptional Dolbeault operators to analyse the moduli in full generality. Finally, in section \ref{sec: Minkowski Backgrounds}, we consider how these structures apply to Minkowski backgrounds. The appendices are left for conventions and proofs of results in the main text.

\section{Review of \texorpdfstring{$D=5$}{D=5} Minkowski Backgrounds}\label{sec:D=5 geometry}

We would like to analyse the geometry of $\mathcal{N}=1$, $D=5$ flux compactifications of M-theory with the aim of finding their moduli. First, we will briefly review what is already known about these backgrounds, both in terms of local $\SU{2}$ structures in conventional geometry, and in terms of exceptional Calabi--Yau structures in exceptional geometry. This provides the physical context for exceptional complex structures that we study in greater detail in the following sections. The reader that is only interested in the mathematical aspects of these structures can happily skip this section, referring back for notation.

\subsection{Local \texorpdfstring{$\SU{2}$}{SU(2)} Structures}

The generic geometry for the internal manifold for backgrounds of the form $\bbR^{4,1}\times M$ was found in \cite[App D]{Gauntlett:2004zh} where it was given in terms of a local $\SU{2}$ structure. Generically the the supersymmetry parameter can be written in terms of two orthonormal chiral spinors $\eta^{1},\eta^{2}$ as
\begin{equation}\label{eq:Killing Spinor Form}
    \tilde{\eta} = \sqrt{2}(\cos\alpha\, \eta^{1} + \sin \alpha\,(a\eta^{1} + \sqrt{1-|a|^{2}} \eta^{2})^{*}) .
\end{equation}
where $a$ is some complex function with $|a|\leq 1$. Note that there can be points on $M$ where the supersymmetry parameter is chiral. These correspond to having $\sin(2\alpha)=0$, so that one of the chiral terms vanishes and the $\SU2$ structure degenerates to an $\SU3$ structure. If this happens at every point, we have a true global $\SU3$ structure and one finds that supersymmetry is incompatible with the presence of flux \cite{Behrndt:2000zh,Gauntlett:2004zh}. Hence in this case $M$ is a Calabi--Yau manifold and the geometry is well understood. To match \cite{Gauntlett:2004zh}, we will make the following definitions\footnote{Note that our notation slightly differs from \cite{Gauntlett:2004zh} in that their $\xi$ is our $\eta$, their $\zeta$ is our $\theta$, and their $K_{i}$ are our $\zeta_{i}$. }
\begin{equation}
    \epsilon^{+} = \frac{1}{\sqrt{2}}\eta \qquad \epsilon^{-} = -\frac{1}{\sqrt{2}}\ii\gamma^{7}\eta \qquad \theta = \frac{\pi}{2} - 2\alpha
\end{equation}

The local $\SU{2}$ structure can equivalently be described in terms of 1-forms $\zeta_{1},\zeta_{2}$ and three real 2-forms $\omega_{1},\omega_{2},\omega_{3}$ such that the metric and volume form can be written as
\begin{equation}\label{eq:SU2 metric}
    \dd s^{2}(M) = \dd s^{2}_{\SU{2}} + \zeta_{1}^{2} + \zeta_{2}^{2} \qquad \vol = \frac{1}{2}\omega\wedge \omega\wedge \zeta_{1}\wedge \zeta_{2}
\end{equation}
where $\dd s^{2}_{\SU{2}}$ is the 4-dimensional $\SU{2}$ metric given by the $\omega_{i}$. In the volume form, we can take $\omega=\omega_{i}$ for any $i$. There is an alternative description of the local $\SU{2}$ structure through a complex 2-form $\hat{\Omega} = \omega_{2} + \ii \omega_{1}$, and a real 2-form $\hat{\omega} = \omega_{3}$ (following the conventions of \cite{Gauntlett:2004zh}). These objects are defined through bilinears in $\eta_{i}$ which are not globally well-defined objects on $M$. Instead, we should build tensors from bilinears in the global spinors $\epsilon^{\pm}$. A full list of these bilinears in terms of the local $\SU{2}$ structure is given in appendix \ref{app:bilinears}, but here we will simply define the following
\begin{equation}
    \arraycolsep = 1.4pt
    \begin{array}{rclcrclcrcl}
        1 &=& \bar{\epsilon}^{+}\epsilon^{+} = \bar{\epsilon}^{-}\epsilon^{-} & \qquad & \tilde{\Omega} &=& \epsilon^{-\,\mathrm{T}}\gamma_{(2)} \epsilon^{+} & \qquad & \LX &=& \epsilon^{+\,\mathrm{T}}\gamma_{(3)}\epsilon^{+} \\
        \sin\theta &=& \bar{\epsilon}^{+}\epsilon^{-} = \bar{\epsilon}^{-}\epsilon^{+} & & Y &=& -\ii\bar{\epsilon}^{+}\gamma_{(2)}\epsilon^{+} & & V &=& \bar{\epsilon}^{+}\gamma_{(3)}\epsilon^{-} \\
        f=a^{*}\cos\theta &=& \epsilon^{+\,\mathrm{T}}\epsilon^{+} = -\epsilon^{-\,\mathrm{T}}\epsilon^{-} & & f\vol &=& -\ii\epsilon^{+\,\mathrm{T}}\gamma_{(6)} \epsilon^{-} & & \sin\theta \vol &=& - \ii\bar{\epsilon}^{+}\gamma_{(6)} \epsilon^{+}
    \end{array}
\end{equation}
Here the $\gamma$ are the gamma matrices for $\Cliff{6}$ in an orthonormal frame for $M$. There are some other useful spinor bilinear identities we can define
\begin{equation}
    \arraycolsep = 1.4pt
    \begin{array}{rclcrcl}
        \tilde{\zeta}_{1} &=& \bar{\epsilon}^{+}\gamma_{(1)} \epsilon^{+} & \qquad & \ii* \Lambda &=& \epsilon^{-\,\mathrm{T}} \gamma_{(3)} \epsilon^{+} \\
        \tilde{\zeta}_{2} &=& \ii\bar{\epsilon}^{+}\gamma_{(1)}\epsilon^{-} & & *V &=& - \ii\bar{\epsilon}^{+}\gamma_{(3)} \epsilon^{+} \\
        Y' &=& \ii\bar{\epsilon}^{+}\gamma_{(2)} \epsilon^{-} & & Z= *\tilde{\zeta}_{1} &=& \ii \bar{\epsilon}^{+}\gamma_{(5)}\epsilon^{-}
    \end{array}
\end{equation}
Note that, from \eqref{eq:Killing Spinor Form}, the $\SU{2}$ structure degenerates to an $\SU{3}$ structure when $\sin\theta = \pm1$. When this occurs, the form of the metric as given in \eqref{eq:SU2 metric} breaks down. Although $\zeta_{1}$ and $\zeta_{2}$ are not well defined at those points the spinors $\epsilon^\pm$ are, and hence the bilinears defined above are well defined but may degenerate.

The Killing spinor equations put constraints on these tensors. The necessary and sufficient conditions on the tensors for a supersymmetric background were found in \cite{Gauntlett:2004zh}. It is helpful to divide these into two sets, the reason for which we will explain properly in the next section when we introduce $\SUs{6}$ structures. First one has
\begin{equation}\label{eq:Killing spinor eqn}
\arraycolsep = 1.4pt
\begin{array}{rclcrcl}
    \dd(\ee^{3\warp}\sin\theta) &=& 0 &\qquad& \dd(\ee^{3\warp}V) &=& \ee^{3\warp}\sin \theta F \\
    \dd(\ee^{3\warp} f) &=& 0 & & \dd(\ee^{3\warp}\LX) &=& -\ee^{3\warp}f F
\end{array}
\end{equation}
The remaining constraints from the Killing spinor equations are
\begin{equation}\label{eq:Killing spinor eqn b}
    \dd(\ee^{\warp} Y') = -\xi\lrcorner F \qquad 
    \dd(\ee^{\warp}Z) = \ee^{\warp}Y'\wedge F
\end{equation}
where $\xi$ is a Killing vector field that preserves all the bilinears, given by
\begin{equation}\label{eq:Killing vector}
    \xi = \ee^{\warp}\tilde{\zeta}_{2}^{\#} 
\end{equation}
Note that generically these equations are together sufficient to imply supersymmetry.\footnote{\label{class1 footnote} When $\sin\theta=f=0$ one must append \eqref{eq:Killing spinor eqn} by conditions relating the flux to the exterior derivatives of $\tilde{\zeta}_1\wedge Y$ and $\tilde{\zeta}_1\wedge\tilde{\Omega}$ as given in \eqref{eq:class 1 inv}-\eqref{eq:class 1 flux 2}.} 

As was noted in \cite{Gauntlett:2004zh}, the full set of equations imply that $f=0$. Since the warp factor $\ee^{2\warp}$ cannot vanish, the first equation in \eqref{eq:Killing spinor eqn} then splits the solutions into two classes depending on whether globally $\sin\theta \neq 0$ or $\sin\theta = 0$. Physically, these two classes correspond to fluxed backgrounds that live in the same family as the Calabi--Yau solution and backgrounds corresponding to the back-reacted geometry around an M5 brane transverse to $M_{\text{HK}}\times \bbR$, where $M_{\text{HK}}$ is a four-dimensional hyperk\"ahler manifold, and where a flat direction of the M5 brane fibers over $M_{\text{HK}}\times \bbR$ base\footnote{We can also consider compactifying $\bbR$ to $S^{1}$.}. We denote these 
\begin{itemize}
    \item[(i)] class 0 ($\sin\theta\neq 0$): flux-deformed Calabi--Yau space
    \item[(ii)] class 1 ($\sin\theta=0$): back-reacted wrapped M5-brane geometry
\end{itemize}
These same two cases will arise naturally in our analysis of generalised $\SUs{6}$ structures. In that case, integrability of the structure is not sufficient to set $f=0$ and so we will define class 1 by $\sin\theta=f=0$ and class 0 as the complement. 

\subsection{Exceptional Calabi--Yau Structures}\label{sec: ECY review}

$\mathcal{N}=1$, $D=5$ flux backgrounds have been studied in context of exceptional generalised geometry for $\bbR^{4,1}$ backgrounds in \cite{Coimbra:2014uxa,Ashmore:2015joa,Coimbra:2016ydd} and for AdS backgrounds in \cite{Ashmore:2016qvs,Coimbra:2017fqv}. These were respectively called exceptional Calabi--Yau structures (ECY), and exceptional Sasaki-Einstein structures (ESE) because they generalised the respective manifolds to arbitrary flux.\footnote{Closely related structures also exist in other dimensions \cite{Ashmore:2015joa,Ashmore:2016qvs}.} Both ECY and ESE structures are described by a global $\USp{6}\subset \ER{6}$ structure, but differ in their integrability conditions. ECY are defined by an entirely integrable $\USp{6}$ structure \cite{Coimbra:2014uxa}, i.e. a global $\USp{6}$ structure with a torsion-free compatible connection, while ESE backgrounds are defined to have weak generalised holonomy \cite{Coimbra:2017fqv} meaning the intrinsic torsion lies in a singlet representation. In this paper, we will focus on Minkowski backgrounds and will review the definition and integrability of ECY here. Details of $\ER{6}$ geometry, including expressions for the generalised tangent bundle, the adjoint bundle, and the adjoint action, can be found in appendix \ref{app:conventions}.

As with conventional geometry, we would like to describe integrable $G$-structures in terms of generalised tensors stabilised by the group $G$, and differential conditions on those tensors. This would be the equivalent of the existence and closure conditions of the K\"ahler form $\omega$ and the holomorphic 3-form $\Omega$ for an $\SU{3}$ structure. In \cite{Ashmore:2015joa} it was shown that a $\USp{6}$ structure is defined by the combination of what was called an H-structure and a V-structure, satisfying some compatibility conditions and differential conditions. In the effective five-dimensional theory, the H-structure is be related to the scalars in the hypermultiplets, while the V-structure is related to the scalars in the vector multiplets, hence the nomenclature\footnote{In fact, in \cite{Ashmore:2015joa}, they introduce these structures for compactifications down to 4, 5, and 6 dimensional Minkowski space preserving 8 supercharges. The H and V-structures for compactifications down to 4 dimensions were first introduced in \cite{Grana:2009im}.}. 

For M theory compactifications, the H-structure generalises the notion of complex structure on a Calabi--Yau manifold. As we will see, in some ways it is also analogous to a hyperk\"ahler geometry, and so can more generally be viewed as an object that interpolates between the two. It is defined by a triplet of weighted adjoint valued tensors $J_{\alpha}\in \Gamma((\det T^{*})^{1/2}\otimes \ad \tilde{F})$, $\alpha = 1,2,3$. These have to form a highest root $\su{2}$ algebra of $\e{6(6)}$. In particular we require
\begin{equation}\label{eq:H condition 1}
    [J_{\alpha},J_{\beta}] = 2\kappa\epsilon_{\alpha\beta\gamma}J_{\gamma} \qquad \Tr(J_{\alpha}J_{\beta}) = -\kappa^{2}\delta_{\alpha\beta}
\end{equation}
Where $\kappa$ is some section of $(\det T^{*})^{1/2}$. Alone, these tensors define an $\SUs{6}$ structure, where $\SUs{6}$ is a particular non-compact real form\footnote{It can be identified as the following subgroup of $\SL{6,\bbC}$. If $J$ is an antisymmetric $6\times 6$ matrix such that $J^{2}=-\mathrm{Id}$, then $U\in \SUs{6}$ with $U^{*}$ its complex conjugate if and only if $UJ = JU^{*}$.} of $\SL{6,\bbC}$  \cite{GUNAYDIN1985309,gilmore2008lie}. This $G$-structure is integrable if and only if the following generalised one-forms vanish.
\begin{equation}\label{eq:H condition 2}
    \mu_{\alpha}(V) := -\frac{1}{2}\epsilon_{\alpha\beta\gamma}\int_{M} \Tr(J_{\beta} L_{V}J_{\gamma}) \overset{!}{\equiv} 0 
\end{equation}
The objects $\mu_{\alpha}$ can be viewed as moment maps for the action of generalised diffeomorphisms on the infinite-dimensional hyperk\"ahler space of H-structures $\ZSUs$. We will expand a little more in the next section on how this works. In fact, the space $\ZSUs$ is a hyperk\"ahler cone over a quarternionic-K\"ahler base where the $\bbH^{*}= \SU{2}\times \bbR^{+}$ of the cone direction is precisely parameterised by the $\{J_{\alpha}, \kappa\}$ defining the $\SUs{6}$ structure. This fact will be important when we analyse the moduli space in section \ref{sec:moduli of H-structures}.

For a generic background ($\sin\theta \neq 0$), the vanishing of the moment maps above is equivalent to the set of Killing spinor equations given in \eqref{eq:Killing spinor eqn}. The remaining constraints \eqref{eq:Killing spinor eqn b}, \eqref{eq:Killing vector} come from the differential conditions on the V-structure and certain compatibility conditions that we will lay out below. In the non-generic case ($\sin\theta =0$), the picture is slightly more subtle and the vanishing of the moment maps above implies some extra conditions (see footnote \ref{class1 footnote}). 
The V-structure in M theory compactifications generalises the notion of symplectic structure on the Calabi--Yau. It is defined by a single generalised vector $K\in \Gamma(E)$ that satisfies
\begin{equation}\label{eq:V condition 1}
    c(K,K,K) > 0 
\end{equation}
where $c:S^{3}E\rightarrow \det T^{*}$ is the cubic invariant of $\E{6}$ and we have fixed an orientation on $M$ to define the inequality. This describes an $F_{4(4)}$ structure which is integrable if
\begin{equation}\label{eq:V condition 2}
    L_{K}K = 0
\end{equation}

Finally, the $\USp{6}$ structure is defined by an H-structure and a V-structure obeying an additional compatibility and integrability conditions. These are
\begin{equation}\label{eq:ECY condition 1}
    J_{\alpha}\cdot K = 0 \qquad c(K,K,K) = \kappa^2 \qquad
    L_{K}J_{\alpha} = 0
\end{equation}
The first two compatibility conditions ensure that the stabiliser group of the two structures is $F_{4(4)}\cap \SUs{6} = \USp{6}$, and the extra differential condition is required to ensure the intrinsic torsion of the $\USp{6}$ structure completely vanishes.

As mentioned, these structures describe the geometry of arbitrary $\mathcal{N}=1$ backgrounds with $\bbR^{4,1}$ external space. Hence, we should be able to embed the results of the previous section into this language. This was done in \cite{Ashmore:2016qvs} for AdS backgrounds, and can be extended to Minkowski with the following identifications.
\begin{align}
    J_{3} &= \tfrac{1}{2}\kappa \left( -Y_{R} +(*V - *V^{\#}) + \sin\theta(\vol + \vol^{\#}) \right) \label{eq:general H 1} \\
    J_{+} &= \tfrac{1}{2}\kappa\left(\tilde{\Omega}_{R} - (\ii*\LX - \ii *\LX^{\#}) - \ii f(\vol + \vol^{\#})\right) \label{eq:general H 2} \\
    K &= \xi - \ee^{\warp}Y' +\ee^{\warp}Z \label{eq:general V}
\end{align}
where $J_{+}=J_{1}+\ii J_{2}$, $\kappa^{2} = \ee^{3\warp}\vol$ and the subscript $R$ denotes raising one index with the metric to creating a $\GL{6}$ adjoint element. We can see from these expressions that the triplet $J_{\alpha}$ are defined from a triplet of scalars, a triplet of 2-forms and a triplet of 3-forms. These are respectively given by $\{\sin\theta, \re f, \im f\}$, $\{Y, \re \tilde{\Omega}, \im \tilde{\Omega}\}$, and $\{V, \re \Lambda, \im \Lambda\}$.

In \eqref{eq:general H 1}-\eqref{eq:general V} we have not explicitly included the flux field gauge potentials of the geometry, which can be viewed as `twisting' the generalised tensors by an $\ER6$ element $\ee^{A+\tilde{A}}$, or, as we will mostly use in this paper, by modifying the Dorfman derivative by flux-dependent terms. (Note that $A$ is the three-form potential for the four-form flux $F=\dd A$ on $M$, while $\tilde{A}$ is the six-form potential giving a dual description of a form-form flux on the non-compact $\bbR^{4,1}$. Lorentz invariance implies the later has a trivial field-strength.) The authors of \cite{Ashmore:2016qvs} also showed that \eqref{eq:H condition 1}--\eqref{eq:ECY condition 1} precisely reproduce the algebraic conditions required for an $\SU{2}$ structure, along with the differential conditions for supersymmetry \eqref{eq:Killing spinor eqn}, \eqref{eq:Killing spinor eqn b}. In several places in this paper, we will use the very concrete special case of a  Calabi--Yau manifold case as an example, as defined in section \ref{sec:CY example}. 


\section{Exceptional Complex Structures}\label{sec: ECS}

A central focus of this paper is to give a general analysis of the moduli space of integrable $\SUs6$ structures. In terms of the reduction on a supersymmetric space, this space encodes the massless hypermultiplet degrees of freedom of an on-shell background and as such should be finite-dimensional. It turns out that rather than working directly with the $\SUs6$ structure on can rephrase the problem in terms of a slightly weaker notion of an exceptional complex structure (ECS). These objects, their integrability conditions, classification and relation to $\SUs6$ structures is the subject of this section. 

\subsection{\texorpdfstring{$\SUs{6}$}{SU*(6)} and \texorpdfstring{$\bbRpl\times\Us{6}$}{RxU*(6)} Structures}\label{sec:SUs(6) structures}

Much as one can study hyperk\"ahler geometries by focusing on one particular complex structure, we can study the geometry of $\SUs{6}$ structures by restricting to a single $J_{\alpha}$, which we will denote as $J_{3}$. The $\SU2$ action on the triplet $J_\alpha$ means that, as in the hyperk\"ahler case, there is an $S^2\simeq \bbC \mathbb{P}^{1} \simeq \SU{2}\quotient \U{1}$ of such choices. In fixing one, we find a structure analogous to the ECS that was used to describe four-dimensional $\mathcal{N}=1$ M-theory and type II geometries in \cite{Ashmore:2019qii}, and the corresponding heterotic backgrounds in \cite{Ashmore:2019rkx} backgrounds. As in those cases, we show that integrability is naturally defined in terms of of an involutive subbundle of the generalised tangent bundle. The moduli space of ECSs is generically infinite-dimensional. However, if the ECS arises from an integrable $\SUs6$ structure, as we show in section \ref{sec:E6 generic moduli}, there is a natural way to interpret the moduli space of the $\SUs6$ structure in terms of the ECS. 


First note that to define an $\SUs{6}$ structure, it is sufficient to just define $J_{+}$. Indeed we can then obtain the full $J_{\alpha}$ via
\begin{equation*}
    J_{-} = \bar{J}_{+} \qquad J_{3} = (-8\Tr(J_{+}J_{-}))^{-1/2}\,\ii\,[J_{+},J_{-}]
\end{equation*}
Recall that the space of H-structures $\ZSUs$ is hyperk\"ahler. Parameterising point on $\ZSUs$ by a choice of $J_{-}$, picks out a particular complex structure on $\ZSUs$, such that $\Jpl := \kappa J_{-}\in \Gamma(\det T^{*} \otimes \ad \tilde{F})$ is a holomorphic coordinate\footnote{This was first noticed in collaboration with Edward Tasker.}. From these tensors we can define two reductions of the structure group
\begin{equation}
    \begin{aligned}
    \text{$\SUs{6}$ structure :} && \Jpl &:= \kappa J_{-} \in \Gamma(\det T^{*} \otimes \ad \tilde{F}) \\
    \text{$\bbRpl\times \Us{6}$ structure :} && \JUs &:= \kappa^{-1}J_{3} \in \Gamma(\ad \tilde{F})
    \end{aligned}
\end{equation}
so that $\JUs$ is the unweighted $J_{3}$. To see that $\JUs$ defines an $\bbRpl\times \Us{6}$ structure we note that, since it is unweighted, it is invariant under the $\bbRpl\subset \ER{6}$ action. The additional $\U{1}$ symmetry comes from the action generated by $\JUs$ itself. Alternatively, we can define the $\bbR^{+}\times \Us{6}$ structure more directly from the supersymmetry.
\begin{defn}
Let $\mathfrak{t}\in\su2\subset\e{6(6)}$ be the Lie algebra element that generates $\U1\subset\ER6$ where $\su2$ is a highest root subalgebra, and we normalise $\Tr\mathfrak{t}^2=-1$. All such elements lie in the same adjoint orbit $\mathcal{O}$. An $\bbR^{+}\times \Us{6}$ structure is a smooth section $\JUs\in \Gamma(\ad\tilde{F})$ such that $\JUs|_p$ lies in $\mathcal{O}$ for every point $p\in M$.  
\end{defn}
Given such a $\JUs$, one can use it to decompose the generalised tangent bundle into eigenbundles. We find that
\begin{align}
    \begin{split}
        E_{\bbC} &= L_{1}\oplus L_{-1} \oplus L_{0} \\
        \rep{27} &\rightarrow \rep{6}_{1} \oplus \rep{6}_{-1} \oplus \repb{15}_{0}
    \end{split} \label{eq:E6 GTB decomposition}
\end{align}
In the second line we have expressed this decomposition in terms of $\Us{6}$ representations. Here the subscripts denote the charge under the $\U{1}\subset \Us{6}$ generated by the $\JUs$. Much as for conventional almost complex structures and almost generalised complex structures, we can also define the $\bbR^{+}\times \Us{6}$ structures purely in terms of $L_{1}$:
\begin{defn}\label{def:E6 ECS}
An $\bbRpl\times\Us{6}$ structure is defined by a subbundle $L_{1}\subset E_{\bbC}$ such that
\begin{itemize}
    \item[i)] $\dim_{\bbC} L_{1} = 6$
    \item[ii)] $L_{1}\proj{N} L_{1} = 0$
    \item[iii)] $L_{1}\cap \bar{L}_{1} = \{0\}$, \quad
    $L_{1}\cap L_{0} = \{0\}$
    \item[iv)] The map $\zeta:L_{1}\times (L_{-1})^{*} : \rightarrow \bbR$ defined by
    \begin{equation}
        \zeta(V,Z) = \Tr\left((V\times_{\ad}Z)(\bar{V}\times_{\ad}\bar{Z})\right)
    \end{equation}
    is negative $\forall\, V\in L_{1}$, $Z\in (L_{-1})^{*}$
\end{itemize}
We call such structures \emph{almost ECS}. Any bundle obeying the first two conditions is called an \emph{almost exceptional Dirac structure} in analogy with \cite{Gualtieri04}.
\end{defn}
\noindent Here, $N \subset S^{2}E$ is a particular bundle transforming in the $\repb{27_{2}}$ of $\ER{6}$, and $\times_{N}$ represents the projection onto that bundle\footnote{The decomposition of $N$ into natural geometric bundles, as well as the projection map, are given in appendix \ref{app:conventions}.}. Note that we could equally well define the structure in terms of $L_{-1}$. While conditions (iii) and (iv) appear to depend on the full decomposition \eqref{eq:E6 GTB decomposition}, one can define $L_{0}$ from $L_{1}$ via the following. Let $A = \{Z\in E^{*}\,|\, \left<V,Z\right> = 0 \; \forall \,V\in L_{1}\oplus L_{-1}\}\subset E^{*}$, where $\left<\cdot,\cdot\right>$ is the natural pairing between $E$ and $E^{*}$. That is, $A$ is the null space of $L_{1}\oplus L_{-1}$. Then we define
\begin{equation}
    L_{0} = (L_{1}\times_{\ad}A)\cdot L_{-1}
\end{equation}
Once we have found $L_{0}$ such that (iii) holds, we have a well-defined splitting of the dual space $E^{*}$ into $(L_{\pm1})^{*}$ and $(L_{0})^{*}$. Hence (iv) is well-defined.

We can also decompose the weighted adjoint bundle into eigenbundles of $\JUs$. We find
\begin{align}
    \begin{split}
        \rep{78}\oplus \rep{1} \rightarrow \rep{1}_{+2}\oplus \rep{1}_{-2} \oplus \rep{20}_{+1}\oplus \rep{20}_{-1} \oplus \ad P_{\bbRpl\times \Us{6}}
    \end{split} \label{eq:adjoint decomp under SUs6}
\end{align}
The singlets imply that an $\bbRpl\times\Us{6}$ structure defines a line bundle 
\begin{equation}
\label{eq:UJbundle}
    \mathcal{U}_{\JUs}\subset (\det T^{*})\otimes \ad \tilde{F}_\bbC 
       \simeq \bbC \oplus \ext^3 T^*_\bbC 
          \oplus (T_\bbC^*\otimes\ext^5 T^*_\bbC)
          \oplus (\ext^3 T^*_\bbC \otimes \ext^6 T^*_\bbC)
          \oplus (\ext^6 T^*_\bbC)^{\otimes 2}
\end{equation}
One can show that it is fixed by requiring
\begin{equation}
    V \bullet \Jpl = 0 \quad \forall \, V \in \Gamma(L_{1}), 
\end{equation}
where $\Jpl$ is a local section of $\mathcal{U}_{\JUs}$. The product $V\bullet \Jpl$ is defined by the projection $E\otimes (\det T^{*})\otimes \ad \tilde{F} \rightarrow C$ where $C$ is the generalised tensor bundle transforming in the $\rep{351}_{\rep{4}}$ of $\ER{6}$\footnote{Note that this is $R_{4}$ in the tensor hierarchy of $\E{6}$ \cite{Berman:2012vc}}. One can equally define a local section $\Jpl$ by the condition $\JUs\cdot \Jpl = [\JUs,\Jpl] = 2\ii\Jpl$. Furthermore, one can show that for any non-zero $\Jpl$, one has that $\Tr(\Jpl\bar{\Jpl})$ is negative, where we recall that $\Tr(\Jpl\bar{\Jpl})$ is a section of $(\det T^{*})^{2}$ which has a canonical orientation and hence a well defined notion of a negative section. One can use the converse to reformulate of the last condition in definition~\ref{def:E6 ECS} as
\begin{itemize}
    \item[iv')] Given any non-zero local section $\Jpl$ of the line bundle $\mathcal{U}_\JUs\subset (\det T^{*})\otimes \ad \tilde{F}$, defined by $L_1$, one has $\Tr(\Jpl\bar{\Jpl}) < 0$
\end{itemize}
As we will see later one can view this requirement as an generalisation of the notion of stability for three-forms defining an $\SL{3,\bbC}$ structure introduced by Hitchin \cite{Hitchin00,Hitchin01}. 

We can then give the definition
\begin{defn}
Given an almost ECS $\JUs$ with trivial line bundle $\mathcal{U}_{\JUs}$, an $\SUs{6}$ structure is a global non-vanishing section of $\mathcal{U}_{\JUs}$.
\end{defn}
\noindent 
All $\SUs{6}$ structures will arise in this way and any two will be related by some $\ER{6}$ transformation. Note that any two $\SUs{6}$ structures $\Jpl$ and $\Jpl'$ which define the same $\bbRpl\times \Us{6}$ structure will be related by some non-vanishing function $h$
\begin{equation}\label{eq:equiv SU*(6) for U*(6)}
    \Jpl' = h\Jpl
\end{equation}
Much like for complex structures and generalised complex structures, suitable values for $\Jpl$ do not fill out the whole of $\rep{78}\oplus\rep{1}$. Instead, they exist in some particular $\ER{6}$ orbit.

It is important to reiterate that since $\SUs{6}\not\subset \USp{8}$, a choice of $\Jpl$ does not define a generalised metric and hence does not fully define a supergravity background. To do so, one needs to also specify a compatible V-structure $K$. Nonetheless, the space of $\SUs6$ structures encodes the moduli space of hypermultiplets in the effective theory that arises on compactification on $M$, and so, even by itself encodes important information. 

\subsection{Involutivity, Moment Maps, and Integrability}

We will now look at the conditions imposed on $\Jpl$ by integrability. Recall from section \ref{sec: ECY review} that the integrability of the $\SUs{6}$ structure is a subset of the supersymmetry conditions. This was given as the vanishing of a triplet of moment maps for the action of generalised diffeomorphisms on the space of H-structures $\ZSUs$. We will see that we will be able to recast this as an involutivity condition on $L_{1}$, which gives the integrability of the ECS (roughly corresponding to the vanishing of the moment maps $\mu_+=\mu_1+\ii\mu_2=0$), along with the vanishing of just a single moment map ($\mu_3$). In using this description for integrable H-structures, one drops the explicit hyperk\"ahler structure on $\ZSUs$ that is guaranteed by supersymmetry, and instead views $\ZSUs$ as a K\"ahler space. For each $\SUs6$ structure there is an $S^2$ of integrable ECS and corresponding moment maps which will all give equivalent integrability conditions, a point we will return to in the following section. 

The intrinsic torsion for the generalised structures lies in a subbundle of the torsion bundle $W$ which transforms under $\ER6$ as $\repb{27}_{-1}\oplus\rep{351}_{-1}$. Decomposing into $\Us{6}$ representations we find that they transform as \cite{Ashmore:2015joa}
\begin{align}
    W_{\tint}^{\SUs{6}} : & \quad \rep{15}_{2}\oplus \rep{15}_{-2} \oplus \rep{15}_{0} \oplus \repb{6}_{1}\oplus \repb{6}_{-1} \label{eq:SUs6 intrinsic torsion} \\
    W_{\tint}^{\bbRpl\times \Us{6}} : & \quad \rep{15}_{2}\oplus \rep{15}_{-2} \label{eq:Us6 intrinsic torsion}
\end{align}
where again the subscripts denote the $\U{1}$ charge. We define the integrable structures as those with vanishing intrinsic torsion. Just as the integrability of complex structures (and generalised complex structures) is given by involutivity of an eigenbundle, we find
\begin{defn}\label{def:integrable E6 ECS}
An integrable $\bbRpl\times \Us{6}$ structure, or \emph{ECS}, is an almost ECS that is involutive under the Dorfman derivative. That is
\begin{equation}\label{eq:Us(6) integrability}
    L_{V}W \in \Gamma(L_{1}) \qquad \forall \, V,W\in \Gamma(L_{1})
\end{equation}
In analogy with generalised complex geometry, we refer to an involutive structure that does not satisfy $L_{1}\cap L_{-1} = \{0\}$ as an \emph{exceptional Dirac structure}.
\end{defn}
\noindent In general, $L_{V}W \neq \llbracket V,W \rrbracket$. However, the definition of the Dorfman derivative is such that
\begin{equation}
    L_{V} W + L_{W}V \sim \dd\times_{E} (V\times_{N} W)
\end{equation}
which clearly vanishes by property (ii) of definition \ref{def:E6 ECS}. Hence, it is equivalent to determine integrability with respect to the Courant bracket.

To see that this definition is correct, one can introduce a compatible connection $D$ that is not necessarily torsion free. Generalised torsion is defined such that
\begin{align}
    L_{V}W &= L_{V}^{D}W - T(V)\cdot W \label{eq:torsion}
\end{align}
where $L^{D}_{V}$ is the Dorfman derivative where each instance of $\del$ is replaced with $D$. Because of the compatibility of $D$, the first term must be a section of $L_{1}$. Moreover, since the left hand side does not depend on the choice compatible connection, the projection of $L_{V}W$ onto $L_{0}\oplus L_{-1}$ can only depend on the intrinsic torsion $T^{\tint}$. Given \eqref{eq:Us6 intrinsic torsion} we can see that
\begin{equation}
    T^{\tint}(V)\cdot W \in \Gamma(L_{0})
\end{equation}
Hence, for $L_{V}W$ to be a section of $L_{1}$ for all $V,W\in \Gamma(L_{1})$, we need that $T^{\tint}|_{\rep{15}_{-2}}=0$. The complex conjugate of this condition then sets the whole of $T^{\tint}=0$. We then see that definition \ref{def:integrable E6 ECS} is correct.

From \eqref{eq:SUs6 intrinsic torsion} we can see that the integrability conditions for $\bbRpl\times \Us{6}$ structures is a subset of the conditions for an integrable $\SUs{6}$ structure. This makes sense as the $\bbRpl\times \Us{6}$ structure is a strictly weaker structure. Given $\Jpl$ defining an $\bbRpl\times \Us{6}$ structure that is integrable, we want to know what additional conditions are required so that the $\SUs{6}$ structure is also integrable. Let us consider just the map $\mu_{3}$ from \eqref{eq:H condition 2}. In \cite{Ashmore:2015joa}, it is shown that this is equal to
\begin{equation}
    \mu_{3}(V) \propto \int_{M} \Tr(\kappa\, J_{3}T^{\tint}(V)) + \frac{1}{2}\int_{M} T^{\tint}(J_{3}\cdot V)\cdot \kappa^{2}
\end{equation}
From this we can see that $\mu_{3}\equiv 0$ if and only if the singlet part of $T^{\tint}(V)$ is 0 for all $V\in \Gamma(E)$. From the decomposition of $E$ given in \eqref{eq:E6 GTB decomposition}, we see that this is equivalent to the $\rep{15}_{0}\oplus \repb{6}_{1}\oplus \repb{6}_{-1}$ part of $T^{\tint}$ vanishing. This is precisely the remaining part of the intrinsic torsion of the $\SUs{6}$ structure given in \eqref{eq:SUs6 intrinsic torsion}. This motivates the following alternative definition of an integrable H-structure
\begin{defn}\label{def:integrable SUs6}
An integrable $\SUs{6}$ structure $\Jpl:=\kappa J_{+}$, is an $\SUs{6}$ structure with an integrable $\Us{6}$ structure, along with the vanishing of the moment map $\mu_{3}$.
\end{defn}

As we mentioned, $\mu_{3}$ is a moment map for the action of generalised diffeomorphisms on space of $\SUs6$ structures $\ZSUs$. Following \cite{Ashmore:2015joa}, we recall that $\ZSUs$ has a hyperk\"ahler cone structure with hyperk\"ahler potential given by\footnote{Note our conventions differ from that of \cite{Ashmore:2015joa} by an irrelevant overall factor of $\frac{1}{2}$. }
\begin{equation}\label{eq:U(6) Kahler Potential}
    \mathcal{K} = \int_{M} \kappa^{2} =  \int_{M}\left(-\Tr(\Jpl\bar{\Jpl})\right)^{1/2}
\end{equation}
where in the second equality we have expressed the potential with respect to the particular holomorphic structure picked out by $\JUs$. We can view the right hand side as a K\"ahler potential in the holomorphic coordinate $\Jpl$. To see how the moment map arises, splitting the functional derivative into holomorphic and antiholomorphic parts $\delta = \del'+\delb'$, and using $\varpi = \ii\del'\delb'\mathcal{K}$ we can first write the K\"ahler form on the space as
\begin{align}
    \begin{split}
        \imath_{\beta}\imath_{\alpha}\varpi = -\frac{\ii}{2}\int_{M} & \frac{1}{\left(-\Tr(\Jpl\bar{\Jpl})\right)^{1/2}} \left[ \Tr(\imath_{\alpha}\delta\Jpl\imath_{\beta}\delta\bar{\Jpl}) - \Tr(\imath_{\beta}\delta\Jpl\imath_{\alpha}\delta\bar{\Jpl}) \right. \\
        & \left. +\frac{1}{2(-\Tr(\Jpl\bar{\Jpl}))}\bigl( \Tr(\imath_{\alpha}\delta\Jpl \bar{\Jpl}) \Tr(\Jpl\imath_{\beta}\delta\bar{\Jpl}) - \Tr(\imath_{\beta}\delta\Jpl \bar{\Jpl}) \Tr(\Jpl\imath_{\alpha}\delta\bar{\Jpl}) \bigr) \right]
    \end{split}
\end{align}
Using the non-holomorphic coordinate $J_{+}$, this takes the much simpler form
\begin{align}
    \begin{split}
        \imath_{\beta}\imath_{\alpha}\varpi = -\frac{\ii}{2}\int_{M}\left[ \Tr(\imath_{\alpha}\delta J_{+} \imath_{\beta}\delta J_{-}) - \Tr(\imath_{\beta}\delta J_{+} \imath_{\alpha}\delta J_{-}) \right]
    \end{split}
\end{align}
Here $\alpha,\beta \in \Gamma(T\ZSUs)$. Taking $\beta=\rho_{V}$ to be the vector generated by an infinitesimal generalised diffeomorphism, i.e. $\imath_{\rho_{V}}\delta J_{+} = L_{V}J_{+}$, then we see that
\begin{align}
    \imath_{\rho_{V}}\imath_{\alpha}\varpi &= -\frac{\ii}{2}\int_{M}\left[ \Tr(\imath_{\alpha}\delta J_{+} \imath_{\rho_{V}}\delta J_{-}) - \Tr(\imath_{\rho_{V}}\delta J_{+} \imath_{\alpha}\delta J_{-}) \right] \\
    &= -\frac{\ii}{2}\int_{M}\left[ \Tr(\imath_{\alpha}\delta J_{+} L_{V}J_{-}) - \Tr(L_{V} J_{+} \imath_{\alpha}\delta J_{-}) \right] \\
    &= \frac{\ii}{2} \int_{M} \left[ \Tr(J_{-}L_{V}\imath_{\alpha}\delta J_{+}) + \Tr(\imath_{\alpha}\delta J_{-}L_{V}J_{+}) \right] \\
    &= \imath_{\alpha}\delta \left( \frac{\ii}{2}\int_{M}\Tr(J_{-}L_{V}J_{+})\right) \\
    &= \imath_{\alpha}\delta\mu_{3}(V)
\end{align}
Therefore the moment map for generalised diffeomorphisms is indeed the function $\mu_{3}$ defined above.

\subsection{Example: Calabi--Yau Manifolds}\label{sec:CY example}

Calabi--Yau three-folds provide a supersymmetric background when all the fluxes vanish. We should thus be able to embed some of the data of the Calabi--Yau into the formalism of ECS. From \cite{Ashmore:2015joa} we see that the $\SUs6$ structure for the Calabi--Yau is given entirely in terms the $\SL{3,\bbC}$ structure parameterised by the complex three-form $\Omega$. Explicitly one has\footnote{Note that these expressions can be derived by taking the $\sin\theta \rightarrow -1$ limit in \eqref{eq:general H 1} and\eqref{eq:general H 2}. Taking the $\sin\theta\rightarrow 1$ limit gives the opposite orientation $\vol = -\frac{\ii}{8}\Omega\wedge \bar{\Omega}$.}
\begin{align}
    J_3 &= \tfrac{1}{2}\kappa(I - \vol - \vol^{\#})  \\
    J_+ = &= \tfrac{1}{2}\kappa(-\Omega + \Omega^{\#}) 
\end{align}
where $I$ is the $\GL{3,\bbC}$ structure associated to $\Omega$, that is the conventional complex structure of the Calabi--Yau, and $\kappa^2=\vol = \frac{\ii}{8}\Omega\wedge \bar{\Omega}$ is the volume form determined by the $\SL{3,\bbC}$ structure. 

As we have noted there is an $S^2$ family of ECSs defined by each $\SUs6$ structure. Conventionally, we take the one associated to $J_3$ so that in this case 
\begin{align}
    \text{$\SUs{6}$ structure :} & \quad \;\; \begin{array}{rcl}
        \Jpl &=& \frac{1}{2}\kappa^{2}(-\Omega + \Omega^{\#})
    \end{array} \label{eq:X-CY} \\[5pt]
    \text{$\bbRpl\times\Us{6}$ structure :} & \left\{ \begin{array}{rcl}
        \JUs & = & \frac{1}{2}(I - \vol - \vol^{\#})  \\[2pt]
        L_{1} & = & \ee^{\ii \vol}\cdot (T^{1,0}\oplus \ext^{0,2}T^{*}) 
    \end{array} \right. \label{eq:J-CY}
\end{align}
It is a simple check using the formulae for the adjoint action in appendix \ref{app:conventions} that $L_{1}$ is the $+i$ eigenbundle of $\JUs$, and $\Jpl$ lives in the $-2\ii$ eigenbundle of $\JUs$. Note that $\JUs$ is determined by knowing only the complex structure $I$ and the volume form, that together define a $\U1\times\SL{3,\bbC}\subset\GL{3,\bbC}$ structure. Recall that a complex structure always determines an orientation, so the additional information is just choice of section of the (trivial) determinant bundle. We see that embedding the Calabi--Yau structure into the language of ECS takes the following pattern of inclusions.
\begin{equation}
    \begin{array}{ccccc}
        \USp{6} & \subset & \SUs{6} & \subset & \bbRpl\times \Us{6} \\
        \cup & & \cup & & \cup \\
        \SU{3} &\subset & \SL{3,\bbC} & \subset & \U1\times\SL{3,\bbC} 
    \end{array}
\end{equation}


We next consider what the integrability conditions given by definitions \ref{def:integrable E6 ECS} and \ref{def:integrable SUs6} imply for the underlying $\GL{3,\bbC}$ and $\SL{3,\bbC}$ structures. Here we will ignore the flux for simplicity and assume that we know it vanishes. More generally, integrability of the $\SUs{6}$ structure also puts constraints on the flux as we will see later. First, let's consider the integrability of the $\bbRpl\times \Us{6}$ structure. Taking $V=\ee^{\alpha+\ii\vol}(v+\omega),V' = \ee^{\alpha+\ii\vol}(v' + \omega')\in \Gamma(L_{1})$, we have
\begin{align}
    \begin{split}
        L_{V}V' &= L_{\ee^{\ii\vol}(v+\omega)}\ee^{\ii\vol}(v' + \omega') \\
        &= \ee^{\ii\vol}L_{v+\omega}(v' + \omega') \\
        &= \ee^{\ii\vol}\left[\mathcal{L}_{v}v' + (\mathcal{L}_{v}\omega' - v'\lrcorner \dd\omega) - \omega'\wedge \dd\omega \right]
    \end{split}
\end{align}
For this to be a section of $L_{1}$ also, we require that $\mathcal{L}_{v}v' \in \Gamma(T^{1,0})$ for any $v,v'\in \Gamma(T^{1,0})$. This is precisely the statement that the $\GL{3,\bbC}$ structure is integrable. If this is the case then the exterior derivative decomposes into the Dolbeault operators $\dd = \del + \delb$. With this, the 5-form terms vanish identically. Further, the term in the parenthesis becomes $v\lrcorner\del\omega' - v'\lrcorner\del\omega\in \Gamma(\ext^{0,2}T^{*})$. Hence
\begin{equation}
    \text{$\bbRpl\times \Us{6}$ structure integrable} \quad \Leftrightarrow \quad \text{$\GL{3,\bbC}$ structure integrable}
\end{equation}

Next we consider the vanishing of the moment map $\mu_{3}$. Using the algebra in appendix \ref{app:conventions}, one can show that
\begin{align}\label{eq:CY moment map}
\begin{split}
    \mu_{3}(V) &= \ii\int_{M}-L_{V}\kappa^{2} + \frac{1}{16}\kappa^{2}\left[ -\bar{\Omega}^{\#}\lrcorner\mathcal{L}_{v}\Omega + \Omega^{\#}\lrcorner \mathcal{L}_{v}\bar{\Omega} - (\Omega^{\#}\wedge\bar{\Omega}^{\#})\lrcorner \dd\sigma \right]
\end{split}
\end{align}
This first term vanishes since $L_{V}\kappa^{2}$ is a total derivative\footnote{Here we use the fact that $\int_{M}\dd(...) = 0$ which means we are taking $M$ to be compact without boundary, or that the fields die off sufficiently quickly at infinity.}. The final term gives
\begin{equation}
    -\frac{1}{2}\int_{M} \kappa^{2} \vol^{\#}\lrcorner\dd\sigma \propto \int_{M} \vol^{\#}\lrcorner\kappa^{2} \dd\sigma \propto \int_{M} \dd(\vol^{\#}\lrcorner\kappa^{2})\wedge \sigma \overset{!}{=} 0 \quad \forall\,\sigma\in \Gamma(\ext^{5}T^{*})
\end{equation}
This is true if and only if $\kappa^{2} = c\vol$ for some constant $c\in \bbR$, which we can set to 1 without loss of generality. This just says that the volume form picked out by the $\SUs{6}$ structure is the same as that picked out by the $\SL{3,\bbC}$ structure. The rest of \eqref{eq:CY moment map} is proportional to
\begin{align}
    \begin{split}
        \int_{M}\left(\mathcal{L}_{v}\Omega \wedge \bar{\Omega} - \Omega \wedge \mathcal{L}_{v}\bar{\Omega}\right) &= \int_{M} \left(v\lrcorner\dd\Omega \wedge \bar{\Omega} + \dd(v\lrcorner\Omega)\wedge \bar{\Omega} - \Omega\wedge v\lrcorner\dd\bar{\Omega} - \Omega\wedge \dd(v\lrcorner\bar{\Omega})\right) \\
        &= 2\int_{M}v\lrcorner(\bar{a}-a)\,\Omega\wedge\bar{\Omega}
    \end{split}
\end{align}
In moving to the second line we have used integration by parts to put the derivatives on the $\Omega,\bar{\Omega}$, and have used the integrability of $I$ to write $\dd\Omega = \bar{a}\wedge\Omega$ for some $\bar{a}\in \Gamma(T^{*0,1})$. This vanishes for all $v\in \Gamma(T)$ if and only if $\bar{a}=0$, or equivalently, if $\dd\Omega = 0$. Therefore, we have
\begin{equation}
    \text{$\SUs{6}$ structure integrable} \quad \Leftrightarrow \quad \text{$\SL{3,\bbC}$ structure integrable}
    \label{eq:X-CY-int}
\end{equation}

As we have stressed, for a given H-structure there is an $S^2$ family of ECS one can define. It is thus natural to ask how the analysis would have changed if we had chosen a different ECS. Suppose for example, we act by a global $\SU2$ rotation on the $J_\alpha$, so that we have a new $\SUs6$ structure with $J'_1=J_3$, $J'_2=J_1$ and $J'_3=J_2$. We then have
\begin{align}
    \text{$\SUs{6}$ structure :} & \quad \;\; \begin{array}{rcl}
        \Jpl' &=& \frac{1}{2}\kappa^{2}(I - \vol - \vol^{\#}-\ii\rho + \ii \rho^{\#})
    \end{array} \\[5pt]
    \text{$\bbRpl\times\Us{6}$ structure :} & \left\{ \begin{array}{rcl}
        \JUs' & = & \frac{1}{2}( - \hat{\rho} + \hat{\rho}^{\#}) \\[2pt]
        L'_{1} & = & \ee^{\ii\hat{\rho}}\cdot T_\bbC
    \end{array} \right.
\end{align}
where $\Omega=\rho+\ii\hat{\rho}$. Recall that $\hat{\rho}$ by itself defines an $\SL{3,\bbC}$ structure~\cite{Hitchin00}. Hence in this case, both the $\SUs6$ structure and the $\bbRpl\times \Us{6}$ structure are defined by a conventional $\SL{3,\bbC}$ structure. As we will see below, this is actually the generic case. Turning to the integrability conditions, it is easy to see that, in this case, involutivity gives
\begin{equation}
    \text{$\bbRpl\times \Us{6}$ structure integrable} \quad \Leftrightarrow \quad \dd\hat{\rho} = 0 
\end{equation}
which is a considerably weaker than integrability of $I$ that we got previously. Imposing the moment map condition implies that in addition $\dd\rho=0$ and hence the $\SL{3,\bbC}$ structure is integrable. As expected, since the corresponding $\SUs6$ structures are the same (up to a constant $\SU2$ transformation), we thus get the same integrability condition as in \eqref{eq:X-CY-int}. 

\section{Classification of ECS and \texorpdfstring{$\SUs6$}{SU*(6)} structures} \label{sec:classification}

In this section we will examine what definition \ref{def:E6 ECS} implies for the structure of $L_{1}$ and of the corresponding $\SUs6$ structures. We will find that the isotropy and reality conditions place strong restrictions on the possible local form of the ECS in terms of natural bundles, such that characterised by two numbers that we refer to as type and class.

\subsection{A Closer Analysis of ECS}\label{sec:closer analysis of ECS}

First, we define a notion of \emph{type} similar to that of generalised complex structures defined in \cite{Gualtieri04}, and also defined in exceptional geometry in \cite{Ashmore:2019qii}. We then go further and show that ECS are also characterised by a second important property that we call \emph{class}. It is important to note that, in general, this classification is only local, in that the type of the structure can change over the manifold, just as for generalised complex structures. However, while the type and class of an ECS can change smoothly, as we will see in the following sections, it is not possible to have a smooth class-changing integrable $\SUs{6}$-structure. We make these definition only for M-theory though a similar analysis could be made for type IIB. To ensure that all objects defined here are globally well-defined sections of natural geometric bundles, we will work with the flux twisted\footnote{This implies that the generalised tangent bundle is globally isomorphic to $E=T\oplus \wedge^{2}T^{*} \oplus \wedge^{5}T^{*}$.} Dorfman derivative $L^{F}_{V}$, where $F\in H^{4}_{\dd}(M,\bbR)$ is defined only up to its cohomology class.

\begin{defn}
The \emph{type} of an almost ECS $L_{1}\subset E_{\bbC}$ at a point $p\in M$ is the (complex) codimension of its image under the anchor map. That is, if $\anchor:E\rightarrow T$ is the anchor map, naturally extended to the complexified bundles, then
\begin{equation}
    \type L_{1}|_p = \codim_{\bbC} a(L_{1}|_p) = 6 - \dim_{\bbC}a(L_{1}|_p)
\end{equation}
\end{defn}

\noindent We will find that the only allowed types of an ECS are 0 and 3.

We would like to classify the possible forms of $L_{1}$ based on the criteria set out in definition \ref{def:E6 ECS}. We will only state the results here and leave the proofs for appendix \ref{app:ECS in E6}. Let us first focus on condition (ii), the isotropy condition that states
\begin{equation}
    L_{1}\times_{N} L_{1} = 0
\end{equation}
If we write $V_{i} = v_{i}+\omega_{i}+\sigma_{i}\in L_{1}$, then using the formula for the projection onto $N$ around \eqref{eq:N projection} we find that the elements of $L_{1}$ must satisfy
\begin{align}
    v_{1}\lrcorner\omega_{2} + v_{2}\lrcorner\omega_{1} &= 0 \\
    j\omega_{1}\wedge \omega_{2} + j\omega_{2}\wedge\omega_{1} &= 0 \\
    \omega_{1}\wedge\omega_{2} - v_{1}\lrcorner\sigma_{2} - v_{2}\lrcorner\sigma_{3} &= 0
\end{align}
Careful consideration of these equations shows that any exceptional Dirac structure must be of the following form.

\begin{propn}
Any isotropic subbundle $L\subset E_{\bbC}$ has the form
\begin{equation}
    \ee^{\alpha+\beta}\cdot(\Delta \oplus S_{2}\oplus S_{5})
\end{equation}
where $\alpha\in\Omega^{3}(M)_{\bbC}$ and $\beta\in \Omega^{6}(M)_{\bbC}$ are arbitrary but fixed, and where $\Delta\subset T_{\bbC}$, $S_{2}\subset\ext^{2}T^{*}_{\bbC}$, $S_{5}\subset \ext^{5}T^{*}_{\bbC}$ satisfy the following conditions. For all $v\in \Delta$, $\omega,\omega'\in S_{2}$ and $\sigma\in S_{5}$ we have
\begin{equation}
    \arraycolsep=1.4pt
    \begin{array}{rclcrcl}
    v\lrcorner \omega &= &0 & \qquad & v\lrcorner\sigma &=&0 \\
    \omega\wedge\omega' &=& 0 & & j\omega\wedge\sigma &=& 0
    \end{array}
\end{equation}
\end{propn}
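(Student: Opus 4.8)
The statement is a pointwise (fibrewise) classification, so I would prove it as a linear-algebra result at a fixed point and note that all the data depend smoothly on $L$ at the end. The guiding analogy is the classification of linear Dirac structures in generalised complex geometry \cite{Gualtieri04}: there every maximal isotropic is a $B$-shift of a block $E\oplus\operatorname{Ann}(E)$, and here the role of the $B$-field is played by the nilpotent group generated by the three- and six-form shifts. Recall that $\ee^{\alpha}$ with $\alpha\in\ext^3 T^*_\bbC$ sends $v+\omega+\sigma$ to $v+(\omega+v\lrcorner\alpha)+(\sigma+\alpha\wedge\omega+\tfrac12\alpha\wedge(v\lrcorner\alpha))$, while $\ee^{\beta}$ with $\beta\in\ext^6 T^*_\bbC$ sends it to $v+\omega+(\sigma+v\lrcorner\beta)$. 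The plan is to show that a single pair $(\alpha,\beta)$ untwists an arbitrary isotropic $L$ into the block form $\Delta\oplus S_2\oplus S_5$, the residual data then automatically satisfying the four stated relations. The single technical engine used throughout is the observation that a linear map into $\ext^k T^*_\bbC$ with vanishing $\Delta$-symmetrised contraction is itself a contraction of a fixed $(k{+}1)$-form.

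\textbf{Filtration and reading off the isotropy relations.} First I would set $\Delta=a(L)\subset T_\bbC$, choose a linear splitting $s\colon\Delta\to L$ of the anchor, $s(v)=v+\phi(v)+\psi(v)$ with $\phi\colon\Delta\to\ext^2 T^*_\bbC$ and $\psi\colon\Delta\to\ext^5 T^*_\bbC$, and put $L_0=\ker(a|_L)=L\cap(\ext^2\oplus\ext^5)T^*_\bbC$, so $L=s(\Delta)\oplus L_0$. Inside $L_0$ set $S_5=L\cap\ext^5 T^*_\bbC$, let $S_2$ be the image of $L_0$ in $\ext^2 T^*_\bbC$, and fix a splitting $L_0=\{\omega+\chi(\omega):\omega\in S_2\}\oplus S_5$. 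Feeding the generators pairwise into the three isotropy relations then produces the linear constraints: pairs inside $L_0$ give $\omega\wedge\omega'=0$ and the $j$-relation on $S_2$; pairs $s(v)$ against $L_0$ give $v\lrcorner\omega=0$ for $v\in\Delta,\omega\in S_2$ together with $\phi(v)\wedge\omega=v\lrcorner\chi(\omega)$; and pairs $s(v),s(v')$ give the symmetry relation $v\lrcorner\phi(v')+v'\lrcorner\phi(v)=0$ plus the companion relation $\phi(v)\wedge\phi(v')=v\lrcorner\psi(v')+v'\lrcorner\psi(v)$.

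\textbf{Building the shifts.} The relation $v\lrcorner\phi(v')+v'\lrcorner\phi(v)=0$ says the trilinear map $(v,v',w)\mapsto\phi(v)(v',w)$, for $v,v'\in\Delta$, is antisymmetric in its first two slots; being automatically antisymmetric in its last two it is totally antisymmetric, hence extends to a three-form $\alpha\in\ext^3 T^*_\bbC$ with $\phi(v)-v\lrcorner\alpha$ annihilated by contraction with $\Delta$, i.e. lying in $S_2$. Untwisting by $\ee^{-\alpha}$ therefore trivialises the two-form part of $s(\Delta)$ (the residue going into $S_2$) and, using $v\lrcorner\omega=0$, simultaneously pushes the five-form tail $\chi(\omega)-\alpha\wedge\omega$ of the two-form generators into $S_5$, at the cost of generating a definite new five-form part of the vectors. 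The companion relation, combined with the fact that $\ee^{\alpha}\Delta$ is automatically isotropic, shows this new part is exactly the $\ee^{\alpha}$-quadratic term up to a map with vanishing $\Delta$-symmetrised contraction; applying the engine once more produces a six-form $\beta$ realising that map, and $\ee^{-\beta}$ removes the remaining five-form tails. \emph{The main obstacle is precisely this coordination step}: one must verify that the single $\alpha$ already fixed by $\phi$ is compatible with the five-form sector, so that all residues land in $S_2$ or $S_5$ rather than obstructing the reduction — this is where the companion (third) isotropy relation does the essential work, and where the non-canonical choices in the general-type case (when $\Delta\subsetneq T_\bbC$ and $S_2,S_5\neq0$) require the most care.

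\textbf{Verification.} After untwisting by $\ee^{\alpha+\beta}$ one has $L=\Delta\oplus S_2\oplus S_5$ as a genuine graded direct sum, and re-running the three isotropy relations on these now block-diagonal generators returns exactly $v\lrcorner\omega=0$, $v\lrcorner\sigma=0$, $\omega\wedge\omega'=0$ and $j\omega\wedge\sigma=0$. Finally, since $\Delta$, $S_2$, $S_5$ and the shifts $\alpha,\beta$ were all constructed from kernels, images and splittings of smooth bundle maps determined by $L$, they vary smoothly, giving the asserted global form of the isotropic subbundle.
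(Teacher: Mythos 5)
Your proposal is correct and follows essentially the same route as the paper's appendix proof (which likewise adapts Gualtieri's argument for isotropics in $\mathrm{O}(d,d)$ geometry): set $\Delta = a(L)$, read off the isotropy constraints on the graded pieces, represent the two-form part of the vector lifts as $v\lrcorner\alpha$ using the vanishing of the $\Delta$-symmetrised contraction, and then solve for the five-form part as the $\ee^{\alpha}$-quadratic term plus $v\lrcorner\beta$. Your ``engine'' lemma is precisely the step the paper compresses into its ``(WLOG)'' assertion, and your flagged coordination step corresponds to the paper's solution of the functional equation for $\theta(v,\omega)$ via the third isotropy relation.
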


We now turn our attention to conditions (i) and (iii) in definition \ref{def:E6 ECS}. One finds that imposing $\dim_{\bbC}L_{1} = 6$ restricts us to type 0, 3 and 6. Then imposing $L_{1}\cap L_{0} = \{0\}$ excludes the type 6 case. We can therefore summarise the general form of an ECS in proposition \ref{prop:form of ECS}. To do so, it is helpful to introduce some notation which will be useful both here and later when we discuss the moduli of these structures.

Let $\Delta\subset T_{\bbC}$ be some subbundle. We define by $\mathcal{F}^{k}_{p}(\Delta) \subset \ext^{k}T^{*}_{\bbC}$ to be the bundle of differential $k$-forms $\phi$ satisfying
\begin{equation}
    \phi(x_{1},...,x_{p},v_{1},...,v_{k-p}) = 0 \qquad \forall \, x_{1},...,x_{p} \in \Gamma(T_\bbC), \, v_{1},...,v_{k-p}\in \Gamma(\Delta)
\end{equation}
Note that this defines a filtration of the fibers of $\ext^{k}T^{*}_{\bbC}$ with
\begin{equation}
    0 = \mathcal{F}^{k}_{k}(\Delta) \subseteq \mathcal{F}^{k}_{k-1}(\Delta)\subseteq ... \subseteq \mathcal{F}^{k}_{0}(\Delta) \subseteq \mathcal{F}^{k}_{-1}(\Delta) := \ext^{k}T^{*}_{\bbC}
\end{equation}
where we have defined $\mathcal{F}^{k}_{-1}(\Delta) = \ext^{k}T^{*}_{\bbC}$ for convenience. We can now summarise the results stated above as follows.

\begin{propn}\label{prop:form of ECS}
An ECS at $p\in M$ can only be of type 0 or type 3, and their general form is given by
\begin{equation}
\label{eq:type03L}
    \begin{array}{rcl}
    \text{type 0:}& \quad & \ee^{\alpha+\beta}\cdot T_{\bbC}|_p \\
    \text{type 3:}& & \ee^{\alpha+\beta}\cdot (\Delta \oplus \mathcal{F}^{2}_{1}(\Delta))|_p
    \end{array}
\end{equation}
where $\Delta|_p\subset T_{\bbC}|_p$ is dimension 3 and $\alpha \in \Omega^{3}(M)_{\bbC}$, $\beta \in \Omega^{6}(M)_{\bbC}$.
\end{propn}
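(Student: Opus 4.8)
The plan is to take as input the preceding proposition, which already encodes the isotropy condition (ii): every exceptional Dirac structure has the form $L_{1}=\ee^{\alpha+\beta}\cdot(\Delta\oplus S_{2}\oplus S_{5})$ with $\Delta\subset T_{\bbC}$, $S_{2}\subset\ext^{2}T^{*}_{\bbC}$, $S_{5}\subset\ext^{5}T^{*}_{\bbC}$ obeying $v\lrcorner\omega=0$, $v\lrcorner\sigma=0$, $\omega\wedge\omega'=0$ and $j\omega\wedge\sigma=0$. The first observation is that the twist $\ee^{\alpha+\beta}$ commutes with the anchor (it only shifts a vector into higher-degree forms), so $a(L_{1})=\Delta$ and hence $\type L_{1}=6-\dim_{\bbC}\Delta$. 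Writing $k=\dim_{\bbC}\Delta$ and $W=T_{\bbC}/\Delta$ of dimension $m=6-k$, the contraction conditions say precisely that $S_{2}\subseteq\ext^{2}W^{*}=\mathcal{F}^{2}_{1}(\Delta)$ and $S_{5}\subseteq\ext^{5}W^{*}$, while $\omega\wedge\omega'=0$ says $S_{2}$ is \emph{wedge-isotropic} inside $\ext^{2}W^{*}$. The whole problem then reduces to bounding $\dim S_{2}$ and $\dim S_{5}$ in terms of $m$ and imposing $k+\dim S_{2}+\dim S_{5}=6$, which is condition (i).

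The arithmetic core is a linear-algebra lemma on wedge-isotropic subspaces. I would first note that $\omega\wedge\omega=0$ forces every $\omega\in S_{2}$ to be decomposable, and that a subspace on which the wedge product vanishes identically corresponds to a linear subspace of the Grassmannian $\mathrm{Gr}(2,W^{*})$ in its Plücker embedding. The maximal such subspaces fall into the two classical families --- the $\alpha$-planes $\theta\wedge W^{*}$ of dimension $m-1$ and the $\beta$-planes $\ext^{2}U$ with $\dim U=3$ of dimension $3$ --- so that $\dim S_{2}\le\max(m-1,3)$ for $m\ge 3$, while for $m\le 3$ the whole of $\ext^{2}W^{*}$ is wedge-isotropic. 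Combining this with $\dim S_{5}\le\binom{m}{5}$ (nonzero only for $m\ge 5$) and $k+\dim S_{2}+\dim S_{5}=6$, a short case check over $k=0,\dots,6$ shows that dimension $6$ can be reached only for $k=6$ (forcing $S_{2}=S_{5}=0$, type $0$), $k=3$ (forcing $S_{2}=\ext^{2}W^{*}=\mathcal{F}^{2}_{1}(\Delta)$ of full dimension $3$ and $S_{5}=0$, type $3$), the boundary case $k=1$, and $k=0$ (type $6$). This already exhibits the two target forms in \eqref{eq:type03L}, so the task remaining is to eliminate $k=1$ and $k=0$.

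The main obstacle is the borderline case $k=1$ (type $5$), where the dimension count alone saturates at $6$: it forces $S_{2}$ to be a full $\alpha$-plane of dimension $4$ and $S_{5}$ to be the unique line $\ext^{5}W^{*}$. Here the cross-condition $j\omega\wedge\sigma=0$, which played no part above, becomes decisive, and disentangling it is the delicate step. I would substitute the explicit $\alpha$-plane $S_{2}=\theta\wedge W^{*}$ and a generator $\sigma$ of $\ext^{5}W^{*}$ into $j\omega\wedge\sigma=0$ and show that the two are incompatible, so that the coupling forces either $S_{5}=0$ or a strictly smaller $S_{2}$; in either case $\dim S_{2}+\dim S_{5}<5$ and type $5$ cannot reach dimension $6$. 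This is where care with the $E_{6(6)}$ conventions for $j$ and the $\times_{N}$ projection of appendix \ref{app:conventions} is essential, and I expect it to be the genuine difficulty of the proof.

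Finally I would dispatch type $6$ ($k=0$) using the so-far-unused condition (iii), specifically $L_{1}\cap L_{0}=\{0\}$. A type-$6$ structure has trivial anchor image, so $L_{1}$ lies entirely in the form directions $\ee^{\alpha+\beta}(\ext^{2}T^{*}_{\bbC}\oplus\ext^{5}T^{*}_{\bbC})$. I would compute $L_{0}$ from its defining formula $L_{0}=(L_{1}\times_{\ad}A)\cdot L_{-1}$, with $A$ the null space of $L_{1}\oplus L_{-1}$, and show that a purely form-type isotropic $6$-plane is forced to meet this neutral bundle, contradicting (iii); hence type $6$ is excluded. Collecting the surviving cases $k=6$ and $k=3$ yields exactly the type $0$ and type $3$ forms of \eqref{eq:type03L}, completing the argument, with the Grassmannian dimension bound and the type-$6$ computation being comparatively routine once the conventions are fixed.
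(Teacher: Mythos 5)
Your skeleton matches the paper's proof in appendix \ref{app:ECS in E6}: start from the isotropy normal form $\ee^{\alpha+\beta}\cdot(\Delta\oplus S_2\oplus S_5)$, run a type-by-type dimension count, and kill type 6 with condition (iii) --- and indeed for the type-6 candidate with $\dim S_2=3$ the paper computes the null space $A$ and shows $L_0\cap(L_1\oplus L_{-1})\neq\{0\}$ exactly as you propose (for $L_1=\ext^5T^*_\bbC$ the quicker kill is simply $\bar{L}_1=L_1$). Your Pl\"ucker-embedding lemma is a cleaner packaging of the count than the paper's case list, and it is in fact \emph{sharper}: you correctly identify that for $k=\dim_\bbC\Delta=1$ the $\alpha$-plane $\theta\wedge W^*$ gives a wedge-isotropic $S_2$ of dimension $4$, so the count saturates at $6$. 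The paper's own appendix asserts at this step (its case $k=5$ in the codimension labelling) that the maximal wedge-isotropic $S_2\subset\mathcal{F}^2_1(\Delta)$ has dimension $3$, which is false for precisely the reason you give; so you have located a genuine flaw in the paper's written argument.

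However, your proposed resolution of that borderline case does not work: the cross-condition $j\omega\wedge\sigma=0$ is \emph{automatically} satisfied there, not violated. With $\dim_\bbC\Delta=1$, every $\omega\in\mathcal{F}^2_1(\Delta)$ and $\sigma\in\mathcal{F}^5_4(\Delta)$ obey $v\lrcorner\omega=v\lrcorner\sigma=0$ for $v\in\Delta$, and in components
\begin{equation*}
    (j\omega\wedge\sigma)_{a,\,a_1\dots a_6}\;\propto\;\omega_{a[a_1}\sigma_{a_2\dots a_6]} ,
\end{equation*}
the six antisymmetrised indices exhaust $T_\bbC$, so each term carries exactly one index along $\Delta$, landing either on $\sigma$ (zero) or on the second slot of $\omega$ (zero); hence $j\omega\wedge\sigma\equiv 0$ with no constraint. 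Concretely, taking $\Delta=\langle \hat{e}_1\rangle$, $S_2=e^2\wedge\langle e^3,e^4,e^5,e^6\rangle$ and $S_5=\langle e^2\wedge e^3\wedge e^4\wedge e^5\wedge e^6\rangle$, all four isotropy conditions hold and $L=\Delta\oplus S_2\oplus S_5$ is a genuine six-dimensional isotropic of type 5. So conditions (i) and (ii) alone cannot exclude type 5, and the step you flag as ``the genuine difficulty'' would fail if attempted. To close the gap you must eliminate type 5 the same way type 6 is eliminated, via the remaining conditions of definition \ref{def:E6 ECS}: either show $L_1\cap L_0\neq\{0\}$ (or $L_1\cap\bar{L}_1\neq\{0\}$ for the admissible reality choices) for this family, or show the positivity condition (iv) fails on the line bundle $\mathcal{U}_{\JUs}$; alternatively one can appeal to the bilinear parameterisation of section \ref{sec:SU*(6) in local SU(2)}, which exhibits every almost ECS as type 0 or type 3. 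As it stands, both your proposal and the paper's appendix have a hole at exactly this type-5 case, though of different kinds: the paper's dimension claim is wrong, while your dimension count is right but your proposed exclusion mechanism is vacuous.
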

\noindent
In each case we can also write the form of the line bundle $\mathcal{U}_\JUs$ defined by the ECS
\begin{equation}
    \mathcal{U}_\JUs|_p = \begin{cases}
     \ee^{\alpha+\beta}\cdot \bbC & \text{type 0}  \\
     \ee^{\alpha+\beta}\cdot \mathcal{F}^3_2(\Delta)|_p & \text{type 3} 
     \end{cases}
\end{equation}
where comparing with \eqref{eq:UJbundle}, for type 0 the leading term is in $\bbC$, the space of functions at $p$, and for type 3 it is a three-form in  $\mathcal{F}^3_2(\Delta)\subset\ext^3T^*_\bbC$ at $p$. Note that a section $\xi\in\Gamma(\mathcal{F}^3_2)$ defines $\Delta$ via the condition that $v\lrcorner\xi=0$ for all $v\in\Gamma(\Delta)$. Note also that, for type 3 ECS, the complex 3-form twist $\alpha$ is only defined up to a section of $\mathcal{F}^{3}_{1}$. Indeed, if $\gamma \in \Gamma(\mathcal{F}^{3}_{1}(\Delta))$ then, viewed as an adjoint element, one can show that
\begin{equation}\label{eq:complex twist ambiguity}
    \ee^{\alpha+\beta}\cdot (\Delta \oplus \mathcal{F}^{2}_{1}(\Delta)) = \ee^{(\alpha+\gamma)+(\beta-\frac{1}{2}\alpha\wedge\gamma)}\cdot (\Delta \oplus \mathcal{F}^{2}_{1}(\Delta))
\end{equation}
We will use this freedom to make a particularly simple choice of complex twist in the following.

The objects $\alpha,\beta,\Delta$ are not generic and are constrained by condition (iv) of definition \ref{def:E6 ECS}. To find the non-linear conditions imposed by (iv), we will consider the type 0 and type 3 cases separately. Note first that we can rewrite the group action
\begin{equation}
\label{eq:decomp-twist}
    \ee^{\alpha+\beta}  = \ee^{c+\tilde{c}}\ee^{\ii a+\ii b}
\end{equation}
where we have decomposed $\alpha=c+\ii a$ and $\beta=\tilde{c}+\ii(b-\frac12 c\wedge a)$ into real and imaginary parts. Since the conditions in definition \ref{def:E6 ECS} are preserved by real $\ER{6}$ transformations we expect they will only constrain $a$ and $b$. Starting with the type 0 structure, we can hence assume, without loss of generality, that it has the form. 
\begin{equation}
    L_{1}|_p = \ee^{\ii a + \ii b}\cdot T_{\bbC}|_p
\end{equation}
where $a\in \Omega^{3}(M)_{\bbR}$, $b\in \Omega^{3}(M)_{\bbR}$. Condition (iv) of definition \ref{def:E6 ECS} is then equivalent to
\begin{equation}\label{eq:type 0 ECS non-linear condition}
    \Tr(K_{a}^{2}) + 6b^{2} <0
\end{equation}
Here, $K_{a}:T\rightarrow T\otimes \det T^{*}$ is the map introduced by Hitchin in \cite{Hitchin00}. 
Note in particular that \eqref{eq:type 0 ECS non-linear condition} implies that $\Tr(K_{a}^{2})<0$ and so $a$ defines an $\SL{3,\bbC}$ structure leading to

\begin{cor}\label{cor:class 0 classification}
A type 0 ECS at $p\in M$ is equivalent to an $\SL{3,\bbC}$ structure on $T|_p$, a bounded 6-form, and a generic real $\ER{6}$ transformation of the form $\ee^{c+\tilde{c}}$ where $c\in\Omega^3(M)_\bbR$ and $\tilde{c}\in\Omega^6(M)_\bbR$.
\end{cor}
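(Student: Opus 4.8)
The plan is to show that, having reduced via the decomposition \eqref{eq:decomp-twist} to the representative $L_1|_p = \ee^{\ii a + \ii b}\cdot T_\bbC|_p$, condition (iv) is precisely \eqref{eq:type 0 ECS non-linear condition}, and then to read off the claimed equivalence. First I would note that Proposition \ref{prop:form of ECS} already establishes that a type 0 ECS is, up to the real twist $\ee^{c+\tilde c}$, of the stated form with purely imaginary twist parameters $a\in\Omega^3(M)_\bbR$ and $b\in\Omega^3(M)_\bbR$; so the content to verify is the nonlinear constraint and its geometric interpretation. The heart of the argument is the translation of condition (iv), equivalently (iv'), into a statement about $\Tr(\Jpl\bar\Jpl)<0$. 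I would compute the line bundle section $\Jpl$ generating $\mathcal{U}_\JUs$ for this representative, and then evaluate $\Tr(\Jpl\bar\Jpl)$ explicitly using the adjoint-action and trace formulae recorded in appendix \ref{app:conventions}.

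The key computational step is recognising the structure of $\Tr(\Jpl\bar\Jpl)$ in terms of Hitchin's map $K_a$. Since $a$ is a real three-form, the adjoint element built from it acts on $T$ via $K_a : T\rightarrow T\otimes\det T^*$, and the trace of the bracket $[\,\cdot\,,\bar{\cdot}\,]$ reassembles into $\Tr(K_a^2)$ plus the contribution of the six-form $b$, yielding the scalar density $\Tr(K_a^2)+6b^2$ (up to the canonical normalisation fixing the orientation of sections of $(\det T^*)^2$). I would carry out this reduction by expanding $\Jpl$ in the graded pieces of \eqref{eq:UJbundle} and pairing them under $\Tr$, using isotropy to kill cross terms. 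The negativity requirement $\Tr(\Jpl\bar\Jpl)<0$ is then exactly \eqref{eq:type 0 ECS non-linear condition}.

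The final step is purely interpretive and immediate once \eqref{eq:type 0 ECS non-linear condition} is in hand. The inequality forces $\Tr(K_a^2)<0$, which by Hitchin's stability criterion \cite{Hitchin00,Hitchin01} is precisely the condition that the real three-form $a$ defines an $\SL{3,\bbC}$ structure on $T|_p$. The residual data are then $b$, constrained only to satisfy the bound $6b^2 < -\Tr(K_a^2)$ so that it is a \emph{bounded} six-form, together with the real $\ER6$ transformation $\ee^{c+\tilde c}$ already factored out, giving the stated bijection and hence the corollary.

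**Main obstacle.**

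I expect the main difficulty to lie in the explicit evaluation of $\Tr(\Jpl\bar\Jpl)$ for the twisted representative and in correctly tracking the $\ER6$ normalisations so that the relative coefficient between $\Tr(K_a^2)$ and $b^2$ comes out as the stated $6$. This requires careful bookkeeping of the graded components in \eqref{eq:UJbundle} and of the conjugation action of $\ee^{\ii a+\ii b}$ on the line bundle $\mathcal{U}_\JUs$; the bracket $[\JUs,\cdot]=2\ii(\cdot)$ characterisation of $\Jpl$ is what lets one pin down the correct section before taking the trace. Once the algebra is organised around $K_a$ the geometric conclusion follows with no further work.
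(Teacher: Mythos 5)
Your proposal follows the paper's own route exactly: reduce via the decomposition \eqref{eq:decomp-twist} to the purely imaginary representative $L_1|_p=\ee^{\ii a+\ii b}\cdot T_\bbC|_p$, translate condition (iv) through (iv$'$) into the negativity of $\Tr(\Jpl\bar{\Jpl})$ evaluated with the appendix \ref{app:conventions} formulae to obtain \eqref{eq:type 0 ECS non-linear condition}, and then read off the $\SL{3,\bbC}$ structure from $\Tr(K_a^2)<0$ via Hitchin's stability, with $b$ bounded and the real twist $\ee^{c+\tilde{c}}$ generic. This is essentially identical to the paper's argument (which likewise leaves the type 0 trace computation implicit, carrying out the analogous explicit evaluation only for type 3 in appendix \ref{app:ECS in E6}), so the proposal is correct.
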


For type 3, the details of the calculation become more complicated and so we have left them to appendix \ref{app:ECS in E6} and summarise the results here. We can use condition (iv) to put a constraint on $\dim(\Delta \cap \bar{\Delta})$. Indeed, we find that we must have
\begin{equation}\label{eq:class constraint}
    \dim(\Delta\cap\bar{\Delta}) \leq 1
\end{equation}
We make the following definition which refines the classification of ECS
\begin{defn}
The \emph{class} of an almost ECS $L_{1}\subset E_{\bbC}$ at $p\in M$ is the (complex) codimension of $a(L_{1}\oplus L_{-1})|_p$. That is
\begin{equation}
    \class L_{1}|_p = \codim_{\bbC} a(L_{1}\oplus L_{-1})|_p = \codim_{\bbC} (\Delta + \bar{\Delta})|_p
\end{equation}
Allowing $\dim\Delta$ to be 3 or 6, this definition holds for all ECS and it  follows from \eqref{eq:class constraint} that the class can only be 0 or 1. 
\end{defn}

For class 0 type 3, $\Delta|_p$ defines a $\GL{3,\bbC}$ almost complex structure. In this case, one can use the ambiguity in \eqref{eq:complex twist ambiguity} to write any imaginary twist $\ii a$ in the decomposition \eqref{eq:decomp-twist} as a real twist $c$. Hence, without loss of generality we can consider 
\begin{equation}
    L_{1}|_p = \ee^{\ii b}\cdot \left(\Delta \oplus \mathcal{F}^{2}_{1}(\Delta)\right)|_p
\end{equation}
From \eqref{eq:type 3 non-linear constraint}, the condition (iv) of definition \ref{def:E6 ECS} is then equivalent to 
\begin{equation}
    b > 0 
\end{equation}
where we are using the natural orientation defined by the almost complex structure to define the sign of $b$. Concretely if $\xi$ is a section of $\mathcal{F}^3_2(\Delta)$ (which in this case is the space of (0,3)-forms) our convention is that $\ii\bar{\xi}\wedge\xi$ is a positive six-form. Together $\Delta$ and $b$ are stabilised by $\U1\times\SL{3,\bbC}$ and so we have 
\begin{cor}\label{cor:type 30 classification}
A type 3 class 0 ECS at $p\in M$ is equivalent to a $\U1\times\SL{3,\bbC}$ structure on $T|_p$ and a generic real $\ER{6}$ transformation of the form $\ee^{c+\tilde{c}}$ where $c\in\Omega^3(M)_\bbR$ and $\tilde{c}\in\Omega^6(M)_\bbR$.
\end{cor}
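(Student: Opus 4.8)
The plan is to treat the corollary as the class-0 specialisation of Proposition \ref{prop:form of ECS}, tracking how the data $(\Delta,\alpha,\beta)$ is constrained and reduced by the reality condition (iv). First I would invoke Proposition \ref{prop:form of ECS} to write the structure as $L_1|_p=\ee^{\alpha+\beta}\cdot(\Delta\oplus\mathcal{F}^2_1(\Delta))|_p$ with $\dim_\bbC\Delta=3$. The class-0 hypothesis gives $\codim_\bbC(\Delta+\bar\Delta)=0$, so $\Delta+\bar\Delta=T_\bbC$ and, by dimension count, $\Delta\cap\bar\Delta=\{0\}$; hence $\Delta$ is the $+\ii$ eigenbundle of an almost complex structure $I$, i.e. a $\GL{3,\bbC}$ structure on $T|_p$. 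With respect to $I$ I would identify the filtration bundles with Dolbeault types, $\mathcal{F}^2_1(\Delta)=\ext^{0,2}T^*_\bbC$, $\mathcal{F}^3_1(\Delta)=\ext^{1,2}T^*_\bbC\oplus\ext^{0,3}T^*_\bbC$ and $\mathcal{F}^3_2(\Delta)=\ext^{0,3}T^*_\bbC$, consistent with the Calabi--Yau example \eqref{eq:J-CY}.

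The key reduction is to gauge away the imaginary part of the three-form twist. Using \eqref{eq:decomp-twist}, $\ee^{\alpha+\beta}=\ee^{c+\tilde c}\ee^{\ii a+\ii b}$, together with the ambiguity \eqref{eq:complex twist ambiguity}, which shifts $\alpha\mapsto\alpha+\gamma$ for any $\gamma\in\Gamma(\mathcal{F}^3_1(\Delta))$ at the cost of a compensating change of $\beta$, the effective three-form twist is valued only in the quotient $\ext^3T^*_\bbC/\mathcal{F}^3_1(\Delta)=\ext^{3,0}T^*_\bbC\oplus\ext^{2,1}T^*_\bbC$. The linear-algebra input I would use is that the real-linear projection from real three-forms $\Omega^3(M)_\bbR$ to $\ext^{3,0}T^*_\bbC\oplus\ext^{2,1}T^*_\bbC$ is an isomorphism: the fibres on both sides have real dimension $20$, and a real three-form is determined by its $(3,0)$- and $(2,1)$-parts since the $(0,3)$- and $(1,2)$-parts are their conjugates. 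Thus any complex $\alpha$ agrees, modulo $\mathcal{F}^3_1(\Delta)$, with a purely real three-form, so we may absorb $\ii a$ into $c$ and take $\alpha$ real. After this \eqref{eq:decomp-twist} collapses to $\ee^{\alpha+\beta}=\ee^{c+\tilde c}\ee^{\ii b}$ with $b\in\Omega^6(M)_\bbR$, so $L_1|_p=\ee^{c+\tilde c}\cdot\ee^{\ii b}(\Delta\oplus\mathcal{F}^2_1(\Delta))|_p$.

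It remains to pin down $b$ and assemble the structure group. Since the conditions of Definition \ref{def:E6 ECS} are invariant under real $\ER6$ transformations, I would peel off the manifestly real geometric factor $\ee^{c+\tilde c}$ — exactly the generic real $\ER6$ transformation advertised in the statement — and analyse condition (iv) on $\ee^{\ii b}(\Delta\oplus\mathcal{F}^2_1(\Delta))$. Applying the reformulation (iv$'$) to a local section $\Jpl=\ee^{\ii b}\xi$ of $\mathcal{U}_\JUs=\ee^{\ii b}\mathcal{F}^3_2(\Delta)$, with $\xi$ a $(0,3)$-form, the appendix computation \eqref{eq:type 3 non-linear constraint} reduces $\Tr(\Jpl\bar\Jpl)<0$ to $b>0$, positivity being measured against the orientation fixed by $I$ (concretely $\ii\bar\xi\wedge\xi>0$). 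Finally I would read off the surviving data: $\Delta$ fixes a $\GL{3,\bbC}$ structure, and the positive real six-form $b$ provides the single additional real datum — a volume normalisation fixing $|\det_\bbC|$ — which reduces $\GL{3,\bbC}$ to $\U1\times\SL{3,\bbC}$ (matching $\dim_\bbR\GL{3,\bbC}-\dim_\bbR(\U1\times\SL{3,\bbC})=18-17=1$), the residual $\U1$ being generated by $\JUs$. The converse, that a $\U1\times\SL{3,\bbC}$ structure together with an arbitrary $\ee^{c+\tilde c}$ rebuilds a genuine type 3 class 0 ECS, follows by running the construction backwards and checking (i)--(iv), which is immediate once $b>0$.

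The step I expect to be the main obstacle is the gauging-away of the imaginary three-form twist in the second paragraph. Its validity rests on the dimension/isomorphism claim for $\Omega^3(M)_\bbR\to\ext^{3,0}T^*_\bbC\oplus\ext^{2,1}T^*_\bbC$ and on confirming that the compensating shift of $\beta$ in \eqref{eq:complex twist ambiguity} disturbs only the six-form data, so that $b>0$ and the genericity of $(c,\tilde c)$ survive intact; the positivity bookkeeping that feeds condition (iv) into $b>0$, which I am deferring to \eqref{eq:type 3 non-linear constraint}, is the other delicate point.
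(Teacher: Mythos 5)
Your proposal is correct and takes essentially the same route as the paper: class 0 gives $\Delta\cap\bar{\Delta}=\{0\}$ and hence a $\GL{3,\bbC}$ structure, the ambiguity \eqref{eq:complex twist ambiguity} absorbs the imaginary three-form twist $\ii a$ into the real twist $c$ (your dimension-20 isomorphism $\Omega^{3}(M)_{\bbR}\simeq \ext^{3}T^{*}_{\bbC}\quotient\mathcal{F}^{3}_{1}(\Delta)$ is exactly the linear algebra the paper leaves implicit, and since $\Omega^{3}(M)_{\bbR}\cap\mathcal{F}^{3}_{1}(\Delta)=\{0\}$ in class 0 it also accounts for the genericity of $\ee^{c+\tilde{c}}$), and condition (iv) then reduces via \eqref{eq:type 3 non-linear constraint} to $b>0$ with stabiliser $\U1\times\SL{3,\bbC}$. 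Your closing worries are unfounded: the compensating shift in \eqref{eq:complex twist ambiguity} only moves the six-form, and with $\alpha=0$, $\beta=\ii b$ the only surviving term in \eqref{eq:type 3 non-linear constraint} is $(\xi\wedge\bar{\xi})\otimes(\beta-\bar{\beta})$, which gives $b>0$ exactly as the paper states.
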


For class 1, the situation is more complicated. One finds that the action of $b$ and some parts of $a$ are not independent. In particular, we can use the ambiguity in \eqref{eq:complex twist ambiguity} to remove the 6-form twist and write, without loss of generality
\begin{equation}
    L_{1}|_p = \ee^{\ii a}\cdot \left(\Delta \oplus \mathcal{F}^{2}_{1}(\Delta)\right)|_p
\end{equation}
where $a$ is a section of $\mathcal{A}\cap\bar{\mathcal{A}}$ where  $\mathcal{A}:=\ext^3T_\bbC/\mathcal{F}^3_1(\Delta)$. Using the result and notation of \eqref{eq:type 3 non-linear constraint}, the constraint is then 
\begin{equation}
\label{eq:class1condition}
    \ii (ja\wedge \xi) \wedge (ja \wedge \bar{\xi}) > 0 
\end{equation}
We will explain the content of this condition more explicitly in section \ref{sec:SU*(6) in local SU(2)}. In particular, we will see that together $\Delta$ and $a$ are stabilised by $\U2\times(\GL{1,\bbR}^2\ltimes\bbR)$ and so we have 
\begin{cor}\label{cor:type 31 classification}
A type 3 class 1 ECS at $p\in M$ is equivalent to a $\U2\times\GL{1,\bbR}^2$ structure on $T|_p$ and a real $\ER{6}$ transformation of the form $\ee^{c+\tilde{c}}$ where $c\in\Omega^3(M)_\bbR$ and $\tilde{c}\in\Omega^6(M)_\bbR$.
\end{cor}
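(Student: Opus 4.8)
The plan is to follow the same route as the type 0 and type 3 class 0 cases (Corollaries~\ref{cor:class 0 classification} and~\ref{cor:type 30 classification}): start from the general type 3 form of Proposition~\ref{prop:form of ECS}, strip off the real part of the twist, reduce the remaining imaginary twist using the ambiguity~\eqref{eq:complex twist ambiguity}, and then read off the joint stabiliser of $\Delta$ and the surviving twist. Concretely, I would begin by writing $L_1|_p = \ee^{\alpha+\beta}\cdot(\Delta\oplus\mathcal{F}^2_1(\Delta))|_p$ and using the decomposition~\eqref{eq:decomp-twist} to factor $\ee^{\alpha+\beta} = \ee^{c+\tilde{c}}\ee^{\ii a+\ii b}$. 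Since the conditions of Definition~\ref{def:E6 ECS} are preserved by real $\ER6$ transformations, the factor $\ee^{c+\tilde{c}}$ is unconstrained and accounts directly for the $\ee^{c+\tilde{c}}$ appearing in the statement; all the genuine work is then in analysing $\Delta$ together with the imaginary twist.

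The next step is to pin down the linear data carried by $\Delta$ in the class 1 case. By~\eqref{eq:class constraint}, class 1 means $\dim_\bbC(\Delta\cap\bar\Delta)=1$. As $\Delta\cap\bar\Delta$ is conjugation invariant it is the complexification of a real line $\bbR e_0\subset T|_p$, while $\Delta+\bar\Delta = U_\bbC$ for a $5$-dimensional real subspace $U$, with $T|_p/U$ a second real line $\bbR e_\perp$. On the $4$-dimensional real quotient $V_4 := U/\bbR e_0$ the image of $\Delta$ is a half-dimensional complex subspace meeting its conjugate trivially, hence defines an almost complex structure. This produces the real flag $\bbR e_0\subset U\subset T|_p$ with $V_4$ complex, whose linear stabiliser (ignoring for now the nilpotent shears of the associated parabolic) has Levi factor $\GL{2,\bbC}\times\GL{1,\bbR}^2$.

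I would then reduce the twist and impose condition (iv). Using the ambiguity~\eqref{eq:complex twist ambiguity}, which for type 3 permits shifting $\alpha$ by any section of $\mathcal{F}^3_1(\Delta)$, the $6$-form part $b$ and parts of $a$ become non-independent, so I would remove $b$ entirely and bring $\ii a$ to a representative lying in $\mathcal{A}\cap\bar{\mathcal{A}}$ with $\mathcal{A} = \ext^3 T^*_\bbC/\mathcal{F}^3_1(\Delta)$, giving $L_1|_p = \ee^{\ii a}\cdot(\Delta\oplus\mathcal{F}^2_1(\Delta))|_p$. Feeding this into the appendix computation~\eqref{eq:type 3 non-linear constraint} turns condition (iv) into the positivity~\eqref{eq:class1condition}, $\ii(ja\wedge\xi)\wedge(ja\wedge\bar\xi)>0$. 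The conceptual point is that this positivity is exactly a compatible Hermitian form on $V_4$: it equips the complex structure found above with a metric and so cuts the $\GL{2,\bbC}$ factor down to $\U2$.

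Assembling the pieces, the common stabiliser of $(\Delta,a)$ is $\U2$ on the complex $4$-plane together with the independent scalings $\GL{1,\bbR}^2$ of the two real lines, plus a single residual nilpotent shear between the graded pieces that neither $\Delta$ nor the positivity~\eqref{eq:class1condition} fixes; this is the $\U2\times(\GL{1,\bbR}^2\ltimes\bbR)$ anticipated in the text. Since the leftover $\bbR$ acts as a real $\ER6$ transformation it may be absorbed into $\ee^{c+\tilde{c}}$, leaving the genuine reduction of the frame bundle of $T|_p$ to be $\U2\times\GL{1,\bbR}^2$, which is the corollary. I expect the main obstacle to be the stabiliser bookkeeping of this final step: verifying that~\eqref{eq:class1condition} reduces precisely $\GL{2,\bbC}\to\U2$ and no further, and correctly isolating the surviving nilpotent $\bbR$ and checking that it is indeed a real twist foldable into $\ee^{c+\tilde{c}}$ rather than an independent structure-group factor.
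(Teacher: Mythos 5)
Your overall architecture is sound but it is a genuinely different route from the paper's. The paper never analyses the flag $\bbR e_0\subset U\subset T|_p$ abstractly: instead, in section \ref{sec:SU*(6) in local SU(2)} it uses the fact that every almost ECS admits a compatible $\USp{6}$ structure, so the spinor-bilinear expressions \eqref{eq:general H 1}, \eqref{eq:general H 2} give a fully \emph{generic} parameterisation of the structure; reading off from \eqref{eq:SU2 type 0} then yields $\Delta = T^{1,0}_{u}\oplus\bbC\zeta_{1}^{\#}$ and $a=-\omega_{u}\wedge\zeta_{1}$ as the general solution of \eqref{eq:class1condition} up to the equivalence \eqref{eq:equiv omega}, after which the stabiliser $\U2\times(\GL{1,\bbR}^2\ltimes\bbR)$ is computed directly, with the explicit actions $\omega_u\to s\omega_u$, $\zeta_1\to s^{-1}\zeta_1$, $\zeta_2\to s'\zeta_2$ and the shear $\zeta_1\to\zeta_1+t\zeta_2$. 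Your intrinsic construction --- $\Delta\cap\bar{\Delta}=(\bbR e_0)_\bbC$, $\Delta+\bar{\Delta}=U_\bbC$, the induced almost complex structure on $V_4=U/\bbR e_0$, and \eqref{eq:class1condition} read as Hermitian positivity cutting $\GL{2,\bbC}$ down to $\U2$ --- recovers exactly this data ($e_0\sim\zeta_1^{\#}$, the transverse line $\sim\zeta_2^{\#}$, $\omega_u$ the Hermitian form) without invoking supersymmetry, which is a real gain in self-containedness. The price is that you must \emph{prove}, rather than read off, that every $a\in\mathcal{A}\cap\bar{\mathcal{A}}$ obeying \eqref{eq:class1condition} is, modulo the kernel \eqref{eq:equiv omega}, of the form $\omega\wedge\zeta_1$ with $\omega$ a positive $(1,1)$-form --- in particular that no $(2,0)+(0,2)$ pieces or components with a $\zeta_2$-leg survive. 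The paper gets this for free from the genericity of the bilinear parameterisation; in your route it is an unproved lemma, not a triviality.

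The one step that fails as written is the last one, precisely where you flagged your own uncertainty. The residual nilpotent $\bbR$, the shear $\zeta_1\to\zeta_1+t\zeta_2$, is an element of $\GL{6,\bbR}\subset\ER{6}$, \emph{not} a transformation of the form $\ee^{c+\tilde{c}}$, so it cannot be absorbed into the real twist. It genuinely belongs to the stabiliser of the pair $(\Delta,a)$ --- which is why table \ref{tab:almost ECS classification} lists the $G$-structure as $\U2\times(\GL{1,\bbR}^2\ltimes\bbR)$ --- because it shifts $a$ only by $\gamma\wedge\zeta_2$ with $\gamma=t\omega_u$ a $(1,1)$-form, i.e.\ by an element of $\mathcal{F}^3_1\cap\overline{\mathcal{F}^3_1}$ that acts trivially on $\Delta\oplus\mathcal{F}^{2}_{1}(\Delta)$ by the ambiguity \eqref{eq:complex twist ambiguity}. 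The corollary's phrasing in terms of $\U2\times\GL{1,\bbR}^2$ is then bought not by folding the shear into $\ee^{c+\tilde{c}}$ but by the accompanying caveat that the twist is non-generic: the same kernel that trivialises the shear means that distinct pairs (structure, twist), related by $\zeta_1\to\zeta_1+t\zeta_2$ together with a twist shift along $\mathcal{F}^3_1\cap\overline{\mathcal{F}^3_1}$, define the same $L_1$. With that bookkeeping substituted for your absorption step, and the positivity lemma above supplied, your argument goes through.
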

\noindent
Note that in this case, unlike that of type 0 or type 3 class 0, the transformation is not generic in that not all $\ee^{c+\tilde{c}}$ transformations define different structures. 

We can, thus summarise the classification of almost ECS in the following proposition.
\begin{propn}
\label{prop:classification}
Any almost ECS must take one of the forms in table \ref{tab:almost ECS classification}.
\begin{table}[h]
    \centering
    \begin{tabular}{lllll}
    \toprule
        Class & Type & Bundle $L_{1}$ & Constraints & $G$-structure on $T$ \\ \midrule
        0 & 0 & $\ee^{\alpha+\beta}\cdot T_{\bbC}$ & $\Tr(K_{a}^{2}) + 6b^{2} < 0$ & $\SL{3,\bbC}$ \\
        0 & 3 & $\ee^{\alpha+\beta}\cdot (\Delta \oplus \mathcal{F}^{2}_{1}(\Delta))$ & $\dim(\Delta\cap\bar{\Delta}) = 0, \ b>0$  & $\U1\times \SL{3,\bbC}$ \\
        1 & 3 & $\ee^{\alpha+\beta}\cdot (\Delta \oplus \mathcal{F}^{2}_{1}(\Delta))$ & $\dim(\Delta\cap\bar{\Delta}) = 1,$ \eqref{eq:class1condition} & $\U2\times(\GL{1,\bbR}^2\ltimes\bbR)$ \\
        \bottomrule
    \end{tabular}
    \caption{Classification of almost ECS in $\ER{6}$ geometry for M-theory backgrounds. Here $\alpha \in \Omega^{3}(M)_{\bbC}$, $\beta \in \Omega^{6}(M)_{\bbC}$, $\Delta \subset T_{\bbC}$, $\dim\Delta = 3$. We also have $a$ and $b$ as defined in \eqref{eq:decomp-twist}}
    \label{tab:almost ECS classification}
\end{table}
\end{propn}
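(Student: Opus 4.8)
The plan is to treat this proposition as the assembly of the structural results already proved, organised as a case analysis first by type and then by class. I would begin by invoking Proposition~\ref{prop:form of ECS}, which extracts from conditions (i)--(iii) of Definition~\ref{def:E6 ECS} that an almost ECS at each point $p$ can only be of type 0 or type 3, with the explicit local forms $\ee^{\alpha+\beta}\cdot T_{\bbC}$ and $\ee^{\alpha+\beta}\cdot(\Delta\oplus\mathcal{F}^{2}_{1}(\Delta))$, where $\dim_{\bbC}\Delta=3$. This immediately fixes the ``Bundle $L_{1}$'' column of Table~\ref{tab:almost ECS classification} and, in particular, discards the only other dimensionally-allowed possibility, type 6, which is killed by the condition $L_{1}\cap L_{0}=\{0\}$.

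Next I would subdivide the two surviving types by class. For type 0 the anchor map restricted to $L_{1}$ is onto, so $a(L_{1}\oplus L_{-1})=T_{\bbC}$ and the class is automatically 0; this is precisely why no ``type 0, class 1'' row appears. For type 3 the class equals $\codim_{\bbC}(\Delta+\bar{\Delta})$, and the bound \eqref{eq:class constraint}, itself a consequence of condition (iv), forces $\dim(\Delta\cap\bar{\Delta})\in\{0,1\}$, corresponding to class 0 and class 1 respectively. Together with the type dichotomy this yields exactly the three rows of the table and shows they are both mutually exclusive and exhaustive.

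It then remains only to read off the ``Constraints'' and ``$G$-structure on $T$'' columns, which is the content of the three corollaries already established: Corollary~\ref{cor:class 0 classification} gives the type-0 row, where condition (iv) becomes $\Tr(K_{a}^{2})+6b^{2}<0$ and in particular $\Tr(K_{a}^{2})<0$, so that $a$ is a Hitchin-stable three-form defining an $\SL{3,\bbC}$ structure; Corollary~\ref{cor:type 30 classification} gives the type-3 class-0 row, with $\Delta$ a $\GL{3,\bbC}$ almost complex structure, constraint $b>0$, and stabiliser $\U1\times\SL{3,\bbC}$; and Corollary~\ref{cor:type 31 classification} gives the type-3 class-1 row, with $\dim(\Delta\cap\bar{\Delta})=1$, constraint \eqref{eq:class1condition}, and stabiliser $\U2\times(\GL{1,\bbR}^{2}\ltimes\bbR)$. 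Since all the genuine work lives in those earlier results, I would expect no real obstacle at this stage; the one point worth stating carefully is the automatic class-0-ness of type 0, as this is what guarantees the table is complete with just three entries rather than four.
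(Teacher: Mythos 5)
Your proposal is correct and follows essentially the same route as the paper, which itself presents proposition \ref{prop:classification} as a summary assembled from proposition \ref{prop:form of ECS}, the bound \eqref{eq:class constraint}, and corollaries \ref{cor:class 0 classification}, \ref{cor:type 30 classification} and \ref{cor:type 31 classification} (with the underlying computations in appendix \ref{app:ECS in E6}). Your explicit observation that type 0 forces class 0 --- since the anchor is already surjective on $L_{1}$ --- is the right point to make in justifying that the table has exactly three rows.
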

\noindent
Since every $\bbRpl\times\Us{6}$ structure admits a $\USp6$ structure, we can use the explicit spinor bilinear expressions \eqref{eq:general H 1} and \eqref{eq:general H 2} for $J_\alpha$ found in \cite{Ashmore:2016qvs} to give the concrete form for $L_1$ in each of these cases. These will be given in section \ref{sec:SU*(6) in local SU(2)} below. 

We now turn to the conditions for integrability of the $\Us{6}\times \bbR^{+}$ structure. Recall from \eqref{eq:Us(6) integrability} that an ECS is integrable if and only if it is involutive with respect to the Dorfman derivative. We will write $L_{1} = \ee^{\alpha+\beta}\cdot(v+\omega) \in \Gamma(L_{1})$, and similarly for $V'$, where $v\in \Gamma(\Delta)$ and $\omega \in \Gamma(\mathcal{F}^{2}_{1})$. Note that in the case that $\Delta = T_{\bbC}$, $\mathcal{F}^{2}_{1}(\Delta)=0$ and hence this expression covers both type 0 and type 3 (and the cases of type-changing). Involutivity then becomes
\begin{align}
    \begin{split}\label{eq:E6 ECS Dorfman}
       \Gamma(L_{1}) \ni L^{F}_{V}V' &= \ee^{\alpha+\beta}\cdot\left([v,v'] + (\mathcal{L}_{v}\omega' - v'\lrcorner\dd\omega + v'\lrcorner(v\lrcorner(F+ \dd\alpha))) \right. \\
        & \qquad \qquad \quad  \left. -\omega'\wedge \dd\omega + \omega'\wedge (v\lrcorner(F+\dd\alpha)) \right)
    \end{split}
\end{align}
For this to be true we require $[\Delta,\Delta]\subseteq \Delta$. In the global type 0 structure this is trivial, but more generally it implies the existence of a generalised involutive distribution where $\dim\Delta|_p=0,3$. For a global type 3 structure this becomes a three-dimensional foliation. The 2-form piece implies that for all $v,v'\in \Gamma(\Delta)$
\begin{equation}
    v\lrcorner \dd\omega' - v'\lrcorner\dd\omega + v'\lrcorner(v\lrcorner(F+ \dd\alpha)) \in \Gamma(\mathcal{F}^{2}_{1})
\end{equation}
A short calculation shows that, provided $\Delta$ is integrable in the sense of Frobenius, the de Rham differential restricts to $\dd:\Gamma(\mathcal{F}^{k}_{p}) \rightarrow \Gamma(\mathcal{F}^{k+1}_{p})$. Hence, $\Gamma(\mathcal{F}^{\bullet}_{p})$ defines a filtration of the de Rham complex. It is then clear that for all $v \in \Gamma(\Delta)$, $\omega'\in \Gamma(\mathcal{F}^{2}_{1})$ we have $v\lrcorner\dd\omega' \in \Gamma(\mathcal{F}^{2}_{1})$. We further require that $v'\lrcorner(v\lrcorner(F+ \dd\alpha)) \in \Gamma(\mathcal{F}^{2}_{1})$ which we can restate as
\begin{equation}\label{eq:complex flux condition}
   F_{\bbC} = F + \dd\alpha\in \Gamma(\mathcal{F}^{4}_{1})
\end{equation}
Finally, we need the 5-form term to vanish in \eqref{eq:E6 ECS Dorfman}. Since $\omega'\in \Gamma(\mathcal{F}^{2}_{1})$ and $\dd\omega \in \Gamma(\mathcal{F}^{3}_{1})$, you can show that $\omega'\wedge (\dd\omega - v\lrcorner F_{\bbC}) \in \Gamma(\mathcal{F}^{5}_{3})$. This space trivially vanishes because $\Delta$ is of dimension (at least) 3. Therefore, the final term vanishes trivially.

Finally, recall the ambiguity \eqref{eq:complex twist ambiguity} in the definition of the complex 3-form twist $\alpha\sim \alpha+\gamma$, for any $\gamma\in \Gamma(\mathcal{F}^{3}_{1})$. Combining this with \eqref{eq:complex flux condition}, we see that $F_{\bbC}$ is only defined up to $\dd\gamma$. That is, the complex flux is part of the cohomology group
\begin{equation}
    F_{\bbC} \in \frac{\left\{ a \in \Gamma(\mathcal{F}^{4}_{1})\,|\, \dd a = 0 \right\}}{ \left\{ a = \dd b \,|\, b\in \Gamma(\mathcal{F}^{3}_{1}) \right\} } =: H^{4}(M,\mathcal{F}^{\bullet}_{1})
\end{equation}
We can summarise the results as follows.

\begin{propn}\label{prop: ECS integrability E6}
An integrable ECS is of the form $L_{1} = \ee^{\alpha+\beta}\cdot (\Delta\oplus \mathcal{F}^{2}_{1}(\Delta))$ for $\Delta \subset T_{\bbC}$, $\alpha \in \Omega^{3}(M)_{\bbC}$ and $\beta \in \Omega^{6}(M)_{\bbC}$ as in proposition \ref{prop:classification} such that 
\begin{equation}
    [\Delta,\Delta]\subseteq \Delta \qquad F_{\bbC}:= F + \dd\alpha \in H^{4}(M,\mathcal{F}^{\bullet}_{1}) \label{eq:integrability conditions}
\end{equation}
\end{propn}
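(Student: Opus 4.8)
The plan is to start from Definition~\ref{def:integrable E6 ECS}, which identifies integrability of the ECS with involutivity of $L_1$ under the flux-twisted Dorfman derivative, and to feed in the pointwise form of $L_1$ already fixed by Proposition~\ref{prop:classification}. Thus I take $L_1=\ee^{\alpha+\beta}\cdot(\Delta\oplus\mathcal{F}^2_1(\Delta))$ as given and only need to extract the \emph{differential} conditions on $\Delta$, $\alpha$ and $F$ that guarantee $L^F_V V'\in\Gamma(L_1)$ for all $V,V'\in\Gamma(L_1)$.

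First I would take two generic sections $V=\ee^{\alpha+\beta}\cdot(v+\omega)$, $V'=\ee^{\alpha+\beta}\cdot(v'+\omega')$ with $v,v'\in\Gamma(\Delta)$ and $\omega,\omega'\in\Gamma(\mathcal{F}^2_1(\Delta))$, and use the twist-equivariance of the Dorfman derivative to factor $\ee^{\alpha+\beta}$ out to the left, the net effect of the three-form twist being to replace $F$ by $F+\dd\alpha=:F_\bbC$ (the six-form $\beta$ drops out since $\dd\beta=0$ on a six-manifold). This is precisely the expression \eqref{eq:E6 ECS Dorfman}, which I then split by form degree and demand each piece lie in the corresponding summand of $L_1$. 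The vector part gives $[v,v']\in\Gamma(\Delta)$, i.e.\ the Frobenius condition $[\Delta,\Delta]\subseteq\Delta$, making $\Delta$ an involutive distribution (a three-dimensional foliation where $\dim\Delta=3$).

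The key lemma, and the step I expect to be the main obstacle, is that Frobenius integrability of $\Delta$ forces the de Rham differential to respect the filtration, $\dd:\Gamma(\mathcal{F}^k_p(\Delta))\to\Gamma(\mathcal{F}^{k+1}_p(\Delta))$. This follows from the invariant formula for $\dd\phi$ in terms of Lie brackets together with the fact that brackets of sections of $\Delta$ remain in $\Delta$, but it must be argued with care because $\mathcal{F}^k_p$ is a filtration rather than a grading of $\ext^k T^*_\bbC$. Granting it, the two-form part of \eqref{eq:E6 ECS Dorfman} collapses: $v\lrcorner\dd\omega'$ and $v'\lrcorner\dd\omega$ automatically lie in $\mathcal{F}^2_1$, so involutivity of this piece is equivalent to $v'\lrcorner(v\lrcorner F_\bbC)\in\Gamma(\mathcal{F}^2_1)$ for all $v,v'\in\Gamma(\Delta)$, which is exactly the condition $F_\bbC\in\Gamma(\mathcal{F}^4_1)$. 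For the five-form part I would note that $\omega'\wedge(\dd\omega-v\lrcorner F_\bbC)$ lands in $\mathcal{F}^5_3$, which vanishes identically because $\dim\Delta\geq 3$; hence it imposes nothing further.

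It remains to promote $F_\bbC$ from a representative to a cohomology class. For this I would combine closedness, $\dd F_\bbC=\dd F+\dd^2\alpha=0$, with the residual twist ambiguity $\alpha\sim\alpha+\gamma$, $\gamma\in\Gamma(\mathcal{F}^3_1(\Delta))$, recorded in \eqref{eq:complex twist ambiguity}, under which $F_\bbC\mapsto F_\bbC+\dd\gamma$. Since $\dd$ preserves the filtration, these are exactly the closed representatives in $\mathcal{F}^4_1$ modulo $\dd$ of $\mathcal{F}^3_1$, so $F_\bbC$ descends to a well-defined element of $H^4(M,\mathcal{F}^\bullet_1)$. Apart from the filtration-preservation lemma, every remaining step is bookkeeping on the degree-decomposed Dorfman derivative.
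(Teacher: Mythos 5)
Your proposal is correct and takes essentially the same route as the paper: factor the twist out of the Dorfman derivative (with $\beta$ dropping out and $F\to F_\bbC=F+\dd\alpha$) to reach \eqref{eq:E6 ECS Dorfman}, read off Frobenius from the vector piece, reduce the two-form piece to $F_\bbC\in\Gamma(\mathcal{F}^4_1)$ via filtration-preservation, note the five-form piece lands in $\mathcal{F}^5_3=0$ since $\dim\Delta\geq 3$, and quotient by the ambiguity \eqref{eq:complex twist ambiguity} to obtain a class in $H^4(M,\mathcal{F}^\bullet_1)$. The lemma you flag as the main obstacle --- that $\dd:\Gamma(\mathcal{F}^k_p)\to\Gamma(\mathcal{F}^{k+1}_p)$ when $[\Delta,\Delta]\subseteq\Delta$ --- is precisely what the paper dispatches as ``a short calculation'', and your argument via the invariant Lie-bracket formula for $\dd$ is the intended one.
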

\noindent
In the case that the ECS is of global type 0, the second condition just implies $F_{\bbC} = 0$, which in turn implies that $F = -\dd\alpha$ and hence $F$ must be in a trivial cohomology class. Since the twisted Dorfman derivative is only defined up to the cohomology, we can take $F=0$ in this case. For global type 3 solutions, the complex flux does not need to vanish and in general, $F$ can be in a non-trivial cohomology class.

We will determine the conditions on $\alpha,\beta$ for integrability of the full $\SUs{6}$ structure in the following sections.

\subsection{The classification of \texorpdfstring{$\SUs6$}{SU*(6)} structures}\label{sec: SU(2) symmetry}

Recall that an $\SUs{6}$ structure can be described by a triplet of adjoint elements $J_{\alpha}$ which form a highest weight $\su{2}$ subalgebra. In passing to the description in terms of ECS, we have broken the explicit $\SU{2}$ symmetry prescribed by this structure. We now take a closer look at the implications of this additional symmetry of the structure.

In the definition of ECS in section \ref{sec:SUs(6) structures}, we took $L_{1}$ to be the $+\ii$ eigenbundle of $\JUs = \kappa^{-1}J_{3}$. However, this is just an arbitrary choice and we could choose any one of $J_{1},J_{2}, J_{3}$. In fact, we can choose any linear combination $\JUs_{u} = \kappa^{-1}u^{\alpha}J_{\alpha}$, where $\mathbf{u} \in \bbR^{3}$, such that
\begin{equation}
    \Tr (\JUs_{u}\JUs_{u}) = -1 \qquad \Leftrightarrow \qquad |\mathbf{u}|^{2} = 1
\end{equation}
We see that we have an $S^{2} \simeq \bbC \mathbb{P}^{1} \simeq \SU{2}\quotient \U{1}$ of possible ECS for each $\SUs{6}$ structure. That is, each $\JUs_{u}$ defines a $\Us{6}\times \bbR^{+}$ structure. (Equivalently we can rotate the original $J_\alpha$ by a global $\SU2$ transformation to define a new $\SUs6$ structure $J'_\alpha$ and set $\JUs_u=J'_3$.) Recall that we could equally define the structure in terms of eigenbundles. Taking the $+\ii$ eigenbundle of $\JUs_{u}$ to be $L_{u}$, we find
\begin{equation}\label{eq:general Lu}
    L_{u} = \begin{cases}
    \left( 1+\frac{\ii(u^{3}-1)}{2(u^{1}+\ii u^{2})}\kappa^{-1}J_{+} \right)\cdot L_{1} & u^{3}\neq \pm 1 \\
    L_{\pm 1} & u^{3} = \pm 1
    \end{cases}
\end{equation}

We saw that we could describe the integrability of an $\SUs{6}$ structure as the vanishing of the moment maps \eqref{eq:H condition 2}, or equivalently as the integrability of $L_{1}$ and the vanishing of $\mu_3$. However, since the choice of $\JUs_u$ just corresponds to a global $\SU2$ rotation of the $J_\alpha$, integrability defined by $\JUs_u$ implies the vanishing of the same (rotated) moment maps. Hence the integrability of the $\SUs6$ structure can also be written as 
\begin{equation}
    \llbracket L_{u}, L_{u} \rrbracket \subseteq L_{u} \qquad u^{\alpha}\mu_{\alpha} = 0
\end{equation}
for any $u\in S^2$.

We may wonder whether involutivity of some fixed $L_{u}$ implies involutivity of all $L_{u}$. However, it turns out that this condition is not quite strong enough. A quick calculation shows that requiring involutivity for all $\mathbf{u}\in S^{2}$ implies that the intrinsic torsion of the $\SUs{6}$ structure satisfies
\begin{equation}
    T^{\tint}|_{\rep{15}_{+2}\oplus \rep{15}_{0}\oplus \rep{15}_{-2}} = 0
\end{equation}
which is a strictly stronger condition than just involutivity of $L_{1}$ (which just required the $\rep{15}_{\pm 2}$ part of of $T^{\tint}$ to vanish). It is then possible to show that
\begin{equation}
    L_{V}\Jpl = A(V) \Jpl \qquad \forall \,V\in \Gamma(L_{1})
\end{equation}
where $\Gamma(E^{*})\ni A\sim T^{\tint}|_{\repb{6}_{-1}}$. From this it is clear that full integrability of the $\SUs{6}$ structure is given by further imposing $L_{V}\Jpl = 0$ for all $V\in \Gamma(L_{1})$. Defining $\Jpl_u$ as the $\SUs6$ structure defined by an involutive $L_u$, it is easy to see that $L_{V}\Jpl_u=0$ for all $V\in \Gamma(L_{u})$ gives the same condition for any $\mathbf{u}\in S^{2}$. This proves the following statement.
\begin{propn}\label{prop:different integrability}
Let $\Jpl \in \Gamma(\det T^{*} \otimes \ad \tilde{F})$ define an $\SUs{6}$ structure, equivalently defined by weighted adjoint tensors $J_{\alpha}$. Then the following are equivalent.
\begin{enumerate}
    \item The $\SUs{6}$ structure is integrable
    \item $\forall \, \mathbf{u}\in S^{2}$, $u^{\alpha}\mu_{\alpha} = 0$ \label{integrability i}
    \item $\forall \,\mathbf{u}\in S^{2}$, $\llbracket L_{u}, L_{u} \rrbracket \subseteq L_{u}$, $L_{V}\Jpl_{u} = 0$ for all $V\in \Gamma(L_{u})$ \label{integrability ii}
    \item $\exists\,\mathbf{u} \in S^{2}$, $\llbracket L_{u}, L_{u} \rrbracket \subseteq L_{u}$,  $u^{\alpha}\mu_{\alpha} = 0$ \label{integrability iii}
\end{enumerate}
\end{propn}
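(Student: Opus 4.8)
The plan is to show that each of the four statements is equivalent to the single algebraic condition that the full intrinsic torsion $T^{\tint}$ of the $\SUs{6}$ structure, decomposed as in \eqref{eq:SUs6 intrinsic torsion}, vanishes identically. Statement (1) is equivalent to $T^{\tint}=0$ essentially by Definition~\ref{def:integrable SUs6} together with the preceding discussion: involutivity of $L_1$ removes the $\rep{15}_{+2}\oplus\rep{15}_{-2}$ part and $\mu_3\equiv0$ removes the complementary $\rep{15}_0\oplus\repb{6}_1\oplus\repb{6}_{-1}$ part. The remaining task is to recognise (2), (3) and (4) as alternative repackagings of the same vanishing, so the argument is mainly a careful accounting of which graded components of $T^{\tint}$ each clause controls.

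The one ingredient I would set up first is the $\SU{2}$-covariance of the whole construction. Since the moment maps \eqref{eq:H condition 2} are linear in the triplet $J_\alpha$ and contract with $\epsilon_{\alpha\beta\gamma}$, a global $\SU{2}$ rotation taking $J_3\mapsto u^\alpha J_\alpha$ takes $\mu_3\mapsto u^\alpha\mu_\alpha$ and $L_1\mapsto L_u$ as in \eqref{eq:general Lu}, compatibly with the $\U{1}$ grading defined by $u^\alpha J_\alpha$. Granting this, the two facts used for $\mathbf{u}=(0,0,1)$ hold verbatim about any $\mathbf{u}\in S^2$: involutivity of $L_u$ is equivalent to the vanishing of the weight $\pm2$ part of the $\rep{15}$-triplet graded by $u^\alpha J_\alpha$, whereas $u^\alpha\mu_\alpha=0$ is equivalent to the vanishing of the weight-zero part of that triplet together with the whole $\repb{6}$-doublet.

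The equivalences then follow by bookkeeping against \eqref{eq:SUs6 intrinsic torsion}. For (4) a single axis suffices: relative to the chosen $\mathbf{u}$, involutivity of $L_u$ supplies the weight $\pm2$ pieces while $u^\alpha\mu_\alpha=0$ supplies the weight-zero piece of the triplet and the whole doublet; these three weights exhaust the spin-one $\rep{15}$, so $T^{\tint}=0$ and hence (1), the converse being immediate. For (2) I would observe that $u^\alpha\mu_\alpha=0$ for any one $\mathbf{u}$ already kills the doublet, and that imposing it for all $\mathbf{u}$ forces the weight-zero component of the triplet to vanish along every axis, which, viewing the triplet as a vector in the spin-one representation, forces the whole triplet to vanish. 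For (3) I would use the facts already derived above: involutivity of $L_u$ for all $\mathbf{u}$ yields exactly the full triplet $\rep{15}_{+2}\oplus\rep{15}_0\oplus\rep{15}_{-2}$ but is blind to the doublet, while the extra clause $L_V\Jpl_u=0$ for all $V\in\Gamma(L_u)$ is equivalent to the vanishing of $A\sim T^{\tint}|_{\repb{6}_{-1}}$, i.e.\ of the doublet, independently of $\mathbf{u}$; their union is once more $T^{\tint}=0$.

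The main obstacle is conceptual rather than computational and lies in getting the contrast between (3) and (4) exactly right. The point to stress is that involutivity of the entire $S^2$ family $L_u$ is strictly weaker than integrability: it detects only the $\rep{15}$-triplet and never the $\repb{6}$-doublet, which is precisely why (3) must carry the supplementary condition $L_V\Jpl_u=0$, whereas a single moment map $u^\alpha\mu_\alpha$ already sees the full doublet, so that in (4) one involutive $L_u$ paired with one vanishing moment map closes the gap on a single axis. The place where genuine care is needed is the covariance statement of the second paragraph, that rotating $J_\alpha$ transports $\mu_3$ and $L_1$ to $u^\alpha\mu_\alpha$ and $L_u$ compatibly with the gradings; once that is in hand everything else is a weight-counting argument.
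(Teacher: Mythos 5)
Your proposal is correct and follows essentially the same route as the paper: $\SU2$-covariance transporting the pair ($L_{1}$ involutive, $\mu_{3}=0$) to an arbitrary axis $\mathbf{u}$, combined with the bookkeeping that involutivity controls $T^{\tint}|_{\rep{15}_{\pm2}}$, the moment map controls $T^{\tint}|_{\rep{15}_{0}\oplus\repb{6}_{1}\oplus\repb{6}_{-1}}$, and $L_{V}\Jpl_{u}=0$ supplies the doublet via $A\sim T^{\tint}|_{\repb{6}_{-1}}$. In particular, your weight-counting observation that involutivity of the whole $S^{2}$ family detects only the spin-one triplet and is blind to the doublet is precisely the paper's ``quick calculation'' separating condition (3) from condition (4).
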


Turning to the form of the $\SUs6$ structure, since $L_{u}$ defines an ECS for any $\mathbf{u}$, it must take the form of one of the cases in table \ref{tab:almost ECS classification}. Since a generic $L_{u}$ mixes together parts of $L_{\pm 1}$, the type may vary as we change $\mathbf{u}$. However, since $L_{1}\oplus L_{-1}$ is $\SU{2}$ invariant, the class will remain fixed as we vary $\mathbf{u}$. Hence, the following is well-defined.
\begin{defn}
The \emph{class} of an $\SUs{6}$ structure at point $p\in M$ is the class of any of its associated ECS. That is
\begin{equation}
    \class \Jpl|_p = \class L_{u}|_p \quad \text{any} \quad \mathbf{u}
\end{equation}
\end{defn}
It then immediately follows that we only have two categories of $\SUs6$ structures
\begin{propn}
Let $\Jpl \in \Gamma(\det T^{*} \otimes \ad \tilde{F})$ define an $\SUs{6}$ structure, equivalently defined by weighted adjoint tensors $J_{\alpha}$. Then
\begin{equation}\label{eq:class 0 or 1}
    \begin{aligned}
    \text{either} \quad \class \Jpl|_p &= 0  & & \Leftrightarrow & \exists\, &\mathbf{u} \;\; \type L_{u}|_p = 0 \\
    \text{or} \quad \class \Jpl|_p &= 1 & &  \Leftrightarrow  & \forall\, &\mathbf{u} \;\; \type L_{u}|_p = 3
    \end{aligned}
\end{equation}
\end{propn}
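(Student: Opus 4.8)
The plan is to prove the single biconditional $\class\Jpl|_p = 0 \Leftrightarrow \exists\,\mathbf{u}\;\type L_u|_p = 0$; the second line of \eqref{eq:class 0 or 1} is then automatic, since by proposition \ref{prop:form of ECS} the type is always $0$ or $3$ and the class is always $0$ or $1$, so the two statements are exact contrapositives. I work at the fixed point $p$ and use, as established above, that $L_1\oplus L_{-1}$ is $\SU{2}$-invariant (it is the $(\rep{6},\rep{2})$ of $\SUs{6}\times\SU{2}$), so that $a(L_u\oplus L_{-u}) = \Delta + \bar{\Delta}$ is independent of $\mathbf{u}$, where $\Delta := a(L_1)$ and $\bar{\Delta} = a(L_{-1})$. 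Using the explicit family \eqref{eq:general Lu}, for $u^{3}\neq\pm1$ one has $L_u = (1 + t\,\kappa^{-1}J_+)\cdot L_1$ with $t = \tfrac{\ii(u^{3}-1)}{2(u^{1}+\ii u^{2})}$, and $\kappa^{-1}J_+\colon L_1\to L_{-1}$ is an isomorphism (the $\su{2}$ step relating the two weight spaces of the doublet). Hence $a(L_u) = (A + tB)(L_1)$, where $A := a|_{L_1}$ and $B := a\circ(\kappa^{-1}J_+)|_{L_1}$ are maps $L_1\to T_\bbC$ with images $\Delta$ and $\bar{\Delta}$ respectively, and the special points $u^{3}=\pm1$ correspond to the limits $t=0,\infty$ giving $L_{\pm1}$.

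The two easy implications follow immediately from $a(L_u)\subseteq\Delta+\bar{\Delta}$. If some $L_u$ has type $0$ then $a(L_u) = T_\bbC$, forcing $\Delta+\bar{\Delta} = T_\bbC$, i.e.\ class $0$; this is the $\Leftarrow$ direction. Conversely, if the class is $1$ then $\dim_\bbC(\Delta+\bar{\Delta}) = 5$, so $\dim_\bbC a(L_u)\leq 5 < 6$ for every $\mathbf{u}$ and no ECS can have type $0$. It therefore remains to prove the genuinely non-trivial implication: class $0\Rightarrow\exists\,\mathbf{u}\;\type L_u = 0$.

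Assume class $0$, so $T_\bbC = \Delta\oplus\bar{\Delta}$ with both summands three-dimensional (if $L_1$ is already type $0$ we are done, so take it to be type $3$). Writing $A + tB$ in block form adapted to $T_\bbC = \Delta\oplus\bar{\Delta}$, the map $A$ lands entirely in $\Delta$ and $tB$ entirely in $\bar{\Delta}$; factoring the scalar $t$ out of the three rows valued in $\bar{\Delta}$ gives the identity $\det(A + tB) = t^{3}\det(A+B)$ (in any fixed bases, so that vanishing is well defined). Consequently $a(L_u) = T_\bbC$, i.e.\ $L_u$ is type $0$, for generic $t$ if and only if $A+B$ is invertible, which since $\Delta\cap\bar{\Delta} = 0$ is equivalent to $\ker A\cap\ker B = \{0\}$. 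Thus the whole claim reduces to showing that, for a genuine type-$3$ class-$0$ ECS, the form-type subspace $\ker A = \ker(a|_{L_1})$ and the pullback $\ker B = (\kappa^{-1}J_+)^{-1}\ker(a|_{L_{-1}})$ intersect trivially.

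This non-degeneracy is the main obstacle, and where the real work lies. The point that makes it tractable is that $\kappa^{-1}J_+$ is fixed up to scale by $L_1$ alone: it spans the one-dimensional singlet line $\rep{1}_{\pm2}$ in the adjoint decomposition \eqref{eq:adjoint decomp under SUs6} and is therefore proportional to $\bar{\Jpl}$, so $\ker A\cap\ker B$ depends only on the ECS and not on the refinement \eqref{eq:equiv SU*(6) for U*(6)}. I would then establish the non-degeneracy from condition (iv) (equivalently (iv$'$), $\Tr(\Jpl\bar{\Jpl})<0$) of definition \ref{def:E6 ECS}: inserting the type-$3$ class-$0$ normal form $L_1 = \ee^{\ii b}\cdot(\Delta\oplus\mathcal{F}^{2}_{1}(\Delta))$ of table \ref{tab:almost ECS classification}, in which $\mathcal{F}^{2}_{1}(\Delta) = \ext^{0,2}T^{*}$ and the leading component of $\Jpl$ is a nowhere-zero $(0,3)$-form, one computes the adjoint action on the form-part of $L_1$ using the local $\SU{2}$ expressions for $J_+$ from section \ref{sec:SU*(6) in local SU(2)} and checks that it injects $\ext^{0,2}T^{*}$ into the vector part $\bar{\Delta}$ of $L_{-1}$ exactly when $\Delta\oplus\bar{\Delta} = T_\bbC$. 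The Calabi--Yau case is a useful sanity check: at $\mathbf{u}=(0,0,1)$ the structure \eqref{eq:J-CY} is type $3$ class $0$, while the $\SU{2}$-rotated eigenbundle $L_1' = \ee^{\ii\hat{\rho}}\cdot T_\bbC$ is type $0$, exhibiting precisely the type-$0$ member whose existence the determinant identity guarantees.
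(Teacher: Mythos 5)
Your architecture is sound and, for the parts you complete, matches the paper's: the easy implications via $a(L_u)\subseteq\Delta+\bar{\Delta}$ are exactly the paper's argument for the class-1 line, and your determinant identity $\det(A+tB)=t^{3}\det(A+B)$ is a nice sharpening of the paper's terser "apply \eqref{eq:general Lu} to the class 0, type 3 solution" — it shows in addition that all non-polar $L_u$ are simultaneously type 0 or type 3, consistent with the antipodal type-3 locus \eqref{eq:type3u} that the paper extracts from the bilinear normal forms $f_u$ in \eqref{eq:generic LX and f}. However, the decisive step — that $\ker A\cap\ker B=\{0\}$ for a type-3 class-0 ECS — is only announced ("I would then establish\ldots one computes\ldots and checks\ldots"), not carried out, and this is the only non-trivial content of the hard direction. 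That is a genuine gap: as written, the proposal reduces the proposition to an unproven injectivity claim. The paper closes exactly this point by a different route, namely the genericity of the spinor-bilinear parameterisation in section \ref{sec:SU*(6) in local SU(2)}, where type 3 occurs precisely at the zeros of the explicit function $f_u$.

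The gap is fillable along your lines, but there is a trap you must avoid for the plan to work. The leading term of $\kappa^{-1}J_+$ is \emph{not} an adjoint three-form: the $\ext^{3}T^{*}_{\bbC}$ slot in \eqref{eq:UJbundle} is the weighted trivector $(\det T^{*})\otimes\ext^{3}T_\bbC$, whose form avatar is the section of $\mathcal{F}^{3}_{2}(\Delta)$. If you mistakenly act with a three-form shift (which sends a two-form only to a five-form), the vector part of the image of $\mathcal{F}^{2}_{1}(\Delta)$ comes out identically zero and the check fails. Acting instead with the trivector $P$, which in the class-0 normal form is supported on $\ext^{3}\bar{\Delta}$, the vector part of $J_+$ applied to the form part $\ee^{\alpha+\beta}\cdot\mathcal{F}^{2}_{1}(\Delta)$ is exactly $P\lrcorner\omega$: since $\mathcal{F}^{2}_{1}(\Delta)\simeq\ext^{2}\bar{\Delta}^{*}$ when $\Delta\cap\bar{\Delta}=0$, this contraction is an isomorphism onto $\bar{\Delta}$, and the twist exponentials only add adjoint components of non-negative grading, which cannot contribute to the vector part. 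This also corrects a mis-billing in your sketch: condition (iv) is not what delivers the injectivity at this stage — it was already spent establishing $\dim(\Delta\cap\bar{\Delta})\leq 1$ and the normal form; once in the class-0 type-3 normal form, the non-degeneracy is automatic from the structure of $\mathcal{U}_{\JUs}$, and it degenerates precisely in the class-1 case $\Delta\cap\bar{\Delta}\neq\{0\}$, which is the right consistency check.
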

The second of these statements is clear. Indeed, if $\codim \anchor( L_{1}\oplus L_{-1})|_p > 0$ then $\codim \anchor(L_{u})|_p >0$ for any $L_{u}\subset L_{1}\oplus L_{-1}$. Hence $\type L_{u}|_p \neq 0$ so it must be type 3 for all $\mathbf{u}$. The first statement can be seen by applying \eqref{eq:general Lu} to the class 0, type 3 solution in table \ref{tab:almost ECS classification}.

Finally, we can apply the conditions in proposition \ref{prop:different integrability} to find the form of integrable $\SUs6$ structures. For a class 0 structure, choosing a $\mathbf{u}$ such that the $L_u$ is type 0, then from \eqref{eq:UJbundle} we have 
\begin{equation}
\label{eq:class0X}
    \Jpl_u = \ee^{\alpha+\beta}\cdot \h = \ee^{c+\tilde{c}}\ee^{\ii a+\ii b} \cdot \h 
\end{equation}
where $\h $ is a function and we have decomposed $\alpha=c+\ii a$ and $\beta=\tilde{c}+\ii(b-\frac12 c\wedge a)$. Recall that integrability of $L_u$ implies
\begin{equation}
    \dd a = \dd c = 0 
\end{equation}
It is then relatively straightforward to show that the vanishing of the moment map $u^\alpha\mu_\alpha$ is equivalent to
\begin{equation}
    \dd \hat{a} = 0 , \qquad
    \dd \h  = 0 , \qquad
    \dd \left(b/(a\wedge\hat{a})\right) = 0 
\end{equation}
where $\hat{a}$ is the conjugate three-form to $a$ defined by Hitchin \cite{Hitchin00a}. This means that $a+\ii\hat{a}$ is a holomorphic three-form defining an integrable $\SL{3,\bbC}$ structure. Thus we have 
\begin{propn}
\label{prop:class0}
An integrable class 0 $\SUs{6}$ structure is equivalent to 
\begin{itemize}
    \item[i)] an integrable $\SL{3,\bbC}$ structure $\tilde{\Omega}=a+\ii\hat{a}$
    \item[ii)] a closed real three-form $c$ and real six-form $\tilde{c}$, a complex constant $\h $ and a real constant $b_0$ with $6b_0^2<1$
\end{itemize}
where there exists a $\mathbf{u}$ such that $\Jpl_u$ takes the form \eqref{eq:class0X} with $b=\frac14 b_0\,a\wedge \hat{a}$. 
\end{propn}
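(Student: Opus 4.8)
The plan is to prove the stated equivalence in both directions, with the forward direction carrying essentially all of the analytic content. Starting from an integrable class~0 $\SUs6$ structure, I first invoke the characterisation of class in \eqref{eq:class 0 or 1}: class~0 guarantees a choice of $\mathbf{u}\in S^2$ for which the associated ECS $L_u$ is type~0. By proposition~\ref{prop:different integrability} (condition~\ref{integrability iii}), integrability of the full $\SUs6$ structure is equivalent to involutivity of this single $L_u$ together with the vanishing of $u^\alpha\mu_\alpha$, which in the frame where $\JUs_u=J_3'$ is just the single moment map. Working in this type-0 frame, corollary~\ref{cor:class 0 classification} and \eqref{eq:class0X} let me write $\Jpl_u=\ee^{c+\tilde c}\ee^{\ii a+\ii b}\cdot\h$, and proposition~\ref{prop: ECS integrability E6} applied to the global type-0 ECS gives $F_\bbC=\dd\alpha=0$, hence $\dd a=\dd c=0$. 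All that remains is to extract the content of $u^\alpha\mu_\alpha=0$.

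The heart of the proof is the evaluation of the moment map \eqref{eq:H condition 2} for $\Jpl_u$ of the above form. Since $c$ is closed and any six-form $\tilde c$ is automatically closed, the real twist $\ee^{c+\tilde c}$ can be absorbed by conjugating the flux-twisted Dorfman derivative; it contributes only through $\dd c$ and $\dd\tilde c$ and so drops out of the moment map, reducing the problem to computing it for $\ee^{\ii a+\ii b}\cdot\h$. This is a direct generalisation of the Calabi--Yau computation \eqref{eq:CY moment map}, now carrying the function $\h$ and the six-form $b$ as extra data, with $a$ taking the place of the imaginary part of $\Omega$. Expanding the adjoint action and the $\Tr$-pairings and integrating by parts, I expect the moment map to organise into three independent pieces whose separate vanishing, for all $V\in\Gamma(E)$, is equivalent to
\begin{equation*}
    \dd\hat a = 0, \qquad \dd\h = 0, \qquad \dd\!\left(b/(a\wedge\hat a)\right)=0,
\end{equation*}
where $\hat a$ is Hitchin's conjugate three-form \cite{Hitchin00a}.

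The interpretation is then immediate. Condition~(iv) in \eqref{eq:type 0 ECS non-linear condition}, namely $\Tr(K_{a}^{2})+6b^{2}<0$, forces $\Tr(K_{a}^{2})<0$, so $a$ is a stable form defining an $\SL{3,\bbC}$ structure with conjugate $\hat a$; combining the involutivity result $\dd a=0$ with the moment-map result $\dd\hat a=0$ shows that $\tilde\Omega=a+\ii\hat a$ is closed, i.e.\ the $\SL{3,\bbC}$ structure is integrable. The condition $\dd\h=0$ says $\h$ is a complex constant, and $\dd(b/(a\wedge\hat a))=0$ says $b=\tfrac{1}{4}b_0\,a\wedge\hat a$ for a real constant $b_0$. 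Feeding this $b$ back into \eqref{eq:type 0 ECS non-linear condition} and using Hitchin's normalisation relating $\Tr(K_{a}^{2})$ to $(a\wedge\hat a)^{2}$ converts the pointwise positivity bound into $6b_0^2<1$. For the converse I run the same steps backwards: given the data (i)--(ii), \eqref{eq:class0X} reconstructs a $\Jpl_u$; the bound $6b_0^2<1$ secures condition~(iv), the closure $\dd a=\dd c=0$ secures involutivity by proposition~\ref{prop: ECS integrability E6}, and the three closure statements secure $u^\alpha\mu_\alpha=0$, so by proposition~\ref{prop:different integrability} the structure is an integrable class~0 $\SUs6$ structure.

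The main obstacle I anticipate is the explicit moment-map computation: tracking the $\ER6$ adjoint action on the twisted tensor $\ee^{\ii a+\ii b}\cdot\h$, keeping the weight and trace pairings straight, and integrating by parts so that the three distinct conditions genuinely decouple rather than mixing into a single combined constraint. A secondary but delicate point is pinning down the exact numerical coefficient in the bound $6b_0^2<1$, which hinges on making Hitchin's normalisation of $K_{a}$, $\hat a$ and $a\wedge\hat a$ consistent with the conventions used in \eqref{eq:type 0 ECS non-linear condition}.
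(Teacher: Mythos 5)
Your proposal is correct and follows essentially the same route as the paper: pick $\mathbf{u}$ with $L_u$ of type 0, use condition \ref{integrability iii} of proposition \ref{prop:different integrability} so that involutivity yields $\dd a=\dd c=0$ (with $F$ trivialised) and the single moment map $u^\alpha\mu_\alpha=0$ yields $\dd\hat{a}=0$, $\dd\h=0$, $\dd\left(b/(a\wedge\hat{a})\right)=0$, after which stability from \eqref{eq:type 0 ECS non-linear condition} gives the integrable $\SL{3,\bbC}$ structure, the constants $\h$ and $b_0$, and the bound $6b_0^2<1$. Your observations that the closed real twist $\ee^{c+\tilde{c}}$ drops out of the moment map and that the converse runs the same steps backwards match the paper's (largely implicit) treatment of those points.
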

\noindent 
Recall that for a type 0 structure integrability implied the $F$ appearing in the twisted Dorfman derivative was trivial in cohomology. Hence we can choose a twist where $F=0$, and this is why no cohomological data needs to be specified in the proposition. When $\mathbf{u}$ is such that we have a class 0 type 3 ECS, this is not immediately clear, since by proposition \ref{prop: ECS integrability E6} the integrability of the ECS allows a non-trivial $F$. However, once one imposes the moment map condition, so that the $\SUs6$ structure is integrable, one again finds $F$ is trivial in cohomology. 

In the next section we will see explicitly how these ingredients are related to the local $\SU{2}$ structure and spinor bilinears defined in section \ref{sec: ECY review}. This actually gives the simplest way to summarise the geometrical structure determined by an integrable class 1 structure. We find
\begin{propn}
\label{prop:class1}
An integrable class 1 $\SUs{6}$ structure is equivalent to 
\begin{itemize}
    \item[i)] an $\SU2\times\GL{1,\bbR}\times\bbR$ structure defined by  $(\omega_\alpha,\zeta_1,\zeta_2)$ where we make the identification  $(s\omega_\alpha,s^{-1}\zeta_1,s^{-1}\zeta_2)\sim(\omega_\alpha,\zeta_1,\zeta_2)$ and $\zeta_1+t\zeta_2\sim\zeta_1$ for $(s,t) \in \GL{1,\bbR}\times \bbR$, and a function $\ee^{3\warp}$ satisfying \eqref{eq:class 1 inv} and \eqref{eq:class 1 flux 2}
    \item[ii)] a six-form potential $\tilde{A}$, and a form-form flux $F$ satisfying \eqref{eq:class 1 flux 1}
\end{itemize}
For general $\mathbf{u}$ we can write the $\SUs6$ structure as 
\begin{equation}
\label{eq:class1X}
    \Jpl_u = \ee^{A+\tilde{A}}\,\ee^{-\ii\omega_{u}\wedge \zeta_{1}}\cdot \ee^{3\lambda}\Omega_{u}\wedge\zeta_2
\end{equation}
with $\omega_u=u^\alpha\omega_\alpha$ and $\Omega_u$ given by \eqref{eq:Omega u}
\end{propn}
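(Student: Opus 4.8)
The plan is to prove Proposition \ref{prop:class1} by combining the local classification of class 1 ECS (Corollary \ref{cor:type 31 classification}) with the integrability conditions of Proposition \ref{prop: ECS integrability E6} and the explicit spinor-bilinear expressions \eqref{eq:general H 1}--\eqref{eq:general V}, thereby translating the abstract generalised-geometric data into the concrete $\SU2$-structure language of section \ref{sec:D=5 geometry}. The starting point is that for a class 1 structure, \emph{every} associated ECS $L_u$ is type 3 (by \eqref{eq:class 0 or 1}), so the underlying $G$-structure on $T$ is $\U2\times(\GL{1,\bbR}^2\ltimes\bbR)$ and the tangent data is packaged in a local $\SU2$ structure $(\omega_\alpha,\zeta_1,\zeta_2)$. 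The first step is therefore to substitute the bilinear forms \eqref{eq:general H 1}--\eqref{eq:general H 2} into the condition $\sin\theta=f=0$ that \emph{defines} class 1, and verify that the resulting $J_\alpha$ indeed give a type 3 $L_u$ for all $\mathbf{u}$, producing the explicit form \eqref{eq:class1X}. This amounts to identifying $\Delta=\ker(\zeta_1)\cap\ker(\zeta_2)$ (appropriately complexified) and matching the $\ee^{A+\tilde A}$ flux twist to the real $\ER6$ transformation $\ee^{c+\tilde c}$ of Corollary \ref{cor:type 31 classification}.

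Next I would impose integrability. By Proposition \ref{prop:different integrability}, it suffices to find a single $\mathbf{u}$ for which $L_u$ is involutive and $u^\alpha\mu_\alpha=0$. The involutivity condition $[\Delta,\Delta]\subseteq\Delta$ together with the complex flux condition $F_\bbC\in H^4(M,\mathcal F^\bullet_1)$ from \eqref{eq:integrability conditions} must be unpacked in terms of $(\omega_\alpha,\zeta_1,\zeta_2)$: the Frobenius integrability of $\Delta$ should reproduce the exterior-derivative constraints on $\zeta_1,\zeta_2$ and the $\omega_\alpha$, while $F_\bbC\in\mathcal F^4_1$ fixes which components of the flux are allowed. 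The key technical move is to then convert the vanishing of the remaining moment map $\mu_3$ (or $u^\alpha\mu_\alpha$) into the differential conditions \eqref{eq:class 1 inv}, \eqref{eq:class 1 flux 1}, \eqref{eq:class 1 flux 2}; since $\mu_3$ is built from $\Tr(\kappa J_3 L_V J_3)$ and $\int T^{\tint}(J_3\cdot V)\cdot\kappa^2$, substituting the bilinear expressions and integrating by parts should yield exactly the equations of \eqref{eq:Killing spinor eqn b} and the class-1 supplementary conditions of footnote \ref{class1 footnote}.

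A recurring subtlety throughout is the scaling and shift redundancy quotiented out in part (i): the identifications $(s\omega_\alpha,s^{-1}\zeta_1,s^{-1}\zeta_2)\sim(\omega_\alpha,\zeta_1,\zeta_2)$ and $\zeta_1+t\zeta_2\sim\zeta_1$ reflect precisely the $\GL{1,\bbR}^2\ltimes\bbR$ factor appearing in Corollary \ref{cor:type 31 classification} and the twist ambiguity \eqref{eq:complex twist ambiguity}. I would verify that $\Jpl_u$ in \eqref{eq:class1X} is invariant under these identifications (the $s$-scaling being absorbed into $\kappa$ and the warp factor, the $t$-shift into a redefinition of the complex twist $\alpha$), which is what makes the data in (i) well-defined rather than an overcounting. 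Establishing that these equivalences exhaust the gauge freedom — and in particular that the $\ee^{c+\tilde c}$ transformation is genuinely \emph{not} generic here, as flagged after Corollary \ref{cor:type 31 classification} — is the cleanest way to pin down the moduli content.

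The main obstacle I expect is the explicit computation of $\Omega_u$ and the derivation of \eqref{eq:Omega u}: writing a general $L_u$ via \eqref{eq:general Lu} mixes $L_{\pm1}$, so the weighted adjoint $\Jpl_u$ must be re-expressed as a clean $\ee^{A+\tilde A}\ee^{-\ii\omega_u\wedge\zeta_1}$ twist acting on a decomposable form $\ee^{3\lambda}\Omega_u\wedge\zeta_2$, and showing that $\Omega_u=u^\alpha\Omega_\alpha$ (or whatever \eqref{eq:Omega u} dictates) transforms correctly under the $\SU2$ action requires careful tracking of the $\e{6(6)}$ adjoint action on the $\rep{351}_{\rep4}$ using the conventions of appendix \ref{app:conventions}. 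This step is where the abstract statement that "$L_u$ is type 3 for all $\mathbf{u}$" becomes a concrete family of complex three-forms, and getting the normalisations and the role of $\zeta_2$ right is the crux; once \eqref{eq:class1X} is verified, matching it against the integrability output of $\mu_3=0$ to land on \eqref{eq:class 1 inv}--\eqref{eq:class 1 flux 2} is comparatively mechanical, deferring as much as possible to the detailed computations in section \ref{sec:SU*(6) in local SU(2)} and the appendices.
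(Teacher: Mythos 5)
Your overall strategy --- substitute $\sin\theta=f=0$ into \eqref{eq:general H 1}--\eqref{eq:general H 2}, read off the type 3 data, impose integrability via proposition \ref{prop:different integrability}, and match the equivalences in part (i) to the stabiliser of corollary \ref{cor:type 31 classification} --- is indeed the paper's strategy, but your identification of the distribution is wrong and would break the argument at the first step. The paper finds $\Delta = T^{1,0}_u \oplus \bbC\,\zeta_1^{\#}$, where $T^{1,0}_u$ is the $+\ii$ eigenbundle of $\omega_{u\,R}$ on the four $\SU2$ directions. Your candidate $\Delta=\ker(\zeta_1)\cap\ker(\zeta_2)$, complexified, is four-complex-dimensional rather than three, and, being the complexification of a real bundle, satisfies $\Delta=\bar{\Delta}$, grossly violating the class bound $\dim(\Delta\cap\bar{\Delta})\leq 1$ of \eqref{eq:class constraint}; even the charitable reading (taking its $\omega_u$-holomorphic part) gives only the two-dimensional $T^{1,0}_u$ and omits $\bbC\,\zeta_1^{\#}$. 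That omission is fatal to precisely the content of the proposition: the one-dimensional intersection $\Delta\cap\bar{\Delta}=\bbC\,\zeta_1^{\#}$ is what makes the structure class 1; the transverse direction is spanned by $\zeta_2^{\#}$, so that $\mathcal{F}^{3}_{2}(\Delta)$ is generated by $\Omega_u\wedge\zeta_2$, whence the form of $\Jpl_u$ in \eqref{eq:class1X}; and the shift equivalence $\zeta_1\sim\zeta_1+t\zeta_2$ descends from the kernel \eqref{eq:equiv omega} of the imaginary twist acting on $\Delta\oplus\mathcal{F}^{2}_{1}(\Delta)$. None of this is visible from your $\Delta$.

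Second, you assert that the moment-map condition "should yield exactly the equations of \eqref{eq:Killing spinor eqn b}". This is a misattribution: \eqref{eq:Killing spinor eqn b}, together with \eqref{eq:Killing vector}, comes from the V-structure $K$ and its compatibility conditions and is \emph{not} implied by integrability of the H-structure --- the $\SUs{6}$ structure does not even determine the Killing vector $\xi$. What class 1 $\SUs{6}$ integrability gives is \eqref{eq:class 1 inv}, equivalent at $\sin\theta=f=0$ to \eqref{eq:Killing spinor eqn}, plus the flux conditions \eqref{eq:class 1 flux 1}--\eqref{eq:class 1 flux 2} (the supplementary conditions of footnote \ref{class1 footnote}, which you do correctly include). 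Relatedly, while starting from condition 4 of proposition \ref{prop:different integrability} (a single involutive $L_u$ plus $u^{\alpha}\mu_{\alpha}=0$) is legitimate, the paper instead uses condition 3 --- involutivity of $L_u$ for \emph{all} $\mathbf{u}$ together with $L_{V}\Jpl=0$ --- precisely because the $\mathbf{u}$-dependence of $F_{\bbC}=F-\ii\,\dd(\omega_u\wedge\zeta_1)\in\Gamma(\mathcal{F}^{4}_{1})$ then converts directly into \eqref{eq:class 1 flux 1} and \eqref{eq:class 1 flux 2} without ever evaluating the moment-map integral. Your plan defers that integral ("should yield"), which in the class 1 case is the genuinely hard step, so the crux of the proof is left open rather than carried out.
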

\noindent
Comparing \eqref{eq:class0X} and \eqref{eq:class1X} for generic $\mathbf{u}$, we see that for a class 0 structure the leading term in $\Jpl_u$ is a function $h$ while for a class 1 structure it is the three-form $\ee^{3\lambda}\Omega_{u}\wedge\zeta_2$. The condition $\dd h=0$ means that $h$ is constant and so we the $\SUs6$ structure cannot smoothly change from class 0 to class 1 as one moves from point to point in the manifold. Thus we have 
\begin{cor}
\label{cor:class change}
There are no (smooth) class-changing integrable $\SUs6$ structures. 
\end{cor}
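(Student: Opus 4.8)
The plan is to detect the class of the $\SUs6$ structure through the leading, scalar component of $\Jpl_u$, and to show that integrability forces this component to be a nonzero constant wherever the structure is class $0$, which is incompatible with its vanishing on any class $1$ locus. Write $\pi_0$ for the projection of $(\det T^*)\otimes\ad\tilde{F}_\bbC$ onto the $\bbC$ summand in the decomposition \eqref{eq:UJbundle}. By proposition \ref{prop:form of ECS} and the accompanying form of $\mathcal{U}_\JUs$, an ECS $L_u$ is of type $0$ exactly when the leading term of a local section $\Jpl_u$ lies in $\bbC$, i.e. $\pi_0(\Jpl_u)\neq0$, whereas for type $3$ the leading term is a three-form and $\pi_0(\Jpl_u)=0$. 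Combined with \eqref{eq:class 0 or 1}, this yields the pointwise characterisation $\class\Jpl|_p=0\iff\pi_0(\Jpl_u)|_p\neq0$ for some $\mathbf{u}\in S^2$, and $\class\Jpl|_p=1\iff\pi_0(\Jpl_u)|_p=0$ for every $\mathbf{u}$. In particular the class $0$ locus $U_0:=\{p:\class\Jpl|_p=0\}$ is open, being by definition the locus where $\anchor(L_1\oplus L_{-1})$ attains maximal (full) rank, an open condition.

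Next I would feed in integrability. Restricted to $U_0$ the structure is an integrable class $0$ $\SUs6$ structure (the moment-map condition $\mu_3=0$ for all $V$ reduces, by the arbitrariness of $V$, to the local differential conditions established in the derivation of proposition \ref{prop:class0}, exactly as around \eqref{eq:CY moment map}). Hence on each connected component $W$ of $U_0$ there is a constant $\mathbf{u}_W$ with $\Jpl_{u_W}=\ee^{\alpha+\beta}\cdot\h$ as in \eqref{eq:class0X} and $\dd\h=0$. Since $\alpha,\beta$ are forms of degree $\geq3$ acting nilpotently and raising form degree, one has $\pi_0(\Jpl_{u_W})=\h$, so $\pi_0(\Jpl_{u_W})$ is \emph{constant} on $W$; it is moreover nonzero, since $\h=0$ would give $\Jpl_{u_W}=\ee^{\alpha+\beta}\cdot\h=0$, contradicting that an $\SUs6$ structure is a non-vanishing section.

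Finally I would run the continuity argument. Suppose, for contradiction, that on a connected $M$ the structure is class-changing, so both $U_0$ and $U_1:=\{p:\class\Jpl|_p=1\}$ are nonempty. Pick a connected component $W$ of the open set $U_0$; as $W$ is clopen in $U_0$ we have $\bar W\cap U_0=W$, so $\partial W\subset M\setminus U_0=U_1$, and $\partial W\neq\emptyset$ (otherwise $W$ would be clopen in the connected $M$, forcing $W=M$ and $U_1=\emptyset$). Fix the constant $\mathbf{u}_W$ from the previous step; then $\pi_0(\Jpl_{u_W})$ is a globally smooth function on $M$ taking the nonzero constant value $\h$ on $W$, so by continuity $\pi_0(\Jpl_{u_W})|_p=\h\neq0$ for any $p\in\partial W$. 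But $p\in\partial W\subset U_1$ forces $\pi_0(\Jpl_u)|_p=0$ for every $\mathbf{u}$, in particular for $\mathbf{u}_W$. This contradiction shows $U_1=\emptyset$ (equivalently $U_0=\emptyset$), so the class is constant on $M$.

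The main obstacle is precisely the behaviour across the class-change locus: the distinguished $\mathbf{u}$ and the constancy $\dd\h=0$ of proposition \ref{prop:class0} are only guaranteed on the class $0$ region, so the argument cannot be purely algebraic and must combine (i) the pointwise detection of class by $\pi_0(\Jpl_u)$, (ii) the integrability statement $\dd\h=0$ on $U_0$, and (iii) continuity of the fixed-$\mathbf{u}$ section $\Jpl_{u_W}$ up to $\partial W$. The remaining points are routine: that the moment-map conditions localise to pointwise differential equations, and that $\pi_0(\ee^{\alpha+\beta}\cdot\h)=\h$ for the nilpotent form-twists, both follow directly from the structure already set up in the excerpt.
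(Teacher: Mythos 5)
Your proposal is correct and takes essentially the same route as the paper: the paper's own argument likewise compares \eqref{eq:class0X} and \eqref{eq:class1X}, noting that the leading scalar component of $\Jpl_u$ is a function $h$ which integrability ($\dd h=0$, proposition \ref{prop:class0}) forces to be a nonzero constant on the class 0 locus, while it vanishes identically for class 1, so a smooth structure cannot change class. Your write-up simply makes explicit the topological details (openness of the class 0 locus, localisation of the moment map via compactly supported $V$, and the continuity argument at the boundary of a connected component) that the paper leaves implicit.
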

\noindent
One can on the other hand have a smooth type-changing ECS, as it only specifies $\Jpl$ up to a local $\bbC^*$ action.

\subsection{ECS and \texorpdfstring{$\SUs6$}{SU*(6)} structures and the local \texorpdfstring{$\SU2$}{SU(2)} structure}\label{sec:SU*(6) in local SU(2)}

We are now in a position to see how this classification applies to the general expressions \eqref{eq:general H 1}, \eqref{eq:general H 2} for the $J_{\alpha}$ in terms of the local $\SU{2}$ structure that were found in \cite{Ashmore:2016qvs}. Since every $\SUs6$ structure admits a $\USp6$ structure, the objects defined in \eqref{eq:general H 1}, \eqref{eq:general H 2} are in fact generic, and so the $\SU2$ bilinears give a good parameterisation of any ECS.

One can ask under what conditions we have a class 0 $\SUs6$ structure. From \eqref{eq:class 0 or 1} we can see that this equates to the condition that there exists a type 0  $L_{u}$, or equivalently, from \eqref{eq:UJbundle}, that  $\Jpl_u= \ee^{\alpha+\beta}\h $ for some function $\h $. Examining \eqref{eq:general H 1}, \eqref{eq:general H 2}, we see that this is true provided either $\sin\theta$ or $f$ is non vanishing. That is, we have
\begin{equation}
    \class \Jpl = \begin{cases}
    1 & \sin\theta = f = 0 \\
    0 & \text{otherwise}
    \end{cases}
\end{equation}
We will study these two cases separately.

\subsubsection*{Class 0}

Let us first focus on the class 0 case. Using \eqref{eq:general Lu} and the expressions for the $J_{\alpha}$ given in \eqref{eq:general H 1}, \eqref{eq:general H 2}, we find that the generic class 0 ECS is given by
\begin{equation}
    L_{u} = \ee^{A+\tilde{A}}\ee^{\Lambda_{u}\quotient f_{u}}\cdot T_{\bbC}
\end{equation}
where we have included the form-field gauge potentials $A$ and $\tilde{A}$ and defined
\begin{align}\label{eq:generic LX and f}
    \Lambda_{u} &= \begin{cases}
     V-\frac{\ii(u^{1} - \ii u^{2})}{2(1+u^{3})}\Lambda - \frac{\ii(u^{1} + \ii u^{2})}{2(1-u^{3})}\bar{\Lambda} \\
    \bar{\Lambda} \\
    \Lambda
    \end{cases}\!\!\! f_{u} = \begin{cases}
    \sin\theta - \frac{\ii f (u^{1}-\ii u^{2})}{2(1+u^{3})} - \frac{\ii \bar{f} (u^{1}+\ii u^{2})}{2(1-u^{3})} & u^{3} \neq \pm 1 \\
     \bar{f} & u^{3} = 1 \\
     f & u^{3} = -1
    \end{cases}
\end{align}
These expressions hold for generic $\mathbf{u}$ and generic points on the manifold where $f_{u} \neq 0$. Wherever 
\begin{equation}
\label{eq:type3u}
    \mathbf{u} = r^{-1}(\im f, -\re f, \sin\theta)|_{p}
\end{equation}
where $r^2 = \sin^2\theta + f\bar{f}$ one has $f_u=0$ and the ECS $L_{u}|_{p}$ degenerates to a type 3 (but still class 0) structure.

Comparing with \eqref{eq:type03L} we can read off
\begin{equation}
\begin{aligned}
    \alpha &= \left(A+\re(\Lambda_u/f_u)\right) + \ii \im(\Lambda_u/f_u) \\
    \beta &= \tilde{A} - \tfrac{1}{2}A\wedge (\Lambda_u/f_u)
\end{aligned}
\end{equation}
and hence from proposition \ref{prop:classification} we have that $a_u=\im(\Lambda_u/f_u)$ defines an $\SL{3,\bbC}$ structure. As we vary $\mathbf{u}$, this 3-form changes and so we can compare the induced $\SL{3,\bbC}$ structures. If we do not impose integrability of the $\SUs{6}$ structure, then $\sin\theta$ and $f$ are two independent functions on the manifold and in general, $\im(\Lambda_{u} \quotient f_{u})$ will define inequivalent $\SL{3,\bbC}$ structures for different values $\mathbf{u}$. Imposing $\SUs6$ integrability, however, we find that involutivity of the $L_{u}$ $\forall \, \mathbf{u}$ is equivalent to
\begin{equation}\label{eq:class 0 involutive all u}
    \dd(\Lambda\quotient f) = -F \qquad \dd(V\csc\theta + \Lambda\quotient f) = 0 \qquad \dd(f\csc\theta) = 0
\end{equation}
In particular, we see that there is some constant $c$ such that $f=c\sin\theta$. This is enough to show that the $\SL{3,\bbC}$ structures induced from $\im(\Lambda_{u}\quotient f_{u})$ will be equivalent up to rescaling by a non-vanishing constant. Concretely, one finds a closed holomorphic three-form, as in proposition \ref{prop:class0}, given by 
\begin{equation}
    \tilde{\Omega} = \begin{cases} 
        V - \sin\theta\re(\LX/f) + \ii r \im(\LX/f) & f\neq 0 \\
        \LX/\sin\theta & f=0
        \end{cases}
\end{equation}
Moreover, imposing this we find that the right hand side of \eqref{eq:type3u} is a constant vector. Then if \eqref{eq:type3u} holds at a point $p$, it will hold everywhere on $M$ and the solution will be globally type 3.

Finally, imposing the condition that $L_{V}\Jpl = 0$ for $V\in \Gamma(L_{1})$ gives
\begin{equation}
    \dd(\ee^{3\lambda}f) = 0
\end{equation}
which, combined with \eqref{eq:class 0 involutive all u}, gives the subset of the Killing spinor equations in \eqref{eq:Killing spinor eqn}.

\subsubsection*{Class 1}

Let us now consider the class 1 case. Here we have $\sin\theta = f =0$ and so the expressions for the $J_{\alpha}$ in \eqref{eq:general H 1}, \eqref{eq:general H 2} simplify. In terms of the local $\SU2$ structure given in terms of the triplet of 2-forms $\omega_{1},\omega_{2},\omega_{3}$, and the 1-forms $\zeta_{1},\zeta_{2}$, we have 
\begin{equation}
    J_{\alpha} = \tfrac{1}{2}\kappa \omega_{\alpha\, R} + \tfrac{1}{2}\kappa \big(\omega_{\alpha}\wedge \zeta_{1} - \omega_{\alpha}^{\#}\wedge \zeta_{1}^{\#} \big)
\end{equation}
where we have realigned\footnote{We send $J_{3} \rightarrow - J_{3}$, $J_{2} \rightarrow - J_{1}$, $J_{1} \rightarrow - J_{2}$. This is an $\SU{2}$ rotation so defines an equivalent $\SUs{6}$ structure} the basis of $J_{\alpha}$ relative to \cite{Ashmore:2016qvs} so that they match with the $\SU{2}$ structure. 

It is clear then that choosing a $\JUs_{u}$ is equivalent to choosing some distinguished complex structure $\omega_{u\, R} = u^{\alpha}\omega_{\alpha\, R}$, three-form $u^{\alpha}\omega_{\alpha}\wedge \zeta_1$ and three-vector $u^{\alpha}\omega^\#_{\alpha}\wedge \zeta^\#_1$. Using these, and including the form-field gauge potentials $A$ and $\tilde{A}$, we find that the ECS for any $\mathbf{u}$ is
\begin{align}
    L_{u} &= \ee^{A+\tilde{A}}\,\ee^{-\ii\omega_{u}\wedge \zeta_{1}}\cdot [\Delta \oplus \mathcal{F}^{2}_{1}(\Delta)] 
    \label{eq:SU2 type 0}\\
    \Delta &= T^{1,0}_{u} \oplus \bbC \zeta_{1}^{\#}
\end{align}
where $T^{1,0}_{u} \subset T_{\bbC}$ has a $+\ii$ eigenvalue under the action complex structure $\omega_{u\,R}$ and $\bbC \zeta_{1}^{\#}$ is defined by having zero wedge product with $u^{\alpha}\omega^\#_{\alpha}\wedge \zeta^\#_1$. Note that in writing $L_u$, the imaginary exponential is not uniquely determined since there is a kernel for the action on $\Delta\oplus\mathcal{F}^2_1(\Delta)$. The kernel is the space of sections $\mathcal{F}^3_1\cap\overline{\mathcal{F}^3_1}$, that is elements of the form  $\gamma\wedge\zeta_2$ where $\gamma$ is a $(1,1)$ form. Thus we should identify 
\begin{equation}
\label{eq:equiv omega}
    \omega_{u}\wedge \zeta_{1} \sim \omega_{u}\wedge \zeta_{1} + \gamma\wedge\zeta_2
\end{equation}
Comparing with \eqref{eq:type03L} we can read off
 \begin{equation}
 \begin{aligned}
    \alpha &= A -  \ii \omega_u \wedge \zeta_1 \\
    \beta &= \tilde{A} + \tfrac{1}{2}\ii A\wedge \omega_u \wedge \zeta_1
\end{aligned}
\end{equation}
and hence we see that the general solution of the constraint \eqref{eq:class1condition} is $a=-\omega_u\wedge\zeta_1$ up to the equivalence \eqref{eq:equiv omega}. It is then relatively straightforward to show that the group that stabilises the pair $(\Delta,a)$, up to the equivalence \eqref{eq:equiv omega}, is indeed $\U2\times(\GL{1,\bbR}^2\ltimes\bbR)$ as claimed in corollary \ref{cor:type 31 classification}. The $\GL{1,\bbR}^2$ factors act as $\omega_u\to s\omega_u$, $\zeta_1\to s^{-1}\zeta_1$ and $\zeta_2\to\ s'\zeta_2$. The $\bbR$ action is the shift  $\zeta_1\to\zeta_1+t\zeta_2$ for some function $t$, which in \eqref{eq:equiv omega} corresponds to $\gamma=t\omega_u$. 

The corresponding $\SUs6$ structure is given by
\begin{equation}
    \Jpl_u = \ee^{A+\tilde{A}}\,\ee^{-\ii\omega_{u}\wedge \zeta_{1}}\cdot \ee^{3\lambda}\Omega_{u}\wedge\zeta_2 
\end{equation}
where
\begin{equation}
\label{eq:Omega u}
    \Omega_{u} = \sqrt{1-(u^{3})^{2}}\left( \omega_{3} + \frac{u^{1}-\ii u^{2}}{2(1+u^{3})}(\omega_{1} + \ii \omega_{2}) - \frac{u^{1}+\ii u^{2}}{2(1-u^{3})}(\omega_{1}-\ii \omega_{2}) \right)
\end{equation}
Note that $\Jpl_u$ fixes a conventional $\SU{2}\times\GL{1,\bbR}\times\bbR$ structure on $T$. Involutivity of $L_u$ $\forall \mathbf{u}$ together with $L_{V}\Jpl = 0$ for $V\in \Gamma(L_{1})$ implies 
\begin{equation}
\label{eq:class 1 inv}
    \dd \big( \ee^{3\lambda} \omega_\alpha \wedge\zeta_2 \big) = 0
\end{equation}
which is equivalent (given $f=\sin\theta=0$) to the conditions \eqref{eq:Killing spinor eqn}. It also implies that $F_\bbC=F-\ii\dd(\omega_u\wedge\zeta_1)\in\Gamma(\mathcal{F}^4_1)$ for all $\mathbf{u}$ or equivalently 
\begin{gather}
    (\omega_\alpha\wedge \zeta_2)^\#\lrcorner F
        = \tfrac12 \epsilon_{\alpha\beta\gamma}
            (\omega_\beta\wedge \zeta_2)^\#\lrcorner \dd(\omega_\gamma\wedge\zeta_1) \label{eq:class 1 flux 1} \\*[0.3em]
    (\omega_\alpha\wedge \zeta_2)^\#\lrcorner
       \dd(\omega_\beta\wedge\zeta_1) 
    + (\omega_\beta\wedge \zeta_2)^\#\lrcorner
       \dd(\omega_\alpha\wedge\zeta_1) 
       = \tfrac{2}{3}\delta_{\alpha\beta}
         (\omega_\gamma\wedge \zeta_2)^\#\lrcorner
       \dd(\omega_\gamma\wedge\zeta_1) \label{eq:class 1 flux 2}
\end{gather}
where here $F=\dd A$ is the physical flux. Note that the condition \eqref{eq:class 1 inv} means that these equations are invariant under $\zeta_1\to\zeta_1+t\zeta_2$ as expected. We see, in particular, that the full $\SU2\times\GL{1,\bbR}\times\bbR$ structure is not integrable as a conventional $G$-structure, although the larger structure defined by the the set of 3-forms $\omega_\alpha \wedge\zeta_2$ is. 

Including the V-structure $K$, fixes $\zeta_1$ and $\zeta_2$ and hence a global conventional  $\SU2$-structure. As discussed in \cite{Gauntlett:2004zh}, the corresponding additional supersymmetry conditions \eqref{eq:Killing spinor eqn b} imply that the six-dimensional space has is locally a product of $\bbR$ with a real line bundle over a four-dimensional hyperk\"ahler base. The $\bbR$ factor is spanned by $\zeta_2^\#$ and the fiber of the line bundle by $\zeta_1^\#$, while the $\omega_\alpha$ define the hyperk\"ahler structure on the base. The full solution can be interpreted as the back-reacted geometry of an M5 brane wrapped on the line bundle fiber. Thus physically a class 1 $\SUs6$ structure should be viewed as a particular to six-dimensional generalisation of a four-dimensional hyperk\"ahler structure, that captures the geometry of the wrapped brane.

\section{Moduli of H-structures}\label{sec:moduli of H-structures}

As previously mentioned, an $\SUs{6}$ structure does not define a generalised metric and hence does not define a supergravity background. However, much as the moduli space of a Calabi--Yau locally splits into K\"ahler and complex moduli, the moduli space of a $\USp{6}$ structure splits locally into H-structure and V-structure moduli. Therefore, by studying the moduli of the H-structure, we will be able to retrieve some information about the spectrum of the effective theory on $\bbR^{4,1}$. From the classification of the previous section we have shown that H-structures in M theory are characterised by their class and that, furthermore, the class is a global notion in that it is the same at all points on the manifold. Thus we expect two different moduli spaces problems, one for class 0 and one for class 1.  

The moduli space $\mathcal{M}_{H}$ was described in \cite{Ashmore:2015joa} in terms of a hyperk\"ahler quotient of the space of $\SUs{6}$ structure by generalised diffeomorphisms. This description comes from the condition for integrability given by \ref{integrability i} in proposition \ref{prop:different integrability} as the vanishing of the triplet of moment maps $\mu_{\alpha}$. Here, we will instead exploit the structure implied by the integrability conditions \ref{integrability iii} of proposition \ref{prop:different integrability} and choose a particular ECS. This will lead to explicit statements about the moduli of the structure in terms of natural cohomology groups. It comes at the cost of losing the explicit hyperk\"ahler construction, since only one of the K\"ahler structures is manifest. However, the moduli space should be the same independent particular choice of ECS that we make. We will see that reinstating the $\SU{2}$ symmetry thus implies an interesting structure on the cohomologies defined by these different choices.

Let us first review the hyperk\"ahler geometry of the moduli space as given in \cite{Ashmore:2015joa}. Recall from the discussion around \eqref{eq:U(6) Kahler Potential} that the space of $\SUs{6}$ structures $\ZSUs$ has a natural hyperk\"ahler structure. The moduli space $\mathcal{M}_{H}$ is defined to be the space of integrable $\SUs{6}$ structures up to generalised diffeomorphisms. From condition \ref{integrability i} in proposition \ref{prop:different integrability}, integrability is given by the vanishing of the three maps $\mu_{\alpha}$ and hence
\begin{equation}
    \mathcal{M}_{H} = \{\Jpl \in \ZSUs \,|\, \mu_{\alpha} = 0\} \quotient \GDiff = \ZSUs\qqquotient \GDiff
\end{equation}
This construction, called a hyperk\"ahler quotient, keeps the hyperk\"ahler nature of $\mathcal{M}_{H}$ manifest. In fact $\ZSUs$ and hence also  $\mathcal{M}_{H}$ has the structure of a hyperk\"ahler cone and the physical moduli space of hypermultiplet scalars is given by
\begin{equation}
    \mathcal{M}_{\text{hyper}} = \mathcal{M}_{H}\quotient \bbH^{*} = \mathcal{M}_{H} \quotient (\SU{2} \times \bbR^{+})
\end{equation}
This is because there are deformations of the $\SUs{6}$ structure that do not deform the generalised metric and hence should not be regarded as physical moduli. Specifically, the $\SU{2}$ comes from the ambiguity in the definition of the orthonormal internal spinors $\eta^{i}$ in \eqref{eq:Killing Spinor Form} which can be absorbed into the definition of the external component of the full 11 dimensional spinor. Hence this $\SU{2}$ symmetry is related to the R-symmetry of the 5 dimensional $\mathcal{N} = 1$ theory. The $\bbR^{+}$ corresponds to shifting the warp factor $\warp$ by a constant, but this can be absorbed into the definition of the external flat metric. This defines the hyperk\"ahler cone structure of $\ZSUs$ which becomes the internal $\SU{2}$ symmetry of the $J_{\alpha}$ along with $\bbR^{+}$ rescalings of $\kappa$. However, this descends to $\mathcal{M}_{H}$ since the action of $\GDiff$ commutes with the $\bbH^{*}$ action.

In general, a hyperk\"ahler cone $\mathcal{M}$ can be viewed as a real cone over a tri-Sasaki space $\mathcal{L}$ \cite{Swann1991,Boyer1998}. This is an $\SO{3}$ bundle over the quarternionic K\"ahler base $\mathcal{M}\quotient \bbH^{*}$ and is defined by setting the hyperk\"ahler potential (in our case given by $\mathcal{K}$ in \eqref{eq:U(6) Kahler Potential}) to a constant. Selecting some $\U{1}\subset \SO{3}$, we can consider the quotient $\mathcal{L}\quotient \U{1}$. This defines a K\"ahler space called the \emph{twistor space}
\begin{equation}
    \twistor = \mathcal{L}\quotient \U{1} = \mathcal{M}\quotient \bbC^{*}
\end{equation}
The twistor space defines an $S^{2}$ bundle over the base space $\mathcal{M}\quotient \bbH^{*}$ via the following commuting diagram.
\begin{equation}
    \begin{tikzcd}[column sep=huge, row sep = huge]
        \mathcal{M} \arrow[r] \arrow[rd] & \twistor = \mathcal{L}\quotient \U{1} = \mathcal{M}\quotient \bbC^{*} \arrow[d] \\
        &  \mathcal{M}\quotient \bbH^{*}
    \end{tikzcd}
\end{equation}

For the moduli space of $\SUs{6}$ structures the associated twistor space $\twistor_{H}$ has a natural description in terms of ECS. Indeed, the conditions for integrability given by \ref{integrability iii} of proposition \ref{prop:different integrability} allows us to write $\mathcal{M}_{H}$ in a different, but equivalent way. Fixing some $\mathbf{u} \in S^{2}$, we have
\begin{equation}
    \mathcal{M}_{H} = \{ X \in \ZSUs \,|\, L_{u} \text{ involutive}, \; u^{\alpha}\mu_{\alpha} = 0 \} \quotient \GDiff = \ZinvSUs\qquotient \GDiff
\end{equation}
where $\ZinvSUs = \{X \in \ZSUs\,|\, L_{u} \text{ involutive}\}$. While this K\"ahler quotient construction hides the manifest hyperk\"ahler structure, we can now exploit a general result about group actions that preserve a K\"ahler structure: the space can be viewed as either a K\"ahler quotient, or a quotient by the complexified group $\GDiff_{\bbC}$\footnote{One has to be careful in defining this complexified group since the natural complexification is not well defined. What we mean by $\GDiff_{\bbC}$ is the group generated by $\rho_{V}, \mathcal{I}\rho_{V}\in \Gamma(T\ZSUs)$, where $\mathcal{I}$ is the complex structure on $\ZSUs$.} (see for example \cite{Hitchin:1986ea}). We can therefore write the moduli space of H-structures in the convenient form\footnote{In fact, one really needs to consider the space $\ZinvSUs^{\text{ps}}$ of `polystable' points in $\ZinvSUs$. This has interesting links to geometric invariant theory but we won't go into more detail. Here, we are just interested in the infinitesimal structure of the moduli space for which this technicality is not important. The links between ECS and geometric invariant theory were explored in more detail in \cite{Ashmore:2019qii,Ashmore:2019rkx}.}
\begin{equation}
    \mathcal{M}_{H} = \ZinvSUs\quotient \GDiff_{\bbC}
\end{equation}
Finally, from the discussion around \eqref{eq:equiv SU*(6) for U*(6)}, we know that choosing an ECS $L\subset E_{\bbC}$ defines the $\SUs{6}$ structure $\Jpl$ up to a complex scaling. Hence, the twistor space can be defined via ECS as
\begin{equation}\label{eq:H-structure twistor space}
    \twistor_{H} = \mathcal{M}_{H}\quotient \bbC^{*} = \{ L_{u} \text{ an ECS}\,|\, L_{u} \text{ involutive}\}\quotient \GDiff_{\bbC} = \ZECSinv\quotient \GDiff_{\bbC}
\end{equation}
where we have defined $\ZECSinv$ to be the space of integrable ECS.

The space \eqref{eq:H-structure twistor space} is now in a form that allows analysis very similar to the analysis of the moduli of conventional complex structures \cite{kodaira2006complex}. In the following sections we will use the deformation theory of ECS to get a local dimension of the twistor space around an arbitrary point. We should highlight that $\twistor_{H}$ is not the moduli space of ECS, but rather is the space of structures satisfying the additional condition that the moment map $u^\alpha\mu_\alpha$ vanishes. The moduli space of ECS is given by $\mathcal{M}_{\text{ECS}} = \ZECSinv\quotient \GDiff_{\bbR}$ which is infinite dimensional and does not have nice properties. By imposing the vanishing of the moment map, the moduli we find  are, in general, moduli of the full $\SUs{6}$ structure rather than of the associated ECS. Thus, once we have analysed the structure of $\twistor_{H}$, we should project on the $S^{2}$ fiber to recover the physical moduli space $\mathcal{M}_{\text{hyper}}$. This $S^{2}$ fiber has a natural interpretation as the $\SU{2}/\U{1} = S^{2}$ of ECS associated to any $\SUs{6}$ structure, as was laid out previously in section \ref{sec: SU(2) symmetry}. Projecting on this $S^{2}$ corresponds to removing one complex modulus from the infinitesimal analysis which we will explain in more detail in the following. Finally we note that the resulting space $\mathcal{M}_{\text{hyper}}$ should be independent of the choice of $u^\alpha$ we made in defining \eqref{eq:H-structure twistor space}.

\subsection{Deformation Theory and Moduli of \texorpdfstring{$\SUs{6}$}{SU*(6)} Structures}

The form of \eqref{eq:H-structure twistor space} means we can understand the local structure of $\twistor_{H}$ by analysing the deformation theory of ECS. By identifying deformations up to local complexified generalised diffeomorphisms, we will find a finite-dimensional result in terms of natural cohomology groups. The dimension of these gives a local dimension of $\twistor_{H}$ which we can use to find the moduli of $\mathcal{M}_{H}\quotient \bbH^{*}$ by removing a particular complex modulus associated to the $S^{2}$ fiber of $\twistor_{H} \rightarrow \mathcal{M}_{H}\quotient \bbH^{*}$. Let us start by outlining the general deformation theory.

At a point $p\in M$, the space of almost ECS is given by the coset
\begin{equation}
    Q_{\bbR^{+}\times \Us{6}} = E_{6(6)}\cdot \JUs_{0} =  \mathrm{E}_{6(6)}\quotient \Us{6} = \mathrm{E}_{6,\bbC}\quotient P
\end{equation}
where $\JUs_{0}$ is some fixed ECS and $P$ is the parabolic subgroup that stabilises $L_{1}$
\begin{equation}
    P = \Stab L_{1} = \GL{6,\bbC}\ltimes \bbC^{21}
\end{equation}
By considering all possible $p\in M$ we find that $\JUs$ must be a section of the bundle
\begin{equation}
    Q_{\bbR^{+}\times \Us{6}}\longrightarrow \mathcal{Q}_{\bbR^{+} \times \Us{6}} \longrightarrow M
\end{equation}
Infinitesimally, the deformations are given by sections of the bundle
\begin{equation}
    \mathfrak{e}_{6,\bbC}\quotient \mathfrak{p} \longrightarrow \mathfrak{Q}_{\bbR^{+}\times \Us{6}} \longrightarrow M
\end{equation}
In practice, we choose an embedding $\mathfrak{e}_{6,\bbC}\quotient \mathfrak{p} \hookrightarrow \mathfrak{e}_{6,\bbC}$. Then, given some section $A\in \Gamma(\mathfrak{Q}_{\bbR^{+}\times \Us{6}})$, we can define the deformed $L_{1}$ bundle $L_{1}'$ by
\begin{equation}
    L_{1}' = (1+\epsilon A)\cdot L_{1}
\end{equation}
for some small parameter $\epsilon \ll 1$ and we view $A$ as a map $:L_{1}\rightarrow E_{\bbC}\quotient L_{1}$. Through the embedding $\mathfrak{e}_{6,\bbC}\quotient \mathfrak{p} \hookrightarrow \mathfrak{e}_{6,\bbC}$, we get an embedding $E_{\bbC}\quotient L_{1}\hookrightarrow E_{\bbC}$.

By assumption, the original bundle $L_{1}$ is involutive and hence the intrinsic torsion vanishes. For a generic deformation $A$, $L_{1}'$ will have some non-zero intrinsic torsion that appears as an obstruction to the involutivity of the bundle with respect to the (flux-twisted) Dorfman derivative. By expanding the involutivity condition to first order in $\epsilon$, we find a map
\begin{equation}
    \dd_{2}:\Gamma(\mathfrak{Q}_{\bbR^{+}\times \Us{6}}) \longrightarrow \Gamma(W_{\tint}^{\bbR^{+}\times \Us{6}})
\end{equation}
The integrable deformations are determined by the kernel of this map. That is, $L_{1}'$ is integrable if and only if $A\in \ker \dd_{2}$.

We also have a notion of trivial deformation given by complexified generalised diffeomorphisms. To linear order, these are given by the action of the Dorfman derivative along some complexified vector $V\in \Gamma(E_{\bbC})$. That is, $L_{1}'$ is said to be a trivial deformation if
\begin{equation}
    L_{1}' = (1+\epsilon L^{F}_{V})L_{1} \quad \text{some } V\in \Gamma(E_{\bbC})
\end{equation}
This defines a second map
\begin{equation}
    \dd_{1}: \Gamma(E_{\bbC}) \longrightarrow \Gamma(\mathfrak{Q}_{\bbR^{+}\times \Us{6}})
\end{equation}
where a deformation $A$ is trivial if an only if $A\in \image \dd_{1}$. It is an easy check that any trivial deformation is involutive to linear order in $\epsilon$. Indeed,
\begin{align}
    \begin{split}
        L^{F}_{W+\epsilon L^{F}_{V}W}(W' + \epsilon L^{F}_{V}W') &= L_{W}W' + \epsilon(L^{F}_{L^{F}_{V}W}W' + L^{F}_{W}L^{F}_{V}W') + O(\epsilon^{2}) \\
        &= (1+\epsilon L^{F}_{V})L_{W}W' + O(\epsilon^{2})
    \end{split}
\end{align}
This implies that $\dd_{2}\circ \dd_{1} = 0$, and hence we have a three-term complex
\begin{equation}
    \Gamma(E_{\bbC}) \xrightarrow{ \quad \dd_{1} \quad} \Gamma(\mathfrak{Q}_{\bbR^{+}\times \Us{6}}) \xrightarrow{ \quad \dd_{2}\quad} \Gamma(W_{\tint}^{\bbR^{+}\times \Us{6}}) \label{eq:E6 deformation complex 1}
\end{equation}
where the cohomology of \eqref{eq:E6 deformation complex 1} gives the tangent space $T_{\JUs}\twistor_{H}$.

To get the physical moduli, we need to remove the modulus associated to the $S^{2}$ fiber. From section \ref{sec: SU(2) symmetry}, and particularly the discussion around \eqref{eq:general Lu}, it is clear that this $S^{2}$ is generated by $\eta (\kappa^{-1}J_{+})$ for some constant $\eta\in \bbC$ and that deformations of this kind are always integrable provided we start at a fully integrable $\SUs{6}$ structure. We must therefore remove the complex modulus associated to the image of $\kappa^{-1}J_{+}$ under the projection $\ad\tilde{F}_{\bbC} \rightarrow \mathfrak{Q}_{\bbR^{+}\times \Us{6}}$.

\subsection{Class 0 Structures}\label{sec:class 0 moduli}

A generic ECS associated to a class 0 $\SUs6$ structure is of type 0\footnote{Generic in the sense that any the space of $L_{1}$ with non-surjective projection onto $T$ are measure 0 in the Grassmannian of all $L_{1}$.} and so is of the form
\begin{equation}
    L_{1} = \ee^{\alpha+\beta}\cdot T_{\bbC} \qquad \alpha\in \Omega^{3}(M)_{\bbC},\, \beta \in \Omega^{6}(M)_{\bbC} \label{eq:E6 type 0}
\end{equation}
The conditions arising from definition \ref{def:E6 ECS} put algebraic conditions on $\alpha,\beta$ which we derived in \eqref{eq:type 0 ECS non-linear condition}. In particular, we saw that $\im \alpha$ defines an $\SL{3,\bbC}$ structure as in \cite{Hitchin00}. It is easy to see from \eqref{eq:integrability conditions} that this is an integrable $\bbR^{+}\times \Us{6}$ structure iff $\dd\alpha = 0$.

To study the deformations we can choose the following embeddings:
\begin{align}
    E_{\bbC}\quotient L_{1} &= \ext^{2}T^{*}_{\bbC}\oplus \ext^{5}T^{*}_{\bbC} \\
    \mathfrak{Q}_{\bbR^{+}\times \Us{6}} &= \ext^{3}T^{*}_{\bbC}\oplus \ext^{6}T^{*}_{\bbC}
\end{align}
Then a generic deformation of $L_{1}$ of the form \eqref{eq:E6 type 0} will be
\begin{equation}
    L_{1}' = (1+\epsilon(a+b))L_{1} = \ee^{\alpha+\beta+\epsilon (a+\tilde{b})}T_{\bbC}
\end{equation}
where the formula on the right hand side is to be taken to first order in $a,b$, and where $\tilde{b} = b- \frac{1}{2}a\wedge\alpha$. From this it is clear that
\begin{equation}
    L_{1}' \text{ integrable} \quad \Leftrightarrow \quad \dd a = 0
\end{equation}
since the condition $\dd b=0$ is trivial. We then want to consider when a deformation is trivial. That is, when we can write it in the form\footnote{As we saw in section \ref{sec:closer analysis of ECS}, the flux must be in the trivial cohomology class for class 0 structures. Hence we can use the untwisted Dorfman derivative in this case.}
\begin{equation}
    L_{1}' = (1+\epsilon L_{V})L_{1} \qquad \text{some } V\in E_{\bbC}
\end{equation}
Writing $V = \ee^{\alpha + \beta}(V+\omega+\sigma)$, we find that the trivial $L_{1}'$ can be written as
\begin{equation}
    L_{1}' = \ee^{\alpha+\beta -\dd\omega - \dd\tilde{\sigma}}T_{\bbC}
\end{equation}
where $\tilde{\sigma} = \sigma + \frac{1}{2}\alpha\wedge \omega$. Hence, the deformation is trivial if and only if $a,b$ are exact. From this it is clear to see that the deformations are counted by the complex de Rham cohomology groups
\begin{equation}
    T_{\JUs}\twistor_{H} = H^{3}(M,\bbC)\oplus H^{6}(M,\bbC)
\end{equation}

We now must remove the modulus associated to $J_{+}$ to find the physical moduli. To do so, we need to know how $\kappa^{-1}J_{+}$ projects onto $\mathfrak{Q}_{\bbR^{+}\times \Us{6}} = \ext^{3}T^{*}_{\bbC}\oplus \ext^{6}T^{*}_{\bbC}$. Fortunately, this projection is quite simple and we just take the 3 and 6-form components of $\ee^{-\alpha-\beta}\kappa^{-1}J_{+}\ee^{\alpha+\beta}$. Moreover, since we have chosen our representative ECS to be class 0, one can show that the 6-form component in particular never vanishes. Hence, the associated modulus we should remove is a particular combination of classes in $H^{3}$ and $H^{6}$ which are related through the $\SL{3,\bbC}$ structure, $\im\alpha$. We can use this element to write any 6-form deformation in terms of 3-form deformations and write the physical moduli as
\begin{equation}
    \text{moduli} = \bigl(H^{3}(M,\bbC)\oplus H^{6}(M,\bbC)\bigr)\quotient [J_{+}] \simeq H^{3}(M,\bbC)
\end{equation}
As a sanity check, $H^{3}(M,\bbC)$ has a natural symplectic structure and hence must be $2n$ complex dimensional. It is therefore $4n$ real dimensional - the required dimension of a quarternionic K\"ahler space.

One may ask whether we would obtain the same result if we had taken our representative ECS to be of type 3. This will follow from the analysis we do in the next section.

\subsection{Class 1 Structures}\label{sec:E6 class 1 Moduli}

For class 1 solutions, all associated ECS are of type 3. We therefore want to understand the deformations of 
\begin{equation}
    L_{1} = \ee^{\alpha+\beta}\cdot(\Delta\oplus \mathcal{F}^{2}_{1}(\Delta))
\end{equation}
We will keep $\Delta$ general for now, and hence the results of this section apply to class 0 type 3 as well. For convenience, we will define a dual filtration of multivectors $\mathfrak{F}^{k}_{p}(\Delta)\subset \ext^{k}T_{\bbC}$ given by $\xi\lrcorner\phi = 0$ for all $\xi \in \mathfrak{F}^{k}_{p}(\Delta)$, and for all $\phi \in \mathcal{F}^{k}_{p}(\Delta)$. It is possible to show that one can choose the following for the quotient spaces.
\begin{align}
    E_{\bbC}\quotient L_{1} 
    &= \left(T_{\bbC} \quotient \mathfrak{F}^{1}_{0} \right) 
    \oplus \left( \ext^{2}T^{*}_{\bbC}\quotient \mathcal{F}^{2}_{1} \right) 
    \oplus \ext^{5}T^{*}_{\bbC} 
    \label{eq:type 3 quotient 1} \\
    \mathfrak{Q}_{\bbR^{+}\times \Us{6}} 
    &= \left[ \left(T_{\bbC}\quotient \mathfrak{F}^{1}_{0}\right) 
    \otimes \left(T^{*}_{\bbC}\quotient \mathcal{F}^{1}_{0}\right) \right] 
    \oplus \left( \ext^{3}T_{\bbC} \quotient \mathfrak{F}^{3}_{2} \right) 
    \oplus \left(\ext^{3}T^{*}_{\bbC} \quotient \mathcal{F}^{3}_{1}\right) 
    \oplus \ext^{6}T^{*}_{\bbC} 
    \label{eq:type 3 quotient 2}
\end{align}
While these spaces may seem confusing at first, things are made easier by choosing some space $\Sigma \subset T_{\bbC}$ that is complement to $\Delta$. If the structure is class 0 then $\Delta\cap\bar{\Delta}=0$ and so there is a canonical choice of $\Sigma = \bar{\Delta}$. Interestingly, as we will see in section \ref{sec: Minkowski Backgrounds}, there is a canonical choice of $\Sigma$ even when $\Delta\cap\bar{\Delta} \neq 0$. We can use this split $T_{\bbC}=\Delta \oplus \Sigma$ to simplify the quotients to
\begin{align}
    E_{\bbC}\quotient L_{1} &= \Sigma \oplus (\Sigma^{*}\otimes \Delta^{*})\oplus \ext^{2}\Delta^{*} \oplus \ext^{5}T^{*} \\
    \mathfrak{Q}_{\bbR^{+}\times \Us{6}} &= [\Sigma\otimes \Delta^{*}] \oplus \ext^{3}\Sigma \oplus \ext^{3}\Delta^{*} \oplus (\ext^{2}\Delta^{*}\otimes \Sigma^{*}) \oplus \ext^{6}T^{*}
\end{align}
The final result should be independent of this choice of splitting and so we will work with the general form \eqref{eq:type 3 quotient 1}, \eqref{eq:type 3 quotient 2}.

An important consideration to make in the type 3 case is the possibility of non-trivial flux. As we saw in proposition \ref{prop: ECS integrability E6}, the complex flux locally defined by $\dd\alpha$ does not need to vanish. Instead, it falls into some, possibly non-trivial, cohomology class in $H^{4}(M,\mathcal{F}^{\bullet}_{1})$. This, in turn, implies that the physical flux $F$ need not be in a trivial cohomology class. As we will discuss in the following section, the cohomology class of $F$ represents something physical, related to the number of M5 branes wrapping a cycle. In this case, it is easiest to work with the flux-twisted Dorfman derivative\footnote{The expression for the flux-twisted Dorfman derivative is given in appendix \ref{app:conventions}. This formulation of generalised geometry is equivalent to the original formulation with flux-twisted bundles.}. We will find that the moduli are therefore counted by the cohomology of a `flux-twisted' differential. To find such a differential that squares to 0, it will be convenient to work with the complex-flux twisted Dorfman derivative $L^{F_{\bbC}}_{V}$ and consider deformations of the untwisted bundle $\tilde{L}_{1} = \Delta \oplus \mathcal{F}^{2}_{1}(\Delta)$. This has the same quotient bundles as \eqref{eq:type 3 quotient 1}, \eqref{eq:type 3 quotient 2}.

Consider a general deformation element $R=r + \trivector+\theta+\tau \in \Gamma(\mathfrak{Q}_{\bbR^{+}\times \Us{6}})$ where $r\in  \Gamma\left[\left(T_{\bbC}\quotient \mathfrak{F}^{1}_{0}\right) \otimes \left(T^{*}_{\bbC}\quotient \mathcal{F}^{1}_{0}\right)\right]$, $\trivector\in \Gamma(\ext^{3}T_{\bbC}\quotient \mathfrak{F}^{3}_{2})$, etc. We then have the deformed bundle
\begin{equation}
    \tilde{L}'_{1} = (1+R)\tilde{L}_{1} = \ee^{\theta+\tau}(1+r+\trivector)\cdot(\Delta + \mathcal{F}^{2}_{1}(\Delta))
\end{equation}
What are the conditions for $\tilde{L}_{1}'$ to be involutive under $L^{F_{\bbC}}_{V}$? We will leave the detailed calculation to the appendix and for now just note that the moduli are controlled by two cohomology groups related to $\Delta$. First, since $\Delta$ is involutive with respect to the Lie bracket, this defines a Lie algebroid and has an associated differential
\begin{equation}
    \dd_{\Delta} : \ext^{p}\left(T^{*}_{\bbC}\quotient \mathcal{F}^{1}_{1}\right) \longrightarrow \ext^{p+1}\left(T^{*}_{\bbC}\quotient \mathcal{F}^{1}_{1}\right) \qquad \dd_{\Delta}^{2} = 0
\end{equation}
If we take $i:\Delta \hookrightarrow T_{\bbC}$ to be the natural inclusion, then $i^{*}:T^{*}_{\bbC} \rightarrow (T^{*}_{\bbC}\quotient \mathcal{F}^{1}_{1})$. We can define the differential above via $i^{*}\circ \dd = \dd_{\Delta}\circ i^{*}$, where we take the natural extension of $i^{*}$ to $\ext^{p}T^{*}_{\bbC}$. This will define cohomology groups which we will denote by $H^{p}_{\Delta}$. We will further denote by $H^{p}_{\Delta}(M,B)$ the cohomology group of $\ext^{p}(T^{*}_{\bbC}\quotient \mathcal{F}^{1}_{1})$ evaluated in the bundle $B$.

The second cohomology group of interest is defined in terms of the filtration $\mathcal{F}^{p}_{k}(\Delta)$. Recall that $\dd:\mathcal{F}^{p}_{k}(\Delta)\rightarrow \mathcal{F}^{p+1}_{k}(\Delta)$ if $\Delta$ is an integrable distribution. Hence, the de Rham differential descends to the following complex.
\begin{equation}
    \left(\ext^{1}T^{*}\quotient \mathcal{F}^{1}_{k}\right) \xrightarrow{\; \dd\;}\left(\ext^{2}T^{*}\quotient \mathcal{F}^{2}_{k}\right) \xrightarrow{\;\dd\;} ... \xrightarrow{\;\dd\;} \left(\ext^{6}T^{*}\quotient \mathcal{F}^{6}_{k}\right)
\end{equation}
We then denote the cohomology groups associated to this complex as $H^{p}(M,\ext^{\bullet}T^{*}\quotient \mathcal{F}^{\bullet}_{k})$. (It is probably worth noting that neither of these cohomologies are the basic cohomology of foliated spaces defined in e.g. \cite{habib2013modified}.)

After a lengthy calculation, one finds that the deformations are counted by the cohomology of a differential that we will label $\dd_{\Delta,F}$ which creates the following complex
\begin{align}
\label{eq:dDeltaF complex}
\begin{split}
    & \Gamma\left( \left(T_{\bbC}\quotient \mathfrak{F}^{1}_{0}\right) \oplus \left(\ext^{2}T^{*}_{\bbC}\quotient \mathcal{F}^{2}_{1}\right) \oplus \ext^{5}T^{*}_{\bbC} \right) \\
    \xrightarrow{\quad \dd_{\Delta,F} \quad} \, & \Gamma\left(\left( \ext^{3}T_{\bbC}\quotient \mathfrak{F}^{3}_{2} \right) \oplus \left[ \left(T_{\bbC}\quotient \mathfrak{F}^{1}_{0}\right)\otimes \left(T^{*}_{\bbC}\quotient \mathcal{F}^{1}_{0}\right)\right] \oplus \left(\ext^{3}T^{*}_{\bbC}\quotient \mathcal{F}^{3}_{1}\right) \oplus \ext^{6}T^{*}_{\bbC} \right) \\
    \xrightarrow{\quad \dd_{\Delta,F} \quad} \, & \Gamma\left( \left[\left( \ext^{3}T_{\bbC}\quotient \mathfrak{F}^{3}_{2} \right)\otimes \left( T^{*}_{\bbC}\quotient \mathcal{F}^{1}_{0} \right)\right] \oplus \left[ \left(T_{\bbC}\quotient \mathfrak{F}^{1}_{0}\right)\otimes \ext^{2}\left(T^{*}_{\bbC}\quotient \mathcal{F}^{1}_{0}\right)\right] \oplus \left(\ext^{4}T^{*}_{\bbC}\quotient \mathcal{F}^{4}_{1}\right) \right)
\end{split}
\end{align}
If we take $R = \trivector+r+\theta+\tau \in \Gamma(\mathfrak{Q}_{\bbR^{+}\times \Us{6}})$, and $V = v+\omega + \sigma \in \Gamma(E_{\bbC}\quotient L_{1})$, then the closure conditions are\footnote{The definition of $j\trivector\lrcorner j^{2}F_{\bbC}$ can be found in appendix \ref{app:conventions}}
\begin{align}
    0 &= \dd_{\Delta} \trivector \label{eq:type 3 closure 1} \\
    0 &= \dd_{\Delta}r - j\trivector\lrcorner j^{2}F_{\bbC} \label{eq:type 3 closure 2} \\
    0 &= \dd\theta - r\cdot F_{\bbC} \label{eq:type 3 closure 3}
\end{align}
and the exactness conditions are
\begin{align}
    r &= \dd_{\Delta}v \label{eq:type 3 exact 1} \\
    \theta &= \dd\omega - v\lrcorner F_{\bbC} \label{eq:type 3 exact 2} \\
    \tau &= \dd\sigma + \omega \wedge F_{\bbC} \label{eq:type 3 exact 3}
\end{align}
We are implicitly taking projections onto relevant quotient spaces where needed above. It is an easy check to see that $\dd_{\Delta,F}^{2} = 0$. If $F_{\bbC}$ is in a trivial cohomology class in $H^{4}(M,\mathcal{F}^{\bullet}_{1})$, which is true for class 0 backgrounds in particular, then the deformations are counted by
\begin{equation}\label{eq:type 3 deformation structure}
    T_{\JUs}\twistor_{H} = H^{0}_{\Delta}(M,\ext^{3}T_{\bbC}\quotient \mathfrak{F}^{3}_{2}) \oplus H^{1}_{\Delta}(M,T_{\bbC}\quotient \mathfrak{F}^{1}_{0}) \oplus H^{3}(M,\ext^{\bullet}T_{\bbC}^{*} \quotient \mathcal{F}^{\bullet}_{1})\oplus H^{6}_{\dd}(M,\bbC)
\end{equation}

To find the physical moduli, we need to remove the modulus associated $J_{+}$. Again, this is done by finding the projection of $\kappa^{-1}J_{+}$ onto the space \eqref{eq:type 3 quotient 2}. The precise form of this projection is complicated but one can show that the projection onto the following space is always non-vanishing
\begin{equation}
    \ext^{3}\Delta^{*} \simeq \ext^{3}T^{*}\quotient \mathcal{F}^{3}_{2} \subseteq \ext^{3}T^{*}\quotient \mathcal{F}^{3}_{1} \subseteq \mathfrak{Q}_{\bbR^{+}\times \Us{6}}
\end{equation}
In the first equality we have chosen a decomposition $T=\Delta \oplus \Sigma$. We can therefore use $J_{+}$ to remove deformations along $\ext^{3}\Delta^{*}$ to obtain the physical moduli of the background.

\subsection{Exceptional Dolbeault Operators}\label{sec:E6 generic moduli}

The moduli found in the previous sections determine the moduli of all structures of constant type. This works well for class 1 $\SUs6$ structures where the notion of type is unambiguous. However, as we noted in section \ref{sec: SU(2) symmetry}, the type of an ECS associated to a class 0 $\SUs6$ structure is not uniquely specified. Although it is generically type 0 there are two points on the $S^2$ of structures where it becomes type 3.
We would therefore like to characterise the moduli in a way that is independent of the type of particular ECS used to analyse the problem, and treats both class 0 and class 1 in a single formalism. This will lead us to defining an `exceptional Dolbeault operator' whose cohomology groups then capture the moduli.

To allow analysis for arbitrary type, we would like to be able to find the deformations of $L_{u}$ for arbitrary $\mathbf{u}\in S^{2}$. Recall from proposition \ref{prop:different integrability} that integrability can be defined in terms of any $L_{u}$ and hence our results should be independent of this choice. Being able to find the deformations requires a `nice' choice of embedding $E_{\bbC}\quotient L_{u} \hookrightarrow E_{\bbC}$, and $\mathfrak{e}_{6,\bbC}\quotient \mathfrak{p}_{u} \hookrightarrow \mathfrak{e}_{6,\bbC}$. Fortunately, one such nice embedding is naturally selected by the $\Us{6}$ structure independent of class. Decomposing into eigenbundles of $\JUs_{u}$, one finds
\begin{align}
    E_{\bbC} & = \mathfrak{X}_{1}\oplus \mathfrak{X}_{-1}\oplus \ext^{2}\mathfrak{X}^{*} \\
    \ad\tilde{F}_{\bbC} &= \ad P_{\bbR^{+}\times \Us{6}} \oplus \ext^{3}\mathfrak{X}^{*}_{-1} \oplus \ext^{6}\mathfrak{X}^{*}_{-2} \oplus \ext^{3}\mathfrak{X}^{*}_{1}\oplus \ext^{6}\mathfrak{X}^{*}_{+2} \\
    W_{\tint}^{\bbR^{+}\times \Us{6}} &= \ext^{4}\mathfrak{X}^{*}_{-2} \oplus \ext^{4}\mathfrak{X}^{*}_{2}
\end{align}
where $\mathfrak{X}$ is a bundle that transforms in the $\rep{6}$ of $\SUs{6}$. The subscript denotes the $\U{1}$ charge under $\JUs_{u}$, so $\mathfrak{X}_{1}\simeq L_{u}$. A natural choice of embeddings is then
\begin{equation}
    E_{\bbC}\quotient L_{u} = \ext^{2}\mathfrak{X}^{*} \oplus \ext^{5}\mathfrak{X}^{*}_{-1} \qquad \mathfrak{Q}_{\bbR^{+}\times \Us{6}} = \ext^{3}\mathfrak{X}^{*}_{-1} \oplus \ext^{6}\mathfrak{X}^{*}_{-2}
\end{equation}

We assume that we start from a fully integrable $\SUs{6}$ structure and hence there exists a torsion-free compatible connection $D$. Using \eqref{eq:torsion} with vanishing torsion, we know that we can replace the definitions of $\dd_{1},\dd_{2}$ in terms of the Dorfman derivative with expressions involving $L_{V}^{D}$. This means that we can write the maps $\dd_{1},\dd_{2}$ in terms of the connection $D$. Moreover, viewing $D:\Gamma(\mathcal{T})\rightarrow \Gamma(E^{*}\otimes \mathcal{T})$, we can decompose $E^{*}$ into $\JUs_{u}$ eigenbundles. The compatibility of $D$ implies that it is consistent to define a decomposition of $D$ as
\begin{equation}
    D = D_{u} + D_{-u} + D_{0}
\end{equation}
where $D_{nu} = \pi_{n}D$ where $\pi_{n}$ is the projection of $E^{*}$ to the subspace with $\JUs_{u}$ charge $n\ii$. Note that, while $D_{\pm u}$ depend on the choice of $L_{u}\subset L_{1}\oplus L_{-1}$, the operator $D_{0}$ is independent of that choice.

With these decompositions, we find that the complex \eqref{eq:E6 deformation complex 1} can be written
\begin{equation}
\begin{tikzcd}[column sep=huge, row sep = huge]
\Gamma(\ext^{2}\mathfrak{X}^{*})  \arrow[r, "D_{-u}"] & \Gamma(\ext^{3}\mathfrak{X}^{*}_{-1}) \arrow[r, "D_{-u}"] & \Gamma(\ext^{4}\mathfrak{X}^{*}_{-2}) \\
\Gamma(\ext^{5}\mathfrak{X}^{*}_{-1}) \arrow[r,"D_{-u}"] \arrow[ur,"D_{0}"] & \Gamma(\ext^{6}\mathfrak{X}^{*}_{-2}) \arrow[ur,"D_{0}"] &
\end{tikzcd} \label{eq:E6 deformation complex 2}
\end{equation}
Note that the involutivity of $L_{u}$ implies that $D_{-u}^{2}=0$. In fact, it is possible to show that $L_{u}$ defines a Lie algebroid and that $D_{-u}$ is the associated differential
\begin{equation}
    D_{-u}:\ext^{p}\mathfrak{X}^{*}_{q} \longrightarrow \ext^{p+1}\mathfrak{X}^{*}_{q-1}
\end{equation}
In full generality, not much can be said about the cohomology of \eqref{eq:E6 deformation complex 2} without more knowledge of the maps $D_{-u},D_{0}$. However, if we make the following assumption, we can give a generic result about the moduli of the $\SUs{6}$ structures.
\begin{defn}
$D_{0}$, $D_{-u}$ are said to satisfy the \emph{exceptional $\del\delb$-lemma} if they satisfy the following
\begin{equation}
    \image D_{0}\cap \ker D_{-u} \subseteq \image D_{-u}D_{0}
\end{equation}
\end{defn}
\noindent We show in appendix \ref{app:ECS in E6} that provided the exceptional $\del\delb$-lemma holds, and $D_{0}$ is a cochain homomorphism, then the cohomology $\mathcal{H}$ of the complex \eqref{eq:E6 deformation complex 1} is given by the cohomology of $D_{-u}$. More precisely we have
\begin{equation}\label{eq:generic moduli}
    \mathcal{H} = H^{3}_{D_{-u}}\oplus H^{6}_{D_{-u}}
\end{equation}
Recall that the cohomology of \eqref{eq:E6 deformation complex 1} is precisely the tangent space to the twistor space $\mathcal{H}=T_{\JUs}\twistor_{H}$. To find the physical moduli, we therefore need to remove the modulus associated to $J_{+}$. This is particularly easy with the embeddings chosen since $\ext^{6}\mathfrak{X}^{*}_{-2}$ is precisely the line bundle generated by $J_{+}$. Hence, the deformations associated to $J_{+}$ are simply $H^{6}_{D_{-u}}$ and we get the following result

\begin{propn}
Provided a background satisfies the generalised $\del\delb$-lemma, the hypermultiplet moduli are given by $H^{3}_{D_{-u}}$
\end{propn}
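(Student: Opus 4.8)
The plan is to read off the hypermultiplet moduli directly from the cohomology $\mathcal{H}=T_{\JUs}\twistor_{H}$ of the deformation complex \eqref{eq:E6 deformation complex 1} and then to quotient by the single complex direction corresponding to the $S^{2}$ twistor fibre. Two ingredients are already in hand. First, the cohomology of \eqref{eq:E6 deformation complex 1} is $T_{\JUs}\twistor_{H}$, and in the $\JUs_{u}$ eigenbundle splitting it becomes the complex \eqref{eq:E6 deformation complex 2} built from the Lie-algebroid differential $D_{-u}$ and the class-independent operator $D_{0}$. Second, the appendix result \eqref{eq:generic moduli}, valid whenever the exceptional $\del\delb$-lemma holds and $D_{0}$ is a cochain homomorphism, collapses this to $\mathcal{H}=H^{3}_{D_{-u}}\oplus H^{6}_{D_{-u}}$. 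It therefore only remains to descend from $\twistor_{H}$ to the physical moduli space $\mathcal{M}_{\text{hyper}}$.

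Next I would use the fibration $\twistor_{H}\to\mathcal{M}_{H}\quotient\bbH^{*}=\mathcal{M}_{\text{hyper}}$ of section \ref{sec:moduli of H-structures}, whose fibre is the $S^{2}=\SU{2}\quotient\U{1}$ of ECS attached to a fixed $\SUs6$ structure. Infinitesimally the vertical tangent space is one complex dimensional and, from the discussion around \eqref{eq:general Lu}, is spanned by the deformation $\delta\Jpl=\eta\,\kappa^{-1}J_{+}$; such deformations are automatically integrable once the background $\SUs6$ structure is integrable, so they genuinely define a class in $\mathcal{H}$. The physical moduli are thus $\mathcal{H}$ with this one class removed, and the task reduces to locating it inside $H^{3}_{D_{-u}}\oplus H^{6}_{D_{-u}}$.

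The crux is to show that the class of $\kappa^{-1}J_{+}$ generates, and exhausts, the $H^{6}_{D_{-u}}$ summand. With the embedding of section \ref{sec:E6 generic moduli}, the image of $\kappa^{-1}J_{+}$ under $\ad\tilde{F}_{\bbC}\to\mathfrak{Q}_{\bbR^{+}\times\Us{6}}$ lies entirely in the line bundle $\ext^{6}\mathfrak{X}^{*}_{-2}$ and has no component in $\ext^{3}\mathfrak{X}^{*}_{-1}$, so it contributes nothing to $H^{3}_{D_{-u}}$. It remains to argue that it spans all of $H^{6}_{D_{-u}}=\Gamma(\ext^{6}\mathfrak{X}^{*}_{-2})\quotient\image D_{-u}$. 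Since $\ext^{6}\mathfrak{X}^{*}$ is the canonical line of the rank-$6$ Lie algebroid $L_{u}\simeq\mathfrak{X}_{1}$, this is the top Lie-algebroid cohomology with coefficients in its canonical bundle, which an integration/Poincar\'e-duality argument shows to be one complex dimensional on compact $M$ -- the exceptional-Dolbeault analogue of $H^{6}(M,\bbC)=\bbC$ used in section \ref{sec:class 0 moduli}. As $\kappa^{-1}J_{+}$ is a nowhere-vanishing section of this line bundle (guaranteed by condition (iv$'$) of definition \ref{def:E6 ECS}, which forces $\Tr(\Jpl\bar{\Jpl})<0$ everywhere) it cannot be $D_{-u}$-exact and hence generates $H^{6}_{D_{-u}}$.

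Assembling these steps, removing the single $J_{+}$ modulus cancels the whole $H^{6}_{D_{-u}}$ factor and leaves $\mathcal{M}_{\text{hyper}}=H^{3}_{D_{-u}}$, as claimed. I expect the main obstacle to be precisely the one-dimensionality of $H^{6}_{D_{-u}}$ together with the non-exactness of $\kappa^{-1}J_{+}$: this is where the compactness of $M$, the Lie-algebroid structure of $L_{u}$ and the stability condition (iv$'$) all enter, and it is the general-type replacement for the elementary top-degree de Rham statement exploited in the class 0 analysis. A useful consistency check is that for class 0 type 0, where $\mathfrak{X}=T_{\bbC}$ and $D_{-u}=\dd$, the argument reproduces $(H^{3}(M,\bbC)\oplus H^{6}(M,\bbC))\quotient[J_{+}]\simeq H^{3}(M,\bbC)$ of section \ref{sec:class 0 moduli}.
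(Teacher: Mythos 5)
Your overall route is the paper's: establish $\mathcal{H}=T_{\JUs}\twistor_{H}\cong H^{3}_{D_{-u}}\oplus H^{6}_{D_{-u}}$ via the appendix proposition (exceptional $\del\delb$-lemma plus $D_{0}$ a cochain homomorphism applied to the complex \eqref{eq:E6 deformation complex 2}), then strip the $J_{+}$ modulus. The divergence is in how the removal is justified. The paper's argument is bundle-level and does not count dimensions: since the image of $\kappa^{-1}J_{+}$ under $\ad\tilde{F}_{\bbC}\to\mathfrak{Q}_{\bbR^{+}\times\Us{6}}$ generates the line bundle $\ext^{6}\mathfrak{X}^{*}_{-2}$, \emph{every} class in the $H^{6}_{D_{-u}}$ summand is a deformation pointwise along the $J_{+}$ direction, and the whole summand is discarded as ``the deformations associated to $J_{+}$''; nowhere does the paper invoke, or need, $\dim_{\bbC}H^{6}_{D_{-u}}=1$. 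You instead remove exactly one complex modulus (the twistor-fibre direction $\delta\Jpl=\eta\,\kappa^{-1}J_{+}$, constant $\eta$) and therefore your conclusion stands or falls with the lemma that $H^{6}_{D_{-u}}$ is one-dimensional and generated by a non-exact class $[\kappa^{-1}J_{+}]$.

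That lemma is where your argument has genuine gaps. First, the one-dimensionality is only sketched (``integration/Poincar\'e duality for top Lie-algebroid cohomology''), but for a type 3 ECS the anchor of $L_{u}$ has rank 3, the complex is not obviously elliptic, and top-degree cohomology of such foliation-like algebroids is typically infinite-dimensional; the statement is verified in the paper only in the class 0/Calabi--Yau setting, where $H^{6}_{D_{-u}}\cong H^{3,3}_{\del}\cong\bbC$ follows from compactness and $h^{0,3}=1$. Second, the inference ``nowhere-vanishing $\Rightarrow$ not $D_{-u}$-exact'' is invalid as stated: condition (iv$'$) of definition \ref{def:E6 ECS} gives pointwise non-degeneracy of $\Jpl$, but exactness is obstructed by a nonzero integration pairing, not by pointwise non-vanishing --- even in the de Rham model a \emph{complex} top form can be nowhere zero and still have vanishing integral, and here one would additionally need a duality/integration theory for $D_{-u}$ that neither you nor the paper constructs. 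Note that if $\dim H^{6}_{D_{-u}}>1$ your own logic would deliver $H^{3}_{D_{-u}}\oplus\bigl(H^{6}_{D_{-u}}\quotient[J_{+}]\bigr)$ rather than $H^{3}_{D_{-u}}$, so the gap is load-bearing; the paper's identification of the full summand with the (unphysical) $J_{+}$-direction deformations is what lets it bypass this. Your consistency check against the type 0 computation of section \ref{sec:class 0 moduli} is correct and matches the paper's.
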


The definition of $D_{-u}$ as the differential associated to the Lie algebroid $L_{u}$ mirrors the properties of the Dolbeault operator $\del$ with the complex distribution $T^{1,0}$. Moreover, through complex conjugation, one can show that $D_{u}$ squares to 0, is the differential associated to $L_{-u}$, and $D_{u} = \overline{D}_{-u}$. We therefore make the following definition.

\begin{defn}\label{def:exceptional Dolbeault}
The \emph{exceptional Dolbeault operators} associated to the integrable ECS $\JUs_{u}$ are the operators $D_{\pm u}$.
\end{defn}

\noindent It is interesting to note that, if the $\SUs{6}$ structure is fully integrable, then $L_{u}$ is involutive for all $\mathbf{u}\in S^{2}$. This implies that we have a set of differentials $D_{u}$ labelled by $\mathbf{u}\in S^{2}\simeq \mathbb{CP}^{1}$. Moreover, as the analysis above was independent of the choice of $\mathbf{u}$, these differentials should be quasi-isomorphic\footnote{This should at least hold in the sense that $H^{3}_{u} \cong H^{3}_{\tilde{u}}$}. Finally, we note that, unlike conventional Dolbeault operators, we do not have $D_{(u}D_{-u)} = 0$. Instead, using the fact that any connection must satisfy $D\times_{N} D = 0$, we have
\begin{equation}
    (D_{u})^{2} = 0 \qquad (D_{-u})^{2} = 0 \qquad D_{u}D_{-u} + D_{-u}D_{u} \sim (D_{0})^{2}
\end{equation}
This should be expected as $L_{u}\oplus L_{-u}$ does not define a Lie algebroid, and hence we cannot form a differential out of $D_{u} + D_{-u}$.

\subsubsection{Example: Calabi--Yau and Class 0}

We return to the explicit example of compactification on a Calabi--Yau. Following the method set out above, we decompose $E_{\bbC}$, $\ad \tilde{F}_{\bbC}$ into eigenspaces of $\JUs$. This is outlined in appendix \ref{app:ECS in E6} but for now, we just note that there is an isomorphism between this complex and the following, using the holomorphic three-form $\Omega$.
\begin{equation}\label{eq:CY exceptional Dolbeault}
    \begin{tikzcd}[column sep=huge, row sep = huge]
        \Omega^{2}(M)_{\bbC} \arrow[r, "\del"] & \Omega^{3}(M)_{\bbC} \arrow[r, "\del"] & \Omega^{4}(M)_{\bbC} \\
        \Omega^{5}(M)_{\bbC} \arrow[r, "\del"] \arrow[ru, "D_{0}"] & \Omega^{6}(M)_{\bbC} \arrow[ru, "D_{0}"]
    \end{tikzcd}
\end{equation}
where $D_{0} = \Omega^{\#}\lrcorner\delb + \bar{\Omega}^{\#}\lrcorner \del$. One can show that this satisfies the generalised $\del\delb$-lemma and hence the moduli are counted by
\begin{equation}
    H^{3}_{\del}(M)
\end{equation}
Note that this contains all the hypermultiplet moduli of deformations of the Calabi--Yau manifolds, namely the complex structure moduli $H^{2,1}_{\delb}$ and the deformations of the three-form and six-form potential $A$ and $\tilde{A}$. The latter lie in de Rham cohomology classes $H^3_\dd(M,\bbR)$ and $H^6_\dd(M,\bbR)$. Using the holomorphic three-form these can be associated with $H^{3,0}_{\delb}$, $H^{0,3}_{\delb}$ and $H^{1,2}_{\delb}$ thus filling out $H^3_\dd(M,\bbR)$. 

This was calculated for $\JUs = \tilde{J}_{3}$, but as was noted above, we should be able to do the analysis for arbitrary $\mathbf{u}\in S^{2}$. Since the Calabi--Yau is class 0, the generic ECS for a Calabi--Yau is in fact type 0. One can use this to show that the generic differential $D_{u}$ will be quasi-isomorphic to the de Rham differential. That is
\begin{equation}
     D_{u} \sim \begin{cases} 
        \dd & u^{3} \neq \pm 1 \\
        \del & u^{3} = 1 \\
        \delb & u^{3} = -1
    \end{cases}
\end{equation}
Therefore, either using the results of section \ref{sec:class 0 moduli} or the results from \eqref{eq:generic moduli}, at a generic point $\mathbf{u}\in S^{2}$, the moduli will be counted by $H^{3}_{\dd}(M,\bbC)$. The fact that these calculations give the same result, i.e. the quasi-isomorphism of the various $D_{u}$, is equivalent to Hodge's Theorem on a Calabi--Yau manifold.

These arguments can be easily extended to a generic class 0 background, which by corollary \ref{cor:class 0 classification} differs from the above by a 6-form and an irrelevant $\ER{6}$ twist. It is easy to then incorporate the 6-form into the isomorphisms \eqref{eq:CY exceptional Dolbeault} and into the definition $D_{0}$. The generalised $\del\delb$-lemma is then equivalent to the conventional $\del\delb$-lemma for the associated $\SL{3,\bbC}$ structure. If this is satisfied then $H^{3}_{D_{-u}}$ equals $H^{3}_{\del}(M) = H^{3}_{\dd}(M,\bbC)$ depending on the choice of $\mathbf{u}$.

\section{Hypermultiplet Moduli for Minkowski Backgrounds}\label{sec: Minkowski Backgrounds}

We have showed in the previous section how to calculated the infinitesimal moduli for an arbitrary integrable M theory H-structure $J_\alpha$. As we discussed in section \ref{sec: ECY review} this should allow us to calculate the hypermultiplet moduli of a general supersymmetric Minkowski compactification. 

Recall that the full supersymmetric background includes a compatible V-structure defined by a generalised vector $K$. Together $(J_\alpha,K)$ define a generalised metric encoding the physical metric and form-field potentials, and supersymmetry implies the background satisfies the supergravity equations of motion. However, there are well-known no-go theorems \cite{Maldacena:2000mw,Giddings:2001yu,Gauntlett:2003cy,Gauntlett:2002sc} that exclude compact solutions with non-zero flux, so that the only allowed compact background is, in our case, a Calabi--Yau manifold, although non-compact backgrounds are also of significant interest, such as for geometrical engineering. 

The basic way to avoid the no-go theorems is to include sources for the fluxes coming from branes and orientifold planes. Thus generically we should consider deformations on spaces with boundaries where the sources have been removed. However, it is also possible that the sources enter only in the V-structure equations, such that the H-structure remains well defined even at the source. To see how this can work, recall that, in terms of bilinears, a generic type 0 structure had the form 
\begin{equation}
     L_{u} = \ee^{\alpha+\beta}\cdot T_{\bbC}
\end{equation}
with $\alpha$ and $\beta$ given by 
\begin{equation}
\begin{aligned}
    \alpha &= \left(A+\re(\Lambda_u/f_u)\right) + \ii \im(\Lambda_u/f_u) \\
    \beta &= \tilde{A} - \tfrac{1}{2}A\wedge (\Lambda_u/f_u)
\end{aligned}
\end{equation}
We see that the ECS does not determine the gauge potentials $A$ and $\tilde{A}$ but only combinations of potential and bilinear such as $A+\re(\Lambda_u/f_u)$. It is only once one specifies $K$ that the separate terms are picked out. Thus the solution may be singular such that $A$ and $\re(\Lambda_u/f_u)$ both diverge but $\alpha$ remains finite and hence the ECS remains well defined. The H-structure is well defined, but there is no compatible supersymmetric V-structure that is not divergent at some point. If this is the case then one can calculate the hypermultiplet moduli without having to make any excision of the sources. 

Recall that the general analysis of \cite{Gauntlett:2004zh} showed that the local supersymmetric solutions split into two classes, that directly correspond to class 0 and class 1 $\SUs6$ structures. The former class includes the fluxless Calabi--Yau background but more generally can be considered as a deformation to `Calabi--Yau with flux' since class 0 backgrounds define an integrable conventional $\SL{3,\bbC}$ structure (albeit non-metric-compatible). The second class of supersymmetric solutions could actually be reduced to solving for a single function, and have the interpretation of the back-reacted geometry of an M5-brane wrapped on a circle fibered over a five dimensional space of the form $M_{\text{HK}}\times \bbR$ where $M_{\text{HK}}$ is a four-dimensional hyperk\"ahler space. The two classes are distinguished by $\sin\theta = f = 0$ for class 1/M5-brane, and otherwise the background is class 0/Calabi--Yau with flux\footnote{As was observed in \cite{Gauntlett:2004zh}, the full set of supersymmetry equations sets $f=0$ always. However, this cannot be seen from integrability of the $\SUs{6}$ structure alone and so we shall keep it general.}. 

In section \ref{sec:moduli of H-structures} we saw how to calculate the infinitesimal  moduli for global class 0 and 1 in terms of cohomologies, and so we can simply apply those results here to find the hypermultiplet degrees of freedom for the Minkowski backgrounds. As noted in corollary \ref{cor:class change}, we cannot have smooth class-changing structures so the global analysis of section \ref{sec:moduli of H-structures} is valid away from singular points. In fact, it may be of slightly broader applicability in that the corresponding ECS may be smooth and non-singular.  If we choose the corresponding ECS as type 3 at both the class 0 and class 1 points, the moduli space analysis is still then captured by the discussion in section \ref{sec:E6 class 1 Moduli} though now with a distribution $\Delta$ that changes class.

\subsection{Class 0}

Recall from proposition \ref{prop:class0}, an integrable class 0 $\SUs6$ structure defines an integrable conventional $\SL{3,\bbC}$ structure $\tilde{\Omega}$. Furthermore, the generic ECS associated to a class 0 $\SUs{6}$ structure is type 0 and so from the analysis of section \ref{sec:class 0 moduli} we have 
\begin{equation}\label{eq:class 0 type 0 moduli}
    \text{moduli} \; = \; H^{3}_{\dd}(M,\bbC)
\end{equation}
We see that the physical hypermultiplet moduli space is locally the (complexified) moduli space of $\SL{3,\bbC}$ structures. One can view this as deformations of the complex structure associated to $\tilde{\Omega}$, the constant $b_0$ and of the closed three- and six-forms $c$ and $\tilde{c}$ that appear in the proposition. 

We emphasise that this result was obtained without introducing a specific $V$-structure $K$ and hence is independent of the precise value of the flux. Instead, we only required integrability of the $\SUs{6}$ structure which constrains the flux to be $\dd$-exact. Different choice of $K$ will give different values of the flux within the trivial cohomology class. Provided the background admits a compatible K\"ahler metric, one could choose $K$ such that the fluxes vanish and the background is genuinely Calabi--Yau (i.e. taking $\sin\theta \equiv -1$). As discussed at the end of section \ref{sec:E6 generic moduli}, \eqref{eq:class 0 type 0 moduli} then gives the expected result. On the other hand, one could choose $K$ such that the fluxes do not vanish and hence we are necessarily away from the Calabi--Yau solution. Despite this, we find that the hypermultiplet moduli are given by the same cohomology groups as in the fluxless case. This remarkable fact is non-trivial as one cannot smoothly deform a Calabi--Yau solution to one with flux and so there is no guarantee that the moduli will be the same.

In finding the moduli, we could have alternatively deformed around the type 3, class 0 ECS. In this case, we could use the results of section \ref{sec:E6 class 1 Moduli} to write the moduli in terms of the cohomology of the differential associated to the $\Delta = T^{1,0}$, i.e. the Dolbeault operator $\del$. The moment map condition implies that the complex flux $F_{\bbC}$ is exact and hence we can find an exact decomposition into Dolbeault cohomology groups. We find
\begin{align}\label{eq:class 0 type 3 moduli}
    T_{\JUs}\twistor_{H} \; &=\; H^{0}_{\delb}(M,\ext^{3,0}T) \oplus H^{1}_{\delb}(M,T^{1,0}) \oplus H^{1,2}_{\delb}(M) \oplus H^{0,3}_{\delb}(M) \oplus H^{3,3}_{\delb}(M) \\
    &= \bigoplus_{k = 0}^{3} H^{k,3-k}_{\delb}(M) \oplus H^{3,3}_{\delb} (M)
\end{align}
In the second line, we have formed isomorphisms using the holomorphic 3-form $\tilde{\Omega}$. As argued at the end of section \ref{sec:E6 class 1 Moduli}, the physical moduli are found by removing deformations along $\ext^{0,3}T^{*}$. Therefore, the physical moduli are
\begin{equation}
    \text{moduli} = \bigoplus_{k=1}^{3}H^{k,3-k}_{\delb}(M) \oplus H^{3,3}_{\delb}(M)
      = \bigoplus_{k=0}^{3}H^{k,3-k}_{\delb}(M)
\end{equation}
where again we have used an isomorphism induced by the holomorphic 3-form $\tilde{\Omega}$ in the last expression. 

Strikingly, the statement that these two calculations of the moduli are equal appears to suggest that these backgrounds should always satisfy some kind of Hodge theorem. However, this is not quite correct due to a technicality in the way one derives the moduli from the moment map picture. One needs that the K\"ahler metric on the space of structures is non-degenerate transverse to the action of $\GDiff$ \cite{Hitchin00,Hitchin02}. One sufficient condition is to say the backgrounds satisfies the $\del\delb$-lemma which is enough to guarantee the isomorphism of \eqref{eq:class 0 type 0 moduli} and \eqref{eq:class 0 type 3 moduli}. In the same way, that the existence of a K\"ahler metric implies the conventional $\del\delb$-lemma, we expect that the existence of a compatible supersymmetric $K$ is also sufficient.

\subsection{Class 1: \texorpdfstring{$\sin\theta=f= 0$}{sin(theta)=f=0}}

In this case, the internal spinors are of fixed norm and hence the local $\SU{2}$ structure is in fact global. As discussed in section \ref{sec:SU*(6) in local SU(2)}, the $\SUs{6}$ structure defines a $\SU2\times\GL{1,\bbR}\times\bbR$ structure satisfying the equations \eqref{eq:class 1 inv}-\eqref{eq:class 1 flux 2}. All the corresponding ECS are of type 3. We can then use the results of section \ref{sec:E6 class 1 Moduli} to find the moduli of such a background as the cohomology of the complex \eqref{eq:dDeltaF complex} for the differential $\dd_{\Delta, F}$. Crucially  $F$ does not need to be in a trivial cohomology class and hence we cannot decompose the moduli into the cohomology of $\dd_{\Delta}$ alone. 


Recall that the cohomology class of $F$ encodes some physical information about the background namely the number of M5 branes wrapping the circle fibered over  $M_{\text{HK}}\times \bbR$. Under infinitesimal deformations of the background, this cohomology class should remain unchanged. However we have already seen that class 0 structures have trivial flux. Hence, when $F$ is non-trivial, there should be no deformation from class 1 to class 0. Such deformations were parameterised by  $\trivector$. Thus physically we expect these moduli are obstructed by $0\neq[F] \in H^{4}(M)$.

To see this more directly from the moduli equations \eqref{eq:type 3 closure 1}-\eqref{eq:type 3 closure 3}, it is useful to use the full structure of the supergravity background. In particular, the metric naturally selects a complement to $\Delta$ given by
\begin{equation}
    \Sigma = \hat{T}^{0,1} \oplus \bbC\xi
\end{equation}
where $\xi$ is a real Killing vector that also satisfies $\mathcal{L}_{\xi}F = \mathcal{L}_{\xi}\lambda = \mathcal{L}_{\xi}\omega_{\alpha} = \mathcal{L}_{\xi}\zeta_{i} = 0$. This particular choice of complement is convenient as one can use the Killing spinor equations\footnote{This is not the case for AdS backgrounds.} to show $[\Sigma,\Sigma] \subseteq \Sigma$. We can therefore decompose the exterior derivative as
\begin{equation}
    \dd = \dd_{\Delta} + \dd_{\Sigma} \qquad \dd_{\Delta} = \Pr{}_{p+1,q}\circ \dd \qquad \dd_{\Sigma} = \Pr{}_{p,q+1}\circ\dd
\end{equation}
where $\Pr{}_{p,q}:\ext^{n}T^{*} \rightarrow \ext^{p}\Delta^{*} \otimes \ext^{q}\Sigma^{*}$. We will use the abuse of notation $\ext^{p,q}T^{*} = \ext^{p}\Delta^{*} \otimes \ext^{q}\Sigma^{*}$. Since $\bar{\Delta}\neq \Sigma$, the reader should not be confused and think of $\ext^{p,q}T^{*}$ as a decomposition under some complex structure.

We can already put constraints on the real flux $F$. Recall that the integrability of a class 1 $\SUs{6}$ structure implies that $F_{\bbC} \in \mathcal{F}^{4}_{1}(\Delta)$. As discussed in section \ref{sec:SU*(6) in local SU(2)}, in terms of the global $\SU2$ structure we have $F_\bbC=F-\ii\dd(\omega_u\wedge\zeta_1)$ and hence, using subscripts ${p,q}$ to denote the $\ext^{p,q}T^{*}$ component of the form, we must have
\begin{align}
    0 &= (F_{\bbC})_{3,1} \\
    &= F_{3,1} - \ii (\dd(\omega_{u}\wedge \zeta_{1}))_{3,1} \\
    &= F_{3,1} - \ii \dd_{\Delta}(\omega_{u}\wedge \zeta_{1})_{2,1}  \\
 \Rightarrow F_{3,1} &= \ii \hat{\del}(\omega_{u}\wedge \zeta_{1})_{2,1}
\end{align}
where we have decomposed  a tangential Dolbeault operator $\hat{\del}$ coming from the integrable hyperk\"ahler structure on $M_{\text{HK}}$. Since $F$ is a real form\footnote{Again, since we do not have a complex structure $\bar{F}_{3,1} \neq F_{1,3}$ and so we need to be careful. One can use the complex structure on $M_{\text{HK}}$ however to form constraints.}, we can put constraints on $F_{1,3}$.
 
Now suppose we have some integrable deformation $R=\trivector + r+...$ such that $\trivector$ is globally non-vanishing\footnote{This restricts us to deformations into class 0 backgrounds with a smooth metric.}. Since this is an integrable deformation, we have
\begin{equation}\label{eq:trivector obstruction 1}
    \dd_{\Delta}\trivector = 0 \qquad \dd_{\Delta}r - j\trivector \lrcorner j^{2}(F_{\bbC})_{2,2} = 0
\end{equation}
Since $\trivector$ is non-vanishing we can define $\trivector^{-1} \in \Gamma(\ext^{3}\Sigma^{*})$ to be the unique section such that $\trivector \lrcorner \trivector^{-1} = 1$. \eqref{eq:trivector obstruction 1} then implies
\begin{equation}
    (F-\ii\dd(\omega_{u}\wedge \zeta_{1}))_{2,2} = -\dd_{\Delta}(r\cdot \trivector^{-1})
\end{equation}
Again, using reality conditions, one can now completely determine the form of $F$ and we find
\begin{equation}\label{eq:trivector obstruction 2}
    F-\ii\dd(\omega_{u}\wedge \zeta_{1}) + \dd(r\cdot \trivector^{-1}) = \dd_{\Sigma}\rho
\end{equation}
some $\rho\in \Gamma(\ext^{1,2}T^{*})$ that can be described explicitly in terms of $r,\trivector$. The left hand side of \eqref{eq:trivector obstruction 2} is clearly $\dd$-closed and hence the right hand side must be $\dd_{\Delta}$-closed. Careful consideration of this equation and application of the $\hat{\del}\hat{\delb}$-lemma induced from integrable hyperk\"ahler base implies we can therefore write $\dd_{\Sigma}\rho = \dd_{\Delta} \sigma$ for some $\sigma \in \Gamma(\ext^{0,3}T^{*})$. Hence we have
\begin{equation}
    F-\ii\dd(\omega_{u}\wedge \zeta_{1}) + \dd(r\cdot\trivector^{-1}) - \dd\sigma = 0
\end{equation}
We then see that
\begin{equation}
    \Xi \neq 0 \qquad \Rightarrow \qquad [F]=0 \in H^{4}(M)
\end{equation}
or, put another way, the cohomology class of the flux acts as an obstruction to the trivector deformation as expected.

\section{Discussion}

In this paper we defined and classified a new object in $\ER{6}$ generalised geometry which we called an exceptional complex structure and used it to analyse generic supersymmetric $D=5$ Minkowski backgrounds of M-theory. These are the analogue of $\SL{3,\bbC}$ structures in conventional geometry, or $\SU{3,3}$ structures in Hitchin's generalised geometry, and they extend the definition of exceptional complex structures in \cite{Ashmore:2019rkx,Ashmore:2019qii} to $D=5$ backgrounds. 
We saw that ECSs fell into three families labelled \emph{type} and \emph{class}. In each case, along with some extra data, the ECS defined a conventional $G$-structure on $T$; in particular for class 0 type 0 backgrounds this was simply an $\SL{3,\bbC}$ structure. Integrability of the ECS did not however, in general, lead to integrability of the conventional $G$-structure but instead constrained part of the intrinsic torsion.

The $D=5$ supersymmetric background define a $\SUs6$ structure \cite{Ashmore:2015joa} that encodes the massless hypermultiplet scalar degrees of freedom in the five-dimensional effective theory. We showed that for each such `H-structure' there is an $S^2$ of associated ECSs. 
While the type of the associated ECS may vary, we found that the class is fixed and constant on the manifold, so that $\SUs{6}$ structures were of either class 0 and class 1 matching the two types of solution identified in \cite{Gauntlett:2004zh}. Class 0 $\SUs{6}$ structures describe flux-deformed Calabi--Yau solutions in the sense of section \ref{sec: Minkowski Backgrounds}, while class 1 correspond to a an M5 brane wrapped on a circle transverse to $M_{\text{HK}}\times \bbR$, where $M_{\text{KH}}$ is hyperk\"ahler. We analysed the integrability conditions of the $\SUs{6}$ structures in multiple ways and found that class 0 solutions always have an integrable $\SL{3,\bbC}$ structure on $T$, but it is not necessarily metric-compatible. We also found that class 1 structures had a particular non-integrable $\SU{2}\times \GL{1,\bbR}\times \bbR$ structure. In each case, the integrability conditions had natural interpretations as K\"ahler quotients, or equivalently as the extremisation of a Hitchin functional. 

Using the classification of $\SUs{6}$ structures and their integrability, we were able to find the hypermultiplet moduli of arbitrary backgrounds. 
For class 0 structures, that the moduli correspond to the complexification of the moduli space of the associated $\SL{3,\bbC}$ structure. This was true independent of the flux, showing that flux deformed Calabi--Yau solutions remarkably have the same hypermultiplet moduli as the fluxless Calabi--Yau background. For class 1 solutions, we found the moduli in terms of the cohomology of some differential $\dd_{\Delta,F}$ and used this to show that a non-trivial flux obstructs some of the moduli. This can be viewed as the statement that some deformations would break the supersymmetry of the wrapped brane configuration and so are lifted. 

We saw that an integrable $\SUs6$ structure defined a rich set of cohomologies, with an $S^2$ of natural `exceptional Dolbeault operators' dependent on which compatible ECS one chooses. Provided the deformation problem for the $\SUs6$ structure  is well-defined, there must be relations between the corresponding cohomologies since we can parameterise the deformation using any of the different ECSs. In the class 0 case, this was equivalent to the Hodge Theorem relating de Rham and Dolbeault cohomology groups.  

Although we focused on the M-theory case, the general formalism is equally applicable to type II theories. The main difference would be the classification of the structures in terms of type and class and the corresponding modification of the moduli space calculations. Type IIA should follow straightforwardly from M-theory with the generic ECS being again type 0. For type IIB however it is easy to see that the generic case is type 1, and so the distribution $\Delta$ in the analogue of proposition \ref{prop:form of ECS} is never the whole of $T_\bbC$. Nonetheless one expects that the tools used to analyse type 3 ECS here would carry over to the type IIB case. 

Another very natural extension is to try to understand obstructions in this theory. In general the K\"ahler quotient only matches the moduli space if the quotient group has a free action on the space of structures. While there are generally no generalised diffeomorphisms that fix the full supergravity solutions, the analogue of the statement that a Calabi--Yau has no isometries, this may not be the case for ECS alone. For example, in Calabi--Yau backgrounds, we have $L_{V}\JUs = 0$ for $V=v$ where $v$ is a holomorphic vector field, i.e. a real vector field such that $\mathcal{L}_{v}\Omega$ is proportional to $\Omega$. While on a generic complex manifold there can be an infinite number of solutions to this equation, Calabi--Yau manifolds are unobstructed \cite{Tian88,1989CMaPh.126..325T} and hence have no holomorphic vector fields. It would be interesting to see if a similar statement holds when we move to arbitrary flux backgrounds.

One of the intriguing observations in this work is that, for generic class 0 backgrounds, one simply needs to specify an integrable $\SL{3,\bbC}$ structure. The full supersymmetric background is completed by finding an compatible V-structure. In the Calabi--Yau case, this corresponds to specifying the Ricci-flat K\"ahler metric. More generally it requires identifying a holomorphic vector $\xi$ together a exact flux $F$ satisfying \eqref{eq:Killing spinor eqn b}. This opens up a relatively straightforward way of searching for new supersymmetric flux backgrounds given any complex manifold with vanishing first Chern class. One might wonder if there was an analogue of the Calabi--Yau theorem in this case. In addition, as we have also stressed, there is the possibility that the sources in compact background only effect the V-structure, so that the $\SUs6$ structure is globally well-defined and we can directly identify the hypermultiplet moduli as $H^3(M,\bbC)$. 

Another obvious extension of this work is to try to apply it AdS backgrounds. These are described by $\USp{6}$ structures with weak generalised holonomy \cite{Coimbra:2017fqv,Ashmore:2016qvs,Ashmore:2016oug,Ashmore:2018npi}. In particular, they have an $\SUs{6}$ structure that is not quite integrable but has intrinsic torsion in a $\USp{6}$ singlet. One can always choose the ECS such that this singlet appears in the moment map and not the involutivity condition. Moreover, if we impose the condition from the Killing spinor equations that $f=0$, then this choice of ECS is globally of type 3. Unlike the Minkowski case, however, we find that the class of the structure is not constant on $M$. Indeed, this follows from the AdS Killing spinor equation $\dd(\ee^{3\warp}\sin\theta) = 2m\ee^{2\warp}\tilde{\zeta}_{1}$ where $m$ is the inverse AdS radius. This implies there can be class-changing solutions. Even though the moment map is non-zero, it has been shown by Ashmore, Petrini, Tasker and Waldram \cite{APTW} that one can still interpret the moduli space as a suitable K\"ahler quotient and so find the moduli in terms of the cohomology of $\dd_{\Delta,F}$. Grading this cohomology by the R-charge,  we should get finite dimensional results, which collectively give the whole spectrum of chiral operators on the associated CFT. It would be very interesting to investigate these ideas in the simple cases such as the Maldacena-Nunez solutions \cite{Maldacena:2000mw}.

Another direction, that we hope to address soon, is to try and apply the theory of ECS to topological theories. Recall that the 1-loop corrections to topological string theories could be calculated by quantising the Hitchin functional for generalised complex structures \cite{Pestun:2005rp}. It is natural to wonder if there is an  analogous calculation using the ECS Hitchin function  \eqref{eq:U(6) Kahler Potential}. One could then compare with one-loop corrections to the universal hypermultiplet in five dimensions as in \cite{Strominger:1997eb,Anguelova:2004sj,Antoniadis:2003sw}. 

Finally, one may use the mathematical structure analysed here as a stepping stone for understanding $D=4$ backgrounds of M-theory with non-trivial flux. We already know from \cite{Ashmore:2019qii} that such backgrounds are given in terms of an analogous ECS and we expect backgrounds with non-trivial flux to be also given by type 3 structures. The moduli will therefore be broadly similar, this time counted by a differential associated to some 4-dimensional $\Delta \subset T_{\bbC}$ which defines a transverse holomorphic foliation. These will encode the back-reacted geometry of M5 branes wrapping cycles in the internal manifold and are of interest in model-building.

\acknowledgments

DT and DW are supported in part by the EPSRC New Horizons Grant ``New geometry from string dualities'' EP/V049089/1. DW is also supported in part by the STFC Consolidated Grants ST/P000762/1 and ST/T000791/1. We would like to thank Alex Arvanitakis for helpful comments on the paper.

\appendix

\section{Conventions}\label{app:conventions}

\subsection{Exterior and Interior Products}\label{sec:ext and int product}

We use the following conventions for the exterior and interior products of differential forms and multivectors. Here $u\in \Gamma(\ext^{p}T)$, $v\in \Gamma(\ext^{q}T)$, $\lambda \in \Omega^{p}(M)$, $\rho \in \Omega^{q}(M)$, and we take $p\geq q$ without loss of generality.
\begin{align}
    (u\wedge v)^{a_{1}...a_{p+q}} & = \frac{(p+q)!}{(p!q!}v^{[a_{1}...a_{p}}u^{\_{p+1}...a_{p+q}]} \\
    (\lambda\wedge \rho)_{a_{1}...a_{p+q}} &= \frac{(p+q)!}{p!q!}\lambda_{[a_{1}...a_{p}}\rho_{a_{p+1}...a_{p+q}]} \\
    (v\lrcorner \lambda)_{a_{1}...a_{p-q}} &= \frac{1}{q!}v^{b_{1}...b_{q}}\lambda_{b_{1}...b_{q}a_{1}...a_{p-q}} \qquad \\
    (u\lrcorner \rho)^{a_{1}...a_{p-q}} &= \frac{1}{q!}u^{a_{1}...a_{p-q}b_{1}...b_{q}}\rho_{b_{1}...b_{q}} \\
    (jv\lrcorner j\rho)^{a}{}_{b} &= \frac{1}{(q-1)!}v^{ac_{1}...c_{q-1}}\rho_{bc_{1}...c_{q-1}} \\
    (j\lambda\wedge \rho)_{a,a_{1}...a_{p+q-1}} &= \frac{(p+q-1)!}{(p-1)!q!}\lambda_{a[a_{1}...a_{p-1}}\rho_{a_{p}...a_{p+q-1}]}
\end{align}
Also, for $\trivector\in \Gamma(\ext^{3}T)$, $F \in \Omega^{4}(M)$, we define
\begin{equation}
    (j\trivector\lrcorner j^{2}F)^{a}{}_{bc} = \frac{1}{2}X^{apq}F_{bcpq}
\end{equation}
we can also define for 3-forms $\alpha,\beta,\gamma,\delta$
\begin{equation}
    (j\alpha\wedge \beta)\wedge (j\gamma\wedge\delta) = \left(\frac{6!}{3!\,2!}\right)^{2}\alpha_{a_{1}b_{2}b_{3}}\beta_{b_{4}b_{5}b_{6}}\gamma_{b_{1}a_{2}a_{3}} \delta_{a_{4}a_{5}a_{6}}
\end{equation}
Indices with the same letter are antisymmetrised.

\subsection{\texorpdfstring{$\ER{6}$}{E(6(6)} Generalised Geometry}\label{sec:Ed algebra for M-theory}

The generalised tangent bundle, the adjoint bundle, and the $N$ bundle for $\ER{6}$ geometry are as follows
\begin{align}
    E &= T\oplus \ext^{2}T^{*} \oplus \ext^{5}T^{*} \\
    \ad \tilde{F} &= \bbR \oplus (T\otimes T^{*}) \oplus \ext^{3}T^{*}\oplus \ext^{6}T^{*} \oplus \ext^{3}T \oplus \ext^{6}T \\
    N &= T^{*} \oplus \ext^{4}T^{*}\oplus (T^{*}\otimes \ext^{6}T^{*})
\end{align}
We take the following sections of these bundles, where each term matches with the expressions above in the obvious way.
\begin{equation}
    V = v+\omega +\sigma+\tau \qquad R = l + r + a+\tilde{a} + \alpha + \tilde{\alpha}
\end{equation}
The following gives the adjoint action $R\cdot V = V'$
\begin{align}
    v' &= lv +r\cdot v + \alpha\lrcorner\omega - \tilde{\alpha}\lrcorner\sigma \\
    \omega' &= l\omega + r\cdot \omega + v\lrcorner a + \alpha\lrcorner\sigma \\
    \sigma' &= l\sigma + r\cdot \sigma + v\lrcorner\tilde{a} + a\wedge \omega
\end{align}
The following gives the Lie algebra bracket $[R,R']=R''$
\begin{align}
    l'' &= \tfrac{1}{3}(\alpha\lrcorner a' - \alpha'\lrcorner a) + \tfrac{2}{3}(\tilde{\alpha}'\lrcorner\tilde{a} - \tilde{\alpha}\lrcorner\tilde{a}') \\
    \begin{split}
    r'' &= [r,r'] + j\alpha\lrcorner ja' - j\alpha'\lrcorner ja - \tfrac{1}{3}\mathbb{I}(\alpha\lrcorner a' - \alpha'\lrcorner a) \\
    & \qquad +j\tilde{\alpha}'\lrcorner j\tilde{a} - j\tilde{\alpha}\lrcorner j\tilde{a}' - \tfrac{2}{3}\mathbb{I}(\tilde{\alpha}'\lrcorner\tilde{a} - \tilde{\alpha}\lrcorner\tilde{a}')
    \end{split} \\
    a'' &= r\cdot a' - r'\cdot a + \alpha'\lrcorner\tilde{a} - \alpha\lrcorner \tilde{a}' \\
    \tilde{a}'' &= r\cdot \tilde{a}' - r'\cdot \tilde{a} - a\wedge a' \\
    \alpha'' &= r\cdot \alpha' - r'\cdot \alpha + \tilde{\alpha}'\lrcorner a - \tilde{\alpha}\lrcorner a' \\
    \tilde{\alpha}'' &= r\cdot \tilde{\alpha}' - r'\cdot \tilde{\alpha} - \alpha\wedge \alpha'
\end{align}

\subsubsection{Dorfman Derivative}
The following is the Dorfman derivative on vectors.
\begin{align}
    \begin{split}
        L_{V}V' &= \mathcal{L}_{v}v' + (\mathcal{L}_{v}\omega' - v'\lrcorner\dd\omega) + (\mathcal{L}_{v}\sigma' - v'\lrcorner\dd\sigma - \omega'\wedge \dd\omega)
    \end{split}
\end{align}
The following is the Dorfman derivative on adjoint elements.
\begin{align}
    \begin{split}
        L_{V}R &= \mathcal{L}_{v} l + (\mathcal{L}_{v}r + j\alpha\lrcorner j\dd\omega -\tfrac{1}{3}\mathbb{I}\alpha\lrcorner\dd\omega -j\tilde{\alpha}\lrcorner j\dd\sigma + \tfrac{2}{3}\mathbb{I}\tilde{\alpha}\lrcorner\dd\sigma) \\
        & \qquad + (\mathcal{L}_{v}a +r\cdot \dd\omega - \alpha \lrcorner \dd\sigma) +(\mathcal{L}_{v}\tilde{a} + r\cdot \dd \sigma + \dd\omega\wedge a) \\
        & \qquad + (\mathcal{L}_{v}\alpha - \tilde{\alpha}\lrcorner\dd\omega) + \mathcal{L}_{v}\tilde{\alpha} 
    \end{split}
\end{align}
To obtain the twisted Dorfman derivative we make the following substitutions.
\begin{equation}
    \dd\omega \rightarrow \dd\omega - v\lrcorner F \qquad \dd\sigma \rightarrow \dd\sigma - v\lrcorner \tilde{F} + \omega\wedge F
\end{equation}

\subsubsection{The cubic invariant}
The following is the cubic invariant for $\E{6}$
\begin{equation}
    c(V,V,V) = -\left(v\lrcorner \omega\wedge \sigma + \frac{1}{3!}\omega\wedge\omega\wedge\omega \right)
\end{equation}

\subsubsection{The Killing Form}
The Killing form for $\E{d}$ is
\begin{equation}
    \Tr(R,R') = \frac{1}{2}\left( \frac{1}{9-d}\Tr{r}\Tr{r'} + \Tr{rr'} + \alpha\lrcorner a' + \alpha'\lrcorner a - \tilde{\alpha}\lrcorner\tilde{a}' - \tilde{\alpha}'\lrcorner \tilde{a} \right)
\end{equation}

\subsubsection{Projections}
Let $Z = \zeta + u+ s\in \Gamma(E^{*})$. Then the projection $E\times E^{*}\rightarrow \ad \tilde{F}$ is given by
\begin{align}
    l &= -\tfrac{1}{3}u\lrcorner\omega -\tfrac{2}{3}s\lrcorner \sigma \\
    r &= v\otimes \zeta -ju\lrcorner j\omega +\tfrac{1}{3}u\lrcorner\omega \mathbb{I} -js\lrcorner j \sigma + \tfrac{2}{3}s\lrcorner\sigma \mathbb{I} \\
    a &= \zeta \wedge \omega + u\lrcorner\sigma \\
    \tilde{a} &= \zeta\wedge \sigma \\
    \alpha &= v\wedge u + s\lrcorner\omega \\
    \tilde{\alpha} &= -v\wedge s
\end{align}
If $Y = \lambda + \kappa+\mu\in \Gamma(N)$ then the projection $E\times E\rightarrow N$ is given by
\begin{align}\label{eq:N projection}
    \lambda &= v\lrcorner\omega' + v'\lrcorner\omega \\
    \kappa &= v\lrcorner\sigma' + v'\lrcorner\sigma - \omega\wedge\omega' \\
    \begin{split}
    \mu &= (j\omega\wedge \sigma' + j\omega'\wedge \sigma) - \tfrac{1}{4}(\sigma\wedge \omega' + \sigma'\wedge \omega)
    \end{split}
\end{align}

\section{Spinor Bilinears and the Local \texorpdfstring{$\SU{2}$}{SU(2)} Structure}\label{app:bilinears}

We will build the spinor bilinears of the local $\SU{2}$ structure, and the associated generalised $\USp{6}$ structure, from the completely generic spinor
\begin{equation}
    \tilde{\eta} = \sqrt{2}(\cos\alpha \, \eta^{1} + \sin\alpha\, (a\eta^{1} + \sqrt{1-|a|^{2}}\eta^{2})^{*})
\end{equation}
We can parameterise $a=\ee^{\ii\psi}\sin\phi$. Then we take $\epsilon^{+},\epsilon^{-}$ as in section \ref{sec:D=5 geometry}. The non-vanishing bilinears we can form are as follows. \\
\\
\textbf{Scalars}
\begin{align}
    \bar{\epsilon}^{+} \epsilon^{+} = \bar{\epsilon}^{-}\bar{\epsilon}^{-} &= 1 \\
    \bar{\epsilon}^{+}\epsilon^{-} = \bar{\epsilon}^{-}\epsilon^{+} &= \sin\theta \\
    \epsilon^{+\,\mathrm{T}} \epsilon^{+} = -\epsilon^{-\,\mathrm{T}}\epsilon^{-} &= \ee^{-\ii\psi}\sin\phi \, \cos\theta
\end{align}

\noindent \textbf{1-forms}
\begin{align}
    \bar{\epsilon}^{+} \gamma_{(1)} \epsilon^{+} = - \bar{\epsilon}^{-} \gamma_{(1)} \epsilon^{-} &= \cos\theta\,\cos\phi\, \zeta_{1} \\
    \ii\bar{\epsilon}^{+}\gamma_{(1)}\epsilon^{-} = -\ii\bar{\epsilon}^{-}\gamma_{(1)} \epsilon^{+} &= \cos\theta \, \cos\phi\, \zeta_{2}
\end{align}

\noindent \textbf{2-forms}
\begin{align}
    \begin{split}
        -\ii \bar{\epsilon}^{+}\gamma_{(2)} \epsilon^{+}  = -\ii\bar{\epsilon}^{-}\gamma_{(2)} \epsilon^{-} &= \omega_{3}(1-\sin^{2}\phi \, (1-\sin\theta)) - \sin\theta \zeta_{1} \wedge \zeta_{2} \\
        & \qquad - \frac{1}{2}(1-\sin\theta) \sin 2\phi \, (\cos\psi \, \omega_{1} + \sin\psi\, \omega_{2})
    \end{split} \\
    \begin{split}
        \ii \bar{\epsilon}^{+}\gamma_{(2)} \epsilon^{-} = \ii \bar{\epsilon}^{-}\gamma_{(2)} \epsilon^{+} &= \zeta_{1}\wedge \zeta_{2} - \omega_{3} (\sin\theta + \sin^{2}\phi\, (1-\sin\theta)) \\
        & \qquad + \frac{1}{2}(1-\sin\theta) \sin 2\phi \, (\cos\psi \, \omega_{1} + \sin \psi \, \omega_{2} )
    \end{split} \\
    \epsilon^{+\,\mathrm{T}} \gamma_{(2)} \epsilon^{-} = - \epsilon^{-\,\mathrm{T}} \gamma_{(2)} \epsilon^{+} &= \cos\theta \,( \cos \phi \, (\omega_{2} + \ii \omega_{1}) + \ii \ee^{-\ii\psi} \sin \phi \, (\omega_{3} - \zeta_{1} \wedge \zeta_{2}))
\end{align}

\noindent \textbf{3-forms}
\begin{align}
    \begin{split}
        \ii\bar{\epsilon}^{+} \gamma_{(3)} \epsilon^{+} = -\ii\bar{\epsilon}^{-} \gamma_{(3)} \epsilon^{-} &= \cos\theta \, \sin\phi \, \cos\psi\, (\omega_{1}\wedge \zeta_{1} - \omega_{2}\wedge \zeta_{2})  \\
        & \qquad +\cos\theta\, \sin \phi\, \sin\psi \, (\omega_{2} \wedge \zeta_{1} + \omega_{1} \wedge \zeta_{2}) \\
        & \qquad -\cos\theta\, \cos \phi\, \omega_{3}\wedge \zeta_{1}
    \end{split} \\
    \begin{split}
        \bar{\epsilon}^{+}\gamma_{(3)} \epsilon^{-} = -\bar{\epsilon}^{-} \gamma_{(3)} \epsilon^{+} &= \cos\theta \, \sin\phi\, \sin\psi\, (\omega_{1}\wedge \zeta_{1} - \omega_{2}\wedge \zeta_{2}) \\
        & \qquad - \cos\theta \, \sin\phi\, \cos\psi\, (\omega_{2}\wedge \zeta_{1} + \omega_{1} \wedge \zeta_{2}) \\
        & \qquad + \cos\theta\, \cos\phi\, \omega_{3} \wedge \zeta_{2}
    \end{split} \\
    \begin{split}
        \epsilon^{+\,\mathrm{T}}\gamma_{(3)} \epsilon^{+} = \epsilon^{-\,\mathrm{T}}\gamma_{(3)}\epsilon^{-} &= -(\omega_{2} + \ii\omega_{1})\wedge(\sin\theta\,\zeta_{1} - \ii \zeta_{2})  \\
        & \qquad - \frac{1}{2}(1-\sin\theta)\sin^{2}\phi \, (\omega_{2} + \ii\omega_{1}) \wedge (\zeta_{1} + \ii \zeta_{2}) \\
        & \qquad -\frac{1}{2}(1-\sin\theta )\sin^{2}\phi \, \ee^{-2\ii\psi}(\omega_{2} - \ii \omega_{1}) \wedge (\zeta_{1} + \ii \zeta_{2}) \\
        & \qquad + \frac{\ii}{2}(1-\sin\theta) \ee^{-\ii\psi}\sin2\phi\, \omega_{3}\wedge (\zeta_{1} + \ii \zeta_{2})
    \end{split} \\
    \begin{split}
        \epsilon^{+\,\mathrm{T}}\gamma_{(3)} \epsilon^{-} = \epsilon^{-\,\mathrm{T}}\gamma_{(3)} \epsilon^{+} &= -(\omega_{2} + \ii\omega_{1}) \wedge (\zeta_{1} - \ii \sin\theta\, \zeta_{2}) \\
        & \qquad + \frac{1}{2}(1-\sin\theta) \sin^{2}\phi \, (\omega_{2} + \ii \omega_{1}) \wedge (\zeta_{1} + \ii \zeta_{2}) \\
        & \qquad + \frac{1}{2}(1-\sin\theta ) \sin^{2}\phi \, \ee^{-2\ii\psi} (\omega_{2}-\ii\omega_{1} ) \wedge (\zeta_{1} + \ii\zeta_{2}) \\
        & \qquad - \frac{\ii}{2}(1-\sin\theta) \sin 2\phi \, \ee^{-\ii\psi} \omega_{3}\wedge (\zeta_{1} + \ii\zeta_{2})
    \end{split}
\end{align}

\noindent \textbf{4-forms}
\begin{align}
    \begin{split}
        \bar{\epsilon}^{+}\gamma_{(4)} \epsilon^{+} = \bar{\epsilon}^{-} \gamma_{(4)} \epsilon^{-} &= *(\ii \bar{\epsilon}^{+}\gamma_{(2)} \epsilon^{-}) = * (\ii \bar{\epsilon}^{-} \gamma_{(2)} \epsilon^{+} )
    \end{split} \\
    \begin{split}
        - \bar{\epsilon}^{+} \gamma_{(4)} \epsilon^{-} = -\bar{\epsilon}^{-} \gamma_{(4)} \epsilon^{+} &= *(-\ii\bar{\epsilon}^{+} \gamma_{(2)} \epsilon^{+}) = *(-\ii\bar{\epsilon}^{-} \gamma_{(2)} \epsilon^{-})
    \end{split} \\
    \begin{split}
        -\ii\epsilon^{+\,\mathrm{T}} \gamma_{(4)} \epsilon^{+} = \ii \epsilon^{-\,\mathrm{T}}\gamma_{(4)} \epsilon^{-} &= *(\epsilon^{+\,\mathrm{T}}\gamma_{(2)} \epsilon^{-} ) = *(-\epsilon^{-\,\mathrm{T}} \gamma_{(2)} \epsilon^{+})
    \end{split}
\end{align}

\noindent \textbf{5-forms}
\begin{align}
\begin{split}
    \bar{\epsilon}^{+}\gamma_{(5)} \epsilon^{+} = -\bar{\epsilon}^{-}\gamma_{(5)} \epsilon^{-} &= *(\ii\bar{\epsilon}^{+}\gamma_{(1)}\epsilon^{-}) = *(-\ii\bar{\epsilon}^{-} \gamma_{(1)} \epsilon^{+}) \\
    &= \cos\theta\,\cos\phi\, *\zeta_{2}
\end{split} \\
\begin{split}
    -\ii \bar{\epsilon}^{+}\gamma_{(5)} \epsilon^{-} = \ii \bar{\epsilon}^{-} \gamma_{(5)} \epsilon^{+} &= *(\bar{\epsilon}^{+}\gamma_{(1)} \epsilon^{+} ) = *(-\bar{\epsilon}^{-}\gamma_{(1)} \epsilon^{-}) \\
    &= \cos\theta \, \cos\phi\,*\zeta_{1}
\end{split}
\end{align}

\noindent \textbf{6-forms}
\begin{align}
    -\ii\bar{\epsilon}^{+}\gamma_{(6)}\epsilon^{+} = -\ii \bar{\epsilon}^{-}\gamma_{(6)} \epsilon^{-} &= \sin\theta\, \vol \\
    -\ii\bar{\epsilon}^{+} \gamma_{(6)} \epsilon^{-} = -\ii \bar{\epsilon}^{-} \gamma_{(6)} \epsilon^{+} &= \vol \\
    -\ii\epsilon^{+\,\mathrm{T}} \gamma_{(6)} \epsilon^{-} = \ii \epsilon^{-\,\mathrm{T}}\gamma_{(6)}\epsilon^{+} &= \ee^{-\ii\psi}\sin\phi\, \cos\theta\, \vol
\end{align}

\section{ECS in \texorpdfstring{$\ER{6}$}{RxE6(6)} Geometry}\label{app:ECS in E6}

\begin{propn}
Any isotropic subbundle $L\subset E_{\bbC}$ has the form
\begin{equation}
    \ee^{\alpha+\beta}\cdot(\Delta \oplus S_{2} \oplus S_{5}) \label{eq: general iso bundle}
\end{equation}
where $\alpha\in\Omega^{3}(M)$ and$\beta\in \Omega^{6}(M)$ are arbitrary but fixed, and where $\Delta\subset T$, $S_{2}\subset \ext^{2}T^{*}$, $S_{5}\subset \ext^{5}T^{*}$ satisfy the following. For all $v\in \Delta$, $\omega,\omega'\in S_{2}$and $\sigma \in S_{5}$ we have
\begin{equation}
    \arraycolsep = 1.4pt
    \begin{array}{rclcrcl}
    v\lrcorner \omega &=& 0 & \qquad & v\lrcorner\sigma &=& 0 \\
    \omega\wedge \omega' &=& 0 & & j\omega\wedge \sigma &=& 0
    \end{array}
\end{equation}
\end{propn}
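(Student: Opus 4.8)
The plan is to reconstruct $L$ from its associated graded pieces under the natural filtration of $E$ and then gauge it into split form using the nilpotent `shift' subgroup of $\ER6$ generated by $\ext^3 T^*\oplus\ext^6 T^*$, whose exponential is exactly the prefactor $\ee^{\alpha+\beta}$. Concretely, I would first introduce the anchor $\pi:E\to T$, $v+\omega+\sigma\mapsto v$, together with the three-step filtration $0\subset\ext^5 T^*\subset\ext^2 T^*\oplus\ext^5 T^*\subset E$, and set $\Delta:=\pi(L)$, $S_5:=L\cap\ext^5 T^*$, and $S_2:=$ the image in $\ext^2 T^*$ of $L\cap\ker\pi$. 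These are the candidate data; for each $v\in\Delta$ I fix a linear lift $\tilde v=v+\omega_v+\sigma_v\in L$.

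Next I would read off the four pointwise conditions directly from the isotropy relation $L\times_N L=0$, expanded via \eqref{eq:N projection}. Evaluating the three components of the $N$-projection on suitable pairs does the job: pairing a lift $\tilde v$ against a pure element $\omega+\sigma\in L\cap\ker\pi$ gives $v\lrcorner\omega=0$, while pairing $\tilde v$ against a pure five-form $\sigma\in S_5$ gives $v\lrcorner\sigma=0$; pairing two elements of $L\cap\ker\pi$ gives $\omega\wedge\omega'=0$; and pairing such an element against $\sigma\in S_5$ gives $j\omega\wedge\sigma=0$. Several terms in \eqref{eq:N projection} drop out automatically on degree grounds in six dimensions. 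This establishes all the constraints claimed for $\Delta,S_2,S_5$.

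The substantive step is to produce the twist $\ee^{\alpha+\beta}$. Using the lifts $v\mapsto\omega_v$, I would look for $\alpha\in\ext^3 T^*$ with $v\lrcorner\alpha=\omega_v$ for all $v\in\Delta$. The only obstruction to solving this extension problem is that $v\lrcorner(v'\lrcorner\alpha)$ be antisymmetric in $v,v'$, i.e. $v\lrcorner\omega_{v'}+v'\lrcorner\omega_v=0$, which is precisely the first isotropy relation; granting it one builds $\alpha$ by prescribing its components on a basis adapted to a complement of $\Delta$ and setting the remaining complement-only components to zero. Applying $\ee^{-\alpha}$ kills the two-form part of every lift, and subtracting elements of $S_2$ puts the lifts in the form $v+\sigma'_v$. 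An analogous argument produces $\beta\in\ext^6 T^*$: the second isotropy relation now forces $v\lrcorner\sigma'_{v'}+v'\lrcorner\sigma'_v=0$, and under the volume duality $\ext^5 T^*\cong T$ this says the associated map $\hat\psi:\Delta\to T$ obeys $v\wedge\hat\psi v=0$, forcing $\hat\psi$ to be a scalar multiple of the inclusion, so that $\sigma'_v=v\lrcorner\beta$ for a single six-form $\beta=c\,\vol$ modulo $S_5$. Applying $\ee^{-\beta}$ and subtracting $S_5$ makes every lift a pure vector, whence $\ee^{-\alpha-\beta}\cdot L=\Delta\oplus S_2\oplus S_5$.

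I expect the $\beta$-step to be the main obstacle, precisely because $\ext^6 T^*$ is one-dimensional: one must show the five-form data $v\mapsto\sigma'_v$ is not arbitrary but is pinned by isotropy to the image of a single $v\lrcorner\beta$. The volume duality turns this into the statement that a linear map $\hat\psi:\Delta\to T$ with $v\wedge\hat\psi v=0$ for all $v\in\Delta$ is scalar, which is immediate for $\dim\Delta\ge2$ after polarisation; the cases $\dim\Delta\le1$ must be checked separately, using $v\lrcorner S_5=0$ to confirm that any residual freedom lies in $S_5$ and is therefore harmless. Throughout one uses that the $N$-projection is $\ER6$-equivariant, so each gauge transform preserves isotropy and the conditions obtained in the second step persist at every stage.
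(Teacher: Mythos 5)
Your strategy is in essence the paper's own (same invariants $\Delta$, $S_2$, $S_5$, the same four constraints read off from $L\times_N L=0$, the same extension argument for $\alpha$ with the antisymmetry obstruction, and your $\hat\psi$-is-scalar argument is exactly how the paper's homogeneous term $v\lrcorner\beta$ arises), but there is a genuine gap in the final assembly: you never control the five-form tails carried by the \emph{pure-form} part of $L$. Elements of $L\cap\ker\pi$ have the form $\omega+\tau_\omega$ with $\omega\in S_2$ and $\tau_\omega\in\ext^5T^*_\bbC$ defined modulo $S_5$, and the claimed normal form $\ee^{\alpha+\beta}\cdot(\Delta\oplus S_2\oplus S_5)$ forces $\tau_\omega=\alpha\wedge\omega$ (mod $S_5$) for the \emph{same} $\alpha$, since a six-form twist acts trivially on two-forms. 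Your proof handles only the lifts of vectors: when you say ``subtracting elements of $S_2$ puts the lifts in the form $v+\sigma'_v$'' you silently assume the twisted bundle contains pure $S_2$ elements, which is precisely what needs proving. For $\dim\Delta\geq 2$ this is automatic, because isotropy gives $v\lrcorner\tau_\omega=0$ and $\mathcal{F}^5_4(\Delta)=0$; but for $\dim\Delta\leq 1$ — including the $\Delta=0$ configurations the paper needs for its type classification — it fails, and your explicit normalisation ``set the complement-only components of $\alpha$ to zero'' is exactly wrong there, because those are the components that absorb the tails.

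Concrete counterexample: with $\vol=e^{123456}$, take $L=\mathrm{span}_\bbC\{\,e_1,\;e^{23}+e^{23456}\,\}$, noting $e^{23456}=e_1\lrcorner\vol$. This is isotropic: $e_1\lrcorner e^{23}=0$, $e_1\lrcorner e^{23456}=0$, $e^{23}\wedge e^{23}=0$, and $je^{23}\wedge e^{23456}=0$ since $(x\lrcorner e^{23})\wedge(e_1\lrcorner\vol)=e^{23}(x,e_1)\vol=0$ for all $x$. Here $\Delta=\langle e_1\rangle$, $S_2=\langle e^{23}\rangle$, $S_5=0$, and the lift of $e_1$ is pure, so your recipe outputs $\alpha=0$ and then $\beta=0$; but $L\neq \Delta\oplus S_2$, and no six-form can generate the tail $e^{23456}$ on the two-form direction. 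The correct twist is $\alpha=e^{456}$, so that $\ee^{e^{456}}\cdot e^{23}=e^{23}+e^{456}\wedge e^{23}=e^{23}+e^{23456}$ while $\ee^{e^{456}}\cdot e_1=e_1$ — a three-form made entirely of the complement-only components you discarded. This is the step the paper handles by solving the full functional equation for the five-form component $\theta(v,\omega)$, whose general solution $\theta(v,\omega)=\tfrac12(v\lrcorner\alpha)\wedge\alpha+v\lrcorner\beta+\omega\wedge\alpha$ contains the $\omega\wedge\alpha$ term absent from your argument. To close the gap you must use the remaining isotropy relations (the $\kappa$- and $j\omega\wedge\sigma$-components, evaluated on pairs drawn from $L\cap\ker\pi$ and on mixed pairs) to show that $\omega\mapsto\tau_\omega$ is realised, modulo $S_5$, by wedging with a single three-form compatible with the vector lifts, and then fold that three-form into the choice of $\alpha$. (A minor point besides: the relation $v\lrcorner\sigma'_{v'}+v'\lrcorner\sigma'_v=0$ comes from the $\kappa$-component, the third isotropy equation, not the second.)
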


\noindent To prove this, we follow a similar proof for isotropic bundles in $\mathrm{O}(d,d)$ geometry laid out in \cite{Gualtieri04}.

\begin{proof}

The condition for isotropy is $V_{1}\times_{N} V_{2} = 0$ for all $V_{1},V_{2}\in L$ which translates to
\begin{align}
    v_{1}\lrcorner\omega_{1} + v_{2}\lrcorner\omega_{1} &= 0 \label{eq:iso condtition 1} \\
    j\omega_{1}\wedge \sigma_{2} + j\omega_{2}\wedge \sigma_{1} &= 0 \label{eq:iso condition 2} \\
    \omega_{1}\wedge \omega_{2} - v_{1}\lrcorner\sigma_{2} - v_{2}\lrcorner\sigma_{1} &= 0 \label{eq:iso condition 3}
\end{align}
It is a simple check to see that any $L$ of the form \eqref{eq: general iso bundle} satisfies these conditions and hence defines an isotropic bundle. Hence it is left to show that any isotropic bundle takes that form.

Clearly we have $\Delta = \anchor(L)$. Suppose we have some
\begin{equation}
    \omega_{1},\omega_{2} \in \pi_{\wedge^{2}T^{*}} \left( (\ext^{2}T^{*}\oplus \ext^{5}T^{*}) \cap L\right)
\end{equation}
where $\pi_{\wedge^{2}T^{*}}:E\rightarrow \ext^{2}T^{*}$, and similarly for $\pi_{T}$, $\pi_{\wedge^{5}T^{*}}$. From \eqref{eq:iso condtition 1} and \eqref{eq:iso condition 3} we see that for any $v\in \Delta$
\begin{equation}
\arraycolsep = 1.4pt
    \begin{array}{rclcrcl}
    v\lrcorner\omega_{i} &=& 0 & \quad \Rightarrow \quad & \omega_{i}&\in& \mathcal{F}^{2}_{1}(\Delta) \\
    \omega_{1}\wedge \omega_{2} &=& 0 & \quad \Rightarrow \quad & \omega_{i} &\in& S_{2}
    \end{array}
\end{equation}
Now consider the element
\begin{equation}
    \alpha(v) := \pi_{\wedge^{2}T^{*}}\left( \pi_{T}^{-1}(v)\cap L \right) \in \frac{\ext^{2}T^{*}}{\pi_{\wedge^{2}T^{*}}( (\ext^{2}T^{*}\oplus \ext^{5}T^{*})\cap L )}
\end{equation}
From \eqref{eq:iso condtition 1} for $V\times_{N} V = 0$, we see that we need
\begin{equation}
    v\lrcorner \alpha(v) \quad \forall \, v\in \Delta \quad \Rightarrow \quad \alpha\in \ext^{3}T^{*} \text{ (WLOG)}
\end{equation}
Then we can write any element $\lambda \in \pi_{\wedge^{2}T^{*}}(L)$ as
\begin{equation}
    \lambda = v\lrcorner\alpha + \omega \qquad v\in \Delta, \, \omega \in S_{2}
\end{equation}

No we consider any $\sigma \in \ext^{5}T^{*}\cap L$. From \eqref{eq:iso condition 2} and \eqref{eq:iso condition 3} we see that for all $v\in \Delta$, $\omega \in S_{2}$ we need
\begin{equation}
    \arraycolsep = 1.4pt
    \begin{array}{rclcrcl}
       v\lrcorner\sigma &=& 0 & \quad \Rightarrow \quad & \sigma &\in& \mathcal{F}^{5}_{4}(\Delta) \\
       j\omega\wedge \sigma &=& 0 & \quad \Rightarrow \quad & \sigma &\in& S_{5}
    \end{array}
\end{equation}
Note that we also need
\begin{equation}
    j(v\lrcorner\alpha)\wedge \sigma = 0 \quad \Leftrightarrow \quad (\vol^{\#}\lrcorner \sigma)\lrcorner(v\lrcorner\alpha) = 0
\end{equation}
However, since $\mathcal{F}^{5}_{4}(\Delta) = 0$ if $\dim\Delta > 1$, one can check that
\begin{equation}
    (\vol^{\#}\lrcorner \sigma)\lrcorner(v\lrcorner\alpha) \propto v\lrcorner(v\lrcorner\alpha) = 0
\end{equation}
Now consider the element
\begin{equation}
    \theta(v,\omega) := \pi_{\wedge^{5}T^{*}}\left((\pi^{-1}_{T}(v) + \pi_{\wedge^{2}T^{*}}^{-1}(\omega))\cap L\right) \in \frac{\ext^{5}T^{*}}{\ext^{5}T^{*}\cap L}
\end{equation}
From \eqref{eq:iso condition 3} we need
\begin{equation}
    (\omega_{1}+ v_{1}\lrcorner\alpha) \wedge (\omega_{2}+v_{2}\lrcorner\alpha)  - v_{1}\lrcorner\theta(v_{2},\omega_{2}) - v_{2}\lrcorner\theta(v_{1},\omega_{1})
\end{equation}
which has the general solution
\begin{equation}
    \theta(v,\omega) = \frac{1}{2}v\lrcorner\alpha\wedge \alpha + v\lrcorner\beta + \lambda\wedge \alpha
\end{equation}
where $\beta \in \ext^{6}T^{*}$ is arbitrary. It is a simple check to see that this also satisfies \eqref{eq:iso condition 2}. Checking the action of $\ee^{\alpha+\beta}$ we see that we have
\begin{equation}
    L = \ee^{\alpha+\beta}\cdot(\Delta \oplus S_{2}\oplus S_{5})
\end{equation}
\end{proof}

\begin{propn}
\begin{equation}
    \dim_{\bbC}L = 6 \qquad \Leftrightarrow \qquad \type L = 0,3,6
\end{equation}
\end{propn}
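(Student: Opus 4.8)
The plan is to reduce the problem to linear algebra using the preceding proposition, which writes any isotropic $L$ as $\ee^{\alpha+\beta}\cdot(\Delta\oplus S_2\oplus S_5)$. Since $\ee^{\alpha+\beta}$ is an invertible $\ER6$ transformation whose adjoint parameters $\alpha\in\ext^3T^*_\bbC$ and $\beta\in\ext^6T^*_\bbC$ only raise form degree, it acts trivially on the $T$-component of $E$ and so preserves both isotropy and the anchor projection. Hence $\dim_\bbC L = \dim\Delta + \dim S_2 + \dim S_5$, and writing $d:=\dim_\bbC\Delta$ we have $a(L)=\Delta$, so $\type L = 6-d$. The statement then becomes: $\dim\Delta+\dim S_2+\dim S_5 = 6$ is possible exactly when $d\in\{0,3,6\}$.

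First I would record the elementary bounds. Setting $W:=\mathrm{Ann}(\Delta)\subset T^*_\bbC$ with $\dim W = 6-d$, the contraction conditions $v\lrcorner\omega=0$, $v\lrcorner\sigma=0$ give $S_2\subseteq\ext^2W$ and $S_5\subseteq\ext^5W$, so $\dim S_5\le\binom{6-d}{5}$, which vanishes unless $d\le 1$. The substantive step is bounding $\dim S_2$. The diagonal isotropy $\omega\wedge\omega=0$ forces every element of $S_2$ to be a decomposable two-form, and the full condition $\omega\wedge\omega'=0$ makes $S_2$ totally isotropic for the wedge pairing $\ext^2W\times\ext^2W\to\ext^4W$. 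Invoking the classification of linear families of decomposable two-forms — each is contained in one of $v\wedge W$ (dimension $\dim W-1$) or $\ext^2 U$ with $\dim U=3$ (dimension $3$) — gives $\dim S_2\le\max(\dim W-1,3)$, sharpened to $\dim S_2\le3$ when $\dim W=4$ since the wedge form is then a nondegenerate quadric on $\ext^2W\cong\bbC^6$.

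With these bounds I would run the case analysis on $d=0,\dots,6$, using the coupling $j\omega\wedge\sigma=0$ to control the mixed cases. The clean outcomes are: $d=6$ forces $S_2=S_5=0$, giving $\dim L=6$ (type $0$); $d=3$ allows $S_2=\ext^2W$ of dimension $3$, giving $\dim L=6$ (type $3$); and $d=0$ realizes $\dim L=6$, for instance via $S_2=\ext^2U$ and a three-dimensional $S_5$ cut out by the coupling (type $6$). By contrast $d=2,4,5$ give $\dim S_2+\dim S_5$ strictly below $6-d$, so $\dim L<6$. This establishes the forward implication, and running the same count on maximal isotropics gives the converse realizability of each of the three types.

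The hard part will be the case $d=1$ (type $5$). Here the naive bounds give $\dim S_2\le4$ and $\dim S_5\le1$, summing to exactly $6-d=5$, and the coupling $j\omega\wedge\sigma=0$ turns out to be automatically satisfied, so isotropy and dimension alone do \emph{not} exclude it — indeed $\langle v\rangle\oplus(e_1\wedge W)\oplus\ext^5W$ is a genuine six-dimensional isotropic of type $5$. Ruling this out requires bringing in the remaining defining data of Definition~\ref{def:E6 ECS}: the reality condition $L_1\cap\bar L_1=\{0\}$ together with the non-degeneracy input behind $\dim(\Delta\cap\bar\Delta)\le1$, which force $\Delta+\bar\Delta$ to be too large for $d=1$ to persist. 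I would therefore treat the sharp isotropic-subspace bound on $\dim S_2$, the evaluation of the $S_2$–$S_5$ coupling, and the elimination of the borderline type $5$ via the reality condition as the crux of the argument, with the other cases reducing to bookkeeping.
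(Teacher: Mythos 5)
Your strategy is essentially the paper's: reduce via the normal-form proposition to counting $\dim\Delta+\dim S_{2}+\dim S_{5}$, then run cases on $d=\dim\Delta$, bounding $S_{2}$ by the classification of linear families of decomposable two-forms. For $d=0,2,3,4,5,6$ your counts agree with the paper's proof. The divergence is at $d=1$, and here you have in fact caught a flaw in the paper's own argument: the paper's case $k=5$ asserts that a maximal $S_{2}\subset\ext^{2}W$ with $\dim W=5$ and $\omega\wedge\omega'=0$ has dimension 3, but as your classification says, the family $\theta\wedge W$ has dimension $\dim W-1=4$, and your candidate $\langle v\rangle\oplus(\theta\wedge W)\oplus\ext^{5}W$ satisfies all four isotropy conditions of the normal-form proposition — the coupling $j\omega\wedge\sigma=0$ is indeed automatic, since $(x\lrcorner j\omega\wedge\sigma)=(x\lrcorner\omega)\wedge\sigma$ with $x\lrcorner\omega\in W$ and $\sigma$ spanning $\ext^{5}W$, so the product lands in $\ext^{6}W=0$. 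So at the level of bare isotropics there \emph{is} a six-dimensional type-5 subbundle, and the equivalence as literally stated cannot be proved by isotropy and dimension counting alone; the paper's proof of $k=5$ is incorrect as written.

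However, your proposed repair is where your own argument has a genuine gap. The reality condition $L_{1}\cap\bar{L}_{1}=\{0\}$ does \emph{not} exclude the type-5 space: take $\Delta=\langle\hat{e}_{1}-\ii\hat{e}_{2}\rangle$, so that $W=\{\xi\,|\,\xi_{1}=\ii\xi_{2}\}$, and $\theta=e^{1}-\ii e^{2}+e^{3}\in W$. Then $\Delta\cap\bar{\Delta}=0$ and $S_{5}\cap\bar{S}_{5}=0$, and any common element of $\theta\wedge W$ and $\bar{\theta}\wedge\bar{W}$ is necessarily proportional to $\theta\wedge\bar{\theta}$, which lies in $\theta\wedge W$ only if $\bar{\theta}\in W$ — false for this choice — so $S_{2}\cap\bar{S}_{2}=0$ as well; since the untwisted $L_{1}$ is graded by form degree, $L_{1}\cap\bar{L}_{1}=0$ follows. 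Moreover the constraint $\dim(\Delta\cap\bar{\Delta})\leq 1$ is vacuous when $d=1$, so it supplies no extra input. The elimination of type 5 must therefore come from the remaining parts of definition \ref{def:E6 ECS} — the condition $L_{1}\cap L_{0}=\{0\}$ or the negativity condition (iv)/(iv'), i.e.\ the same machinery the paper deploys to kill the surviving six-dimensional type-6 isotropic — and this is precisely the step you defer as ``the crux'' without carrying it out. As it stands, your proposal correctly handles every case the paper handles correctly, rightly identifies the $d=1$ borderline the paper's proof mishandles, but does not complete the proof of the proposition in its intended almost-ECS reading, because the announced exclusion of type 5 via reality fails on explicit examples.
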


\begin{proof}
We will consider each $\type L = k$ for $k=0,1,...,6$ \\
\\
\textbf{k=0}\\
All type 0 bundles are of the form $L = \ee^{\alpha+\beta}\cdot T$ which is clearly 6 dimensional \\
\\
\textbf{k=1}\\
If we have a type 1 bundle then $\dim\Delta = 5$ and $\dim \mathcal{F}^{2}_{1}(\Delta) = \dim \mathcal{F}^{5}_{4}(\Delta) = 0$. Hence the bundle looks like $\ee^{\alpha+\beta}\cdot \Delta$ again. However, this is just 5 dimensional. \\
\\
\textbf{k=2} \\
In this case we have $rk\Delta = 4$, $\dim\mathcal{F}^{2}_{1}(\Delta) = 1$, $\dim \mathcal{F}^{5}_{4}(\Delta) = 0$ and hence the isotropic bundle is of the form $\ee^{\alpha+\beta}\cdot(\Delta \oplus \mathcal{F}^{2}_{1}(\Delta))$ which is 5 dimensional. \\
\\
\textbf{k=3} \\
We have $\dim\Delta = 3$, $\dim\mathcal{F}^{2}_{1} = 3$, $\dim\mathcal{F}^{5}_{4} = 0$. Hence we can take $L = \ee^{\alpha+\beta}\cot(\Delta\oplus \mathcal{F}^{2}_{1}(\Delta))$ which is 6 dimensional. \\
\\
\textbf{k = 4}\\
We have $\dim \Delta = 2$, $\dim\mathcal{F}^{2}_{1}(\Delta) = 6$, $\dim\mathcal{F}^{5}_{4}(\Delta) = 0$. However, it is not that case that $\omega\wedge \omega' = 0 $ for all $\omega,\omega'\in \mathcal{F}^{2}_{1}(\Delta)$. We take any subspace which satisfies this condition which has maximal dimension 3. Hence the isotropic bundle of the form $\ee^{\alpha+\beta}\cdot(\Delta\oplus S_{2})$ has maximal dimension 5\\
\\
\textbf{k = 5}\\
We have $\dim\Delta = 1$, $\dim \mathcal{F}^{2}_{1} = 10$, $\dim\mathcal{F}^{5}_{4}(\Delta) = 1$. Again, we choose a maximal $S_{2}\subset \mathcal{F}^{2}_{1}(\Delta)$ satisfying $\omega\wedge \omega' = 0$. This will have dimension 3 and so the isotropic bundle $\ee^{\alpha+\beta}\cdot(\Delta\oplus S_{2}\oplus S_{5})$ has dimension 5. \\
\\
\textbf{k = 6}
In this case $\dim\Delta = 0$. It will be convenient to parameterise $S_{5} = \Gamma\lrcorner \vol$ where $\Gamma\subset T$. We will also choose a basis $e^{i}$ of $T^{*}$ with dual basis $\hat{e}_{i}$ of $T$. The only possible type 6 solutions are given in the table below.
\begin{table}[h]
    \centering
    \begin{tabular}{cccc}
      $\Gamma$ & $S_{2}$ & $L$ & $\dim L$ \\
      \hline
      $T$ & 0 & $\ext^{5}T^{*}$ & 6 \\
      $\left<\hat{e}_{1},...,\hat{e}_{4}\right>$ & $\left<e^{5}\wedge e^{6}\right>$ & $S_{2}\oplus S_{5}$ & 5 \\
      $\left<\hat{e}_{1},\hat{e}_{2}, \hat{e}_{3}\right>$ & $\left<e^{i}\wedge e^{j}\,|\,i,j=4,5,6 \right>$ & $s_{2}\oplus S_{5}$ & 6
    \end{tabular}
\end{table}

\end{proof}

\begin{propn}
There are no ECS of type 6
\end{propn}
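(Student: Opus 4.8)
The plan is to recast the condition $\type L_1|_p = 6$ as an algebraic statement about the adjoint tensor $\JUs$ generating the structure, and then contradict the normalisation of $\JUs$. Recall that $L_1$ is the $+\ii$-eigenbundle of $\JUs$, and that $\JUs$, lying in the orbit $\mathcal{O}$ of a highest-root $\su2$ generator, acts on $E_\bbC$ with eigenvalues $0,\pm\ii$ only; it is therefore semisimple with $\image\JUs = L_1\oplus L_{-1}$. Since the anchor $\anchor$ is real we have $\anchor(L_{-1})=\overline{\anchor(L_1)}$, so $\type L_1|_p=6$, i.e.\ $\anchor(L_1)|_p=0$, forces $\anchor(L_{-1})|_p=0$ as well, and hence $\anchor\circ\JUs = 0$ at $p$. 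This is the reduction I would carry out first.

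Next I would read off the consequences of $\anchor\circ\JUs=0$ using the adjoint action recorded in appendix \ref{app:conventions}. Writing $\JUs = l + r + a + \tilde a + \alpha + \tilde\alpha$ in the standard parametrisation (so $r\in T\otimes T^*$, and $\alpha\in\ext^3T$, $\tilde\alpha\in\ext^6T$ are the multivector parts), the vector component of $\JUs\cdot V$ for $V=v+\omega+\sigma$ is $lv + r\cdot v + \alpha\lrcorner\omega - \tilde\alpha\lrcorner\sigma$. Requiring this to vanish for all $V$ gives, term by term, $r=-l\,\mathbb{I}$, $\alpha=0$ and $\tilde\alpha=0$: a type-6 $\JUs$ can carry only a scalar, a pure-trace $T\otimes T^*$ part, a three-form and a six-form. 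Feeding this into the Killing form of appendix \ref{app:conventions}, the three- and six-form parts drop out (they pair only with the now-vanishing $\alpha,\tilde\alpha$), leaving $\Tr\JUs^2 = \tfrac12\bigl(\tfrac13(\Tr r)^2 + \Tr r^2\bigr) = \tfrac12(12l^2+6l^2)=9l^2$.

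Since $\JUs$ is a real section of $\ad\tilde{F}$, $l\in\bbR$ and so $\Tr\JUs^2 = 9l^2\ge 0$, contradicting the normalisation $\Tr\JUs^2=-1$ satisfied by every element of $\mathcal{O}$ (equivalently, the negativity demanded by condition (iv$'$) of definition \ref{def:E6 ECS} is exactly what a type-6 structure fails). Hence no type-6 ECS exists. The one delicate point I expect is the passage in the first paragraph: strictly, an almost ECS is defined by $L_1$ through conditions (i)--(iv), so I must be sure that this data is genuinely equivalent to an adjoint tensor $\JUs\in\mathcal{O}$ before invoking $\Tr\JUs^2=-1$. If one prefers to avoid that identification, the self-contained alternative flagged in the main text is to use the two preceding propositions: a type-6, six-dimensional isotropic $L_1$ must, in a suitable frame, be one of the two $k=6$ configurations (either $\ee^{\alpha+\beta}\cdot\ext^5T^*_\bbC$, or $\ee^{\alpha+\beta}\cdot(S_2\oplus S_5)$ with $S_2=\ext^2\langle e^4,e^5,e^6\rangle$ and $S_5=\langle\hat e_1,\hat e_2,\hat e_3\rangle\lrcorner\vol$), and then check directly that $L_1\cap L_0\neq\{0\}$, with $L_0$ constructed from $L_1$ by the formula in definition \ref{def:E6 ECS}, so that condition (iii) fails; here the bookkeeping of $L_0$ together with the complex twist is the main chore.
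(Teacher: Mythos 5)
Your proof is correct, and it takes a genuinely different route from the paper's. The paper argues entirely at the level of the subbundle $L_{1}$: the preceding classification of isotropics plus the dimension count leaves exactly two type-6 candidates, $L_{1}=\ext^{5}T^{*}_{\bbC}$ and $L_{1}=\ee^{\alpha}\cdot(S_{2}\oplus S_{5})$, and each is eliminated by showing condition (iii) of definition \ref{def:E6 ECS} fails — the first because $\ext^{5}T^{*}_{\bbC}$ is (the complexification of) a real bundle, so $\bar{L}_{1}=L_{1}$, and the second by computing the null space $A$ and showing the resulting $L_{0}=(L_{1}\times_{\ad}A)\cdot L_{-1}$ lands inside $\ext^{5}T^{*}_{\bbC}\subset L_{1}\oplus L_{-1}$, so $L_{0}\cap(L_{1}\oplus L_{-1})\neq\{0\}$. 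You instead work at the level of the adjoint tensor: type 6 forces $\anchor\circ\JUs=0$, the adjoint action of appendix \ref{app:conventions} then gives $\alpha=\tilde{\alpha}=0$ and $r=-l\,\mathbb{I}$, and the Killing form yields $\Tr\JUs^{2}=\tfrac{1}{2}\bigl(\tfrac{1}{3}(\Tr r)^{2}+\Tr r^{2}\bigr)=9l^{2}\geq 0$, contradicting $\Tr\JUs^{2}=\Tr\mathfrak{t}^{2}=-1$ on the orbit $\mathcal{O}$. All three steps check out against the appendix formulas. (In fact, since the $\bbR$ summand of $\ad\tilde{F}$ is central under the adjoint action, any $\JUs|_{p}\in\mathcal{O}\subset\e{6(6)}$ has $l=0$, so your type-6 $\JUs$ collapses to the nilpotent element $a+\tilde{a}$, which cannot have eigenvalues $\pm\ii$ at all — an even faster dead end; either way the contradiction stands.) What each buys: your argument is shorter, frame-independent, avoids constructing $L_{0}$ altogether, and locates the obstruction in the signature of the Killing form; the paper's argument works directly from conditions (i)--(iv) on $L_{1}$, which is the honest setting for the appendix classification.

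The one caveat is exactly the delicate point you flag yourself: the proposition, as used in proposition \ref{prop:form of ECS}, is a statement about bundles satisfying definition \ref{def:E6 ECS}, and the equivalence of that $L_{1}$-based definition with the orbit definition $\JUs|_{p}\in\mathcal{O}$ is asserted rather than separately proved in the text, so a proof resting on $\Tr\JUs^{2}=-1$ quietly leans on it. Your fallback — running the two isotropic configurations through condition (iii) directly — closes that gap and coincides with the paper's proof; the only slip there is attributing both failures to $L_{1}\cap L_{0}\neq\{0\}$, whereas for $L_{1}=\ext^{5}T^{*}_{\bbC}$ the clause that fails is $L_{1}\cap\bar{L}_{1}=\{0\}$ (indeed $\bar{L}_{1}=L_{1}$); the $L_{0}$ computation is needed only for the second configuration.
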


\begin{proof}
There are two different 6 dimensional isotropic spaces of type 6, which shown in the table above. We will show that these do not satisfy the remaining conditions in definition \ref{def:E6 ECS}.

Firstly, let's consider $L_{1}= \ext^{5}T^{*}$. Clearly, this does not satisfy condition (iii) as $\bar{L}_{1} = L_{1}$. Therefore, this cannot be an ECS.

Secondly, let's consider $L_{1} = \ee^{\alpha}\cdot (S_{2}\oplus S_{5})$, where $S_{2},S_{5}$ are as in the third row of the table above. We will show that $L_{0}\cap(L_{1}\oplus L_{-1}) \neq 0$ and hence this does not satisfy condition (iii). To find $L_{0}$ we need to find the null space
\begin{equation}
    A = \left\{ Z \in E^{*} \,|\, \left<V,Z\right> = 0 \; \forall\,V\in L_{1}\oplus L_{-1} \right\}
\end{equation}
Using the same notation as above, it is easy to see that
\begin{equation}
    T^{*}\oplus \left< \hat{e}^{i}\wedge \hat{e}^{j}\,|\,i = 1,2,3,\, j = 4,5,6\right> \subseteq A
\end{equation}
The left hand side of this is 15 dimensional. If $L$ is to define an ECS then $A$ must be 15 dimensional too and hence this must be the whole of $A$. In particular, this implies that
\begin{equation}
    S_{5}\oplus \bar{S}_{5} = \ext^{5}T^{*} \subset L_{1}\oplus L_{-1}
\end{equation}
Now taking any $\nu\in T^{*}$, and some $\ee^{\alpha}\cdot \omega \in L_{1}$. Then we have
\begin{equation}
    \ee^{\alpha}\omega \times_{\ad} \nu = \ee^{\alpha}\omega\wedge \nu
\end{equation}
However, we have that
\begin{equation}
    (\ee^{\alpha}\omega\wedge \nu)\cdot L_{-1} \subseteq \ext^{5}T^{*} \subset L_{1}\oplus L_{-1}
\end{equation}
Hence we see that $L_{0}\cap(L_{1}\oplus L_{1}) \neq 0$ and so this cannot be an ECS
\end{proof}

\begin{propn}
Suppose $L_{1} = \ee^{\alpha+\beta}(\Delta \oplus \mathcal{F}^{2}_{1}(\Delta))$ is an almost ECS of type 3. Then $\dim(\Delta \cap \bar{\Delta}) \leq 1$ 
\end{propn}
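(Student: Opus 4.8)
The plan is to use condition (iv) of Definition~\ref{def:E6 ECS} in its equivalent form (iv'): every nonzero local section $\Jpl$ of the line bundle $\mathcal{U}_{\JUs}=\ee^{\alpha+\beta}\mathcal{F}^{3}_{2}(\Delta)$ determined by $L_{1}$ must satisfy $\Tr(\Jpl\bar{\Jpl})<0$. For a type~3 structure $\mathcal{F}^{3}_{2}(\Delta)=\ext^{3}\mathcal{F}^{1}_{0}(\Delta)$ is one-dimensional, so I would write $\Jpl=\ee^{\alpha+\beta}\cdot\xi$ with $\xi$ a nowhere-vanishing $3$-form annihilating $\Delta$ (equivalently the dual trivector $\mathbf{X}=\vol^{\#}\lrcorner\xi\in\ext^{3}\Delta$ is nonzero). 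The entire content is then to show that the strict inequality $\Tr(\Jpl\bar{\Jpl})<0$ cannot hold once $\dim_{\bbC}(\Delta\cap\bar{\Delta})\ge 2$.

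First I would strip away the inessential twists. Decomposing $\alpha=c+\ii a$ and $\beta$ as in \eqref{eq:decomp-twist}, the $\ER{6}$-invariance of the Killing form lets me conjugate the common real factor $\ee^{c+\tilde{c}}$ away from $\Jpl$ and $\bar{\Jpl}$ simultaneously (the \emph{same} real group element sits on the left of both $\ee^{\alpha+\beta}$ and $\ee^{\bar{\alpha}+\bar{\beta}}$, since $c,\tilde{c}$ are real), giving, after absorbing Baker--Campbell--Hausdorff commutator terms into a real $6$-form $b$,
\[
   \Tr(\Jpl\bar{\Jpl})=\Tr\!\bigl((\ee^{\ii a+\ii b}\cdot\xi)\,(\ee^{-\ii a-\ii b}\cdot\bar{\xi})\bigr).
\]
This depends only on $\Delta$ (through $\xi$) and the real forms $a,b$, and by the ambiguity \eqref{eq:complex twist ambiguity} only on $a\bmod\mathcal{F}^{3}_{1}(\Delta)$. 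Expanding the adjoint actions with the brackets of Appendix~\ref{app:conventions} reduces $\Tr(\Jpl\bar{\Jpl})$ to the explicit real density \eqref{eq:type 3 non-linear constraint}, whose leading piece pairs $ja\wedge\xi$ with $ja\wedge\bar{\xi}$.

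The decisive step is a dimension count for this density. Since $\mathcal{F}^{1}_{0}(\Delta)\cap\mathcal{F}^{1}_{0}(\bar{\Delta})=\mathcal{F}^{1}_{0}(\Delta+\bar{\Delta})$ has dimension $6-\dim(\Delta+\bar{\Delta})=\dim(\Delta\cap\bar{\Delta})=:k$, I would pick a basis of $T^{*}_{\bbC}$ adapted to the flag $\mathcal{F}^{1}_{0}(\Delta)\cap\mathcal{F}^{1}_{0}(\bar{\Delta})\subset\mathcal{F}^{1}_{0}(\Delta),\,\mathcal{F}^{1}_{0}(\bar{\Delta})\subset T^{*}_{\bbC}$, so that $\xi$ and $\bar{\xi}$ are decomposable and share exactly $k$ common factors. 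Expanding \eqref{eq:type 3 non-linear constraint} in this basis, one tracks which components of $a$ survive the two six-fold antisymmetrisations against $\xi$ and $\bar{\xi}$. The claim to establish is that for $k\ge 2$ the resulting real-valued function of $a$ can no longer be made strictly negative: for $k=3$ one has $\bar{\xi}\propto\xi$, so $\mathbf{X}$ is (proportional to) a real decomposable trivector and the density collapses to a real-polarised Hitchin-type invariant of the kind $\Tr K_{a}^{2}\ge 0$, precluding the $\SL{3,\bbC}$ (negative) orbit; for $k=2$ the two-dimensional overlap of $\mathcal{F}^{1}_{0}(\Delta)$ and $\mathcal{F}^{1}_{0}(\bar{\Delta})$ kills enough contractions that the function acquires a null or positive direction. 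In either case $\Tr(\Jpl\bar{\Jpl})<0$ is impossible, and together with $\dim\Delta=3$ this forces $\dim(\Delta\cap\bar{\Delta})\le 1$.

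The main obstacle is exactly this final sign analysis: verifying that \eqref{eq:type 3 non-linear constraint} ceases to be sign-definite as soon as $k\ge 2$. The difficulty is purely combinatorial — showing that the overlap of the two annihilators forces cancellations and degeneracies in the antisymmetrised index contractions of $a$ with $\xi,\bar{\xi}$ — and it is cleanest carried out in the adapted basis above, treating $k=3$ and $k=2$ as the two cases to exclude, with $k=3$ (where $\Delta$ is the complexification of a real $3$-plane) being the quicker.
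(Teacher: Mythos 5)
Your setup coincides with the paper's: you invoke condition (iv') via a local section $\Jpl=\ee^{\alpha+\beta}\cdot\xi$ with $\xi\in\Gamma(\mathcal{F}^{3}_{2}(\Delta))$, and you correctly identify that everything reduces to showing the density \eqref{eq:type 3 non-linear constraint} cannot be strictly negative once $k:=\dim(\Delta\cap\bar{\Delta})\geq 2$. But that reduction is where your proposal stops being a proof: you explicitly defer the ``final sign analysis'' to an adapted-basis combinatorial computation, split into cases $k=2$ and $k=3$, and the justifications you sketch for those cases do not close the gap. For $k=2$ you argue the functional ``acquires a null or positive direction'' in $a$ --- but that is logically insufficient: condition (iv) must fail for the \emph{specific} twist $(a,b)$ defining the given $L_{1}$, so you must show the density is non-negative for \emph{every} choice of twist, not merely that it is not negative-definite as a function of $a$. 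For $k=3$ you appeal to a Hitchin-type invariant $\Tr K_{a}^{2}\geq 0$, which conflates the type-0 constraint \eqref{eq:type 0 ECS non-linear condition} with the type-3 one; the relevant quantity here involves $\xi$ and is not Hitchin's invariant of $a$ alone.

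The paper closes this gap with a uniform algebraic argument requiring no basis and no case split. The key observation, which your sketch omits, is that $k\geq 2$ forces $\xi\wedge\bar{\xi}=0$ (the annihilators of $\Delta$ and $\bar{\Delta}$ overlap in dimension $\geq 2$), which immediately kills the six-form term in \eqref{eq:type 3 non-linear constraint} --- a point your analysis never addresses, since you treat the density as ``a function of $a$'' without justifying why $b$ (or $\beta$) drops out. The same overlap condition then validates three contraction identities,
\begin{align}
    (j\alpha\wedge\xi)\wedge(j\alpha\wedge\bar{\xi}) &= 3(\alpha\wedge\xi)\otimes(\alpha\wedge\bar{\xi}) \,, \\
    (j\bar{\alpha}\wedge\xi)\wedge(j\bar{\alpha}\wedge\bar{\xi}) &= 3(\bar{\alpha}\wedge\xi)\otimes(\bar{\alpha}\wedge\bar{\xi}) \,, \\
    (j\alpha\wedge \xi)\wedge (j\bar{\alpha}\wedge\bar{\xi}) &= (\bar{\alpha}\wedge\xi)\otimes(\alpha\wedge\bar{\xi}) + 2(\alpha\wedge\xi)\otimes(\bar{\alpha}\wedge\bar{\xi}) \,,
\end{align}
under which the entire density collapses, after writing $\alpha\wedge\xi = a'+\ii b'$ and $\alpha\wedge\bar{\xi}=c'+\ii d'$, to the manifest sum of squares $(a'-c')^{2}+(b'+d')^{2}\geq 0$, valid for all $\alpha,\beta$ and covering $k=2$ and $k=3$ simultaneously. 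To repair your proposal you should replace the case-by-case basis analysis with (or derive within it) these identities; as written, the central claim of the proposition is asserted rather than proved.
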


\begin{proof}
If $\dim{\Delta} = 3$, then we can choose a local non-vanishing $\xi \in \Gamma(\mathcal{F}^{3}_{2})$. That is, $v\lrcorner\xi = 0$ for all $v\in \Gamma(\Delta)$. Then we can take a local non-vanishing section $\Jpl \in \Gamma(\mathcal{U}_{J})$ as
\begin{equation}
    \Jpl = \ee^{\alpha+\beta}\cdot \xi
\end{equation}
Condition (iv) of definition \ref{def:E6 ECS} then reads
\begin{equation}
    \Tr((\ee^{\alpha + \beta}\cdot \xi)(\ee^{\bar{\alpha}+\bar{\beta}}\cdot \bar{\xi})) < 0
\end{equation}
Careful evaluation of this finds
\begin{align}
\begin{split}
    0 &> (\xi\wedge\bar{\xi})\otimes (\beta - \bar{\beta}) - (\alpha\wedge\xi)\otimes(\bar{\alpha}\wedge\bar{\xi}) + \frac{1}{2}\bigl( (\alpha\wedge\xi)\otimes(\alpha\wedge\bar{\xi}) + (\bar{\alpha}\wedge\xi)\otimes(\bar{\alpha}\wedge\bar{\xi}) \bigr) \\
    & \quad + (j\alpha\wedge\xi)\wedge(j\bar{\alpha}\wedge\bar{\xi}) - \frac{1}{2}\bigl( (j\alpha\wedge\xi)\wedge(j\alpha\wedge\bar{\xi}) + (j\bar{\alpha}\wedge\xi)\wedge(j\bar{\alpha}\wedge\bar{\xi}) \bigr) \\
\end{split}\label{eq:type 3 non-linear constraint}
\end{align}
We have also used the notation
\begin{equation}
    (j\alpha\wedge \xi)\wedge (j\bar{\alpha}\wedge\bar{\xi}) = \left(\frac{6!}{3!\,2!}\right)^{2}\alpha_{a_{1}b_{2}b_{3}}\xi_{b_{4}b_{5}b_{6}}\bar{\alpha}_{b_{1}a_{2}a_{3}} \bar{\xi}_{a_{4}a_{5}a_{6}}
\end{equation}
Indices with the same letter are antisymmetrised. The other expressions containing $j$'s are defined similarly. One can check that the right hand side of \eqref{eq:type 3 non-linear constraint} is a section of $(\det T^{*}_{\bbR})^{2}$, and depends on the choice of $\xi$ only by multiplication by a positive real scalar. Moreover, it is invariant under the change
\begin{equation}
    \alpha+\beta \longrightarrow (\alpha+\gamma) + (\beta - \frac{1}{2}\alpha\wedge\gamma) \qquad \gamma\in \mathcal{F}^{3}_{1}(\Delta)
\end{equation}
(i.e. a shift of $\alpha,\beta$ that leaves $L_{1}$ invariant). Hence the constraint is well defined. 

Now suppose we have $\dim(\Delta\cap\bar{\Delta}) \geq 2$. Then $\xi\wedge\bar{\xi} = 0$ and so the $\beta$ term drops out. Moreover, one can show that
\begin{align}
    (j\alpha\wedge\xi)\wedge(j\alpha\wedge\bar{\xi}) &= 3(\alpha\wedge\xi)\otimes(\alpha\wedge\bar{\xi}) \\
    (j\bar{\alpha}\wedge\xi)\wedge(j\bar{\alpha}\wedge\bar{\xi}) &= 3(\bar{\alpha}\wedge\xi)\otimes(\bar{\alpha}\wedge\bar{\xi}) \\
    (j\alpha\wedge \xi)\wedge (j\bar{\alpha}\wedge\bar{\xi}) &= (\bar{\alpha}\wedge\xi)\otimes(\alpha\wedge\bar{\xi}) + 2(\alpha\wedge\xi)\otimes(\bar{\alpha}\wedge\bar{\xi})
\end{align}
Hence, the right hand side of \eqref{eq:type 3 non-linear constraint} becomes
\begin{equation}
    (\alpha\wedge\xi)\otimes (\bar{\alpha}\wedge\bar{\xi}) + (\bar{\alpha}\wedge\xi)\otimes (\alpha\wedge \bar{\xi}) - (\alpha\wedge\xi)\otimes (\alpha\wedge\bar{\xi}) - (\bar{\alpha}\wedge\xi)\otimes (\bar{\alpha}\wedge\bar{\xi})
\end{equation}
writing $\alpha\wedge\xi = a + \ii b$, $\alpha\wedge\bar{\xi}= c + \ii d$ for some $a,b,c,d \in \Omega^{6}(M)_{\bbR}$, we find that this equals
\begin{equation}
    (a-c)^{2} + (b+d)^{2} \geq 0
\end{equation}
Hence \eqref{eq:type 3 non-linear constraint} does not hold for $\dim(\Delta\cap\bar{\Delta})\geq 2$.
\end{proof}

\section{Proof of the Local Structure of Moduli of \texorpdfstring{$\SUs{6}$}{SU*(6)} Structures}\label{app:local SU*(6) structure}

\subsection{Type 3 Moduli}

For the type 3 problem we have the subbundle
\begin{equation}
    L_{1} = \ee^{\alpha+\beta}\cdot(\Delta \oplus \mathcal{F}^{2}_{1}(\Delta))
\end{equation}
where $\alpha\in \Omega^{3}(M)_{\bbC}$, $\beta\in \Omega^{6}(M)_{\bbC}$, and $\Delta \subset T$ all satisfy
\begin{equation}
    [\Delta,\Delta]\subseteq \Delta \qquad v\lrcorner(w\lrcorner(x\lrcorner \dd\alpha)) = 0
\end{equation}
for all $v,w,x\in \Gamma(\Delta)$. In what follows, it will be convenient to work with the $F_{\bbC}$-twisted Dorfman derivative, where locally $F_{\bbC} = \dd\alpha$, and the untwisted bundle $\tilde{L}_{1} = \Delta\oplus \mathcal{F}^{2}_{1}(\Delta)$. This is because, for type 3 solutions, the physical flux may be in a non-trivial cohomology class and hence the gauge potential $A$, which is implicit in the definition of $\alpha$, may not be global. By working with $L_{V}^{F_{\bbC}}$, we can work only with globally defined objects.

We have the quotient spaces
\begin{align}
    E_{\bbC}\quotient \tilde{L}_{1} &= \left(T\quotient \mathfrak{F}^{1}_{0} \right) \oplus \left( \ext^{2}T^{*}\quotient \mathcal{F}^{2}_{1} \right) \oplus \ext^{5}T^{*} \\
    \mathfrak{Q}_{\bbR^{+}\times \Us{6}} &= \left[\left(T\quotient \mathfrak{F}^{1}_{0}\right) \otimes \left(T^{*}\quotient \mathcal{F}^{1}_{0}\right)\right] \oplus \left(\ext^{3}T\quotient \mathfrak{F}^{3}_{2} \right) \oplus \left(\ext^{3}T^{*} \quotient \mathcal{F}^{3}_{1}\right) \oplus \ext^{6}T^{*}
\end{align}
We shall pick the following elements of the deformation space $\mathfrak{Q}$
\begin{equation}
    \arraycolsep = 1.4pt
    \begin{array}{rclcrcl}
       r & \in & \Gamma\left( (T\quotient\mathfrak{F}^{1}_{0})\otimes (T^{*}\quotient \mathcal{F}^{1}_{0}) \right) & \qquad & \trivector &\in & \ext^{3}T\quotient \mathfrak{F}^{3}_{2} \\
       \theta &\in& \ext^{3}T^{*}\quotient \mathcal{F}^{3}_{1} & & \tau &\in & \ext^{6}T^{*}
    \end{array}
\end{equation}
and write the deformation parameter as $R = \trivector+r+\theta+\tau$. Hence the deformed bundle becomes
\begin{equation}
    \tilde{L}'_{1} = (1+R)\cdot \tilde{L}_{1} = \ee^{\theta+\tau}(1+r+\trivector)\cdot (\Delta \oplus \mathcal{F}^{2}_{1}(\Delta))
\end{equation}
where we are working to linear order in the deformation parameters only. We take sections $V',W'\in \Gamma(\tilde{L}'_{1})$ which are of the form
\begin{equation}
    V' = \ee^{\theta+\tau}\cdot (v + r\cdot v + \trivector\lrcorner \lambda + \lambda + r\cdot \lambda) \qquad W' = \ee^{\theta+\tau}\cdot(w+r\cdot w + x\lrcorner \mu + \mu +r\cdot \mu)
\end{equation}
where $v,w\in \Gamma(\Delta)$, $\lambda,\mu\in \Gamma(\mathcal{F}^{2}_{1})$. We will also denote by $\hat{V} = \hat{v}+\hat{\lambda} = (1+r+\trivector)\cdot (v+\lambda)$, and similarly for $\hat{W}$.

We want to determine when $\tilde{L}'_{1}$ is involutive under $L_{V}^{F_{\bbC}}$, to linear order in $R$. This is the statement that for all $v,w\in \Gamma(\Delta)$, $\lambda,\mu\in \Gamma(\mathcal{F}^{2}_{1})$ we have
\begin{align}
    \begin{split}
        L_{V'}^{F_{\bbC}}W' &= \ee^{\theta+\tau}\cdot\bigl(L_{\hat{V}}\hat{W} + \hat{w}(\lrcorner\hat{v}\lrcorner(F_{\bbC}+\dd\theta)) +\hat{\mu}\wedge(\hat{v}\lrcorner(F_{\bbC}+\dd\theta))\bigr)
    \end{split} \\
    \begin{split}
        &= \ee^{\theta+\tau}\cdot \Bigl( [\hat{v},\hat{w}] \\
        & \qquad \qquad +\mathcal{L}_{\hat{v}}\hat{\mu} - \hat{w}\lrcorner\dd\hat{\lambda} + \hat{w}\lrcorner(\hat{v}\lrcorner(F_{\bbC}+\dd\theta)) \\
        & \qquad \qquad - \hat{\mu}\wedge \dd\hat{\lambda} +\hat{\mu}\wedge (\hat{v}\lrcorner(F_{\bbC}+\dd\theta)) \Bigr)
    \end{split} \\
    & \in \Gamma(\tilde{L}'_{1})
\end{align}
Let us consider this term by term. We will use Greek letters $\alpha,\beta,\gamma,...$ for $\Delta$ indices, and Latin letters $a,b,c,...$ for the complement\footnote{Here we are implicitly using the orthogonal complement under some metric. This is just for ease of the proof although it is not strictly needed to prove these results.}. If we consider the vector piece only then we have, to linear order in $R$
\begin{align}
    \bigl(\ee^{-\theta-\tau}\cdot L^{F_{\bbC}}_{V'}W'\bigr)|_{T} &= [v,w] + [r\cdot v + \trivector\lrcorner \lambda, w] + [v,r\cdot w+\trivector\lrcorner \mu] \\
    \begin{split}
        &= [v,w] +\left( r^{b}{}_{\beta}v^{\beta}\del_{b}w^{\alpha} - r^{b}{}_{\gamma}w^{\gamma}\del_{b}v^{\alpha} \right) + \left( (\trivector\lrcorner \nu)^{b}\del_{b} w^{\alpha} - (\trivector\lrcorner \lambda)^{b}\del_{b}v^{\alpha} \right)\\
        & \qquad + r\cdot [v,w] + \trivector\lrcorner(v\lrcorner\dd\mu - w\lrcorner \dd\lambda) \\
        & \qquad w\lrcorner(v\lrcorner \dd_{\Delta} r) + (v\lrcorner \dd_{\Delta}\trivector)\lrcorner \mu - (w\lrcorner\dd_{\Delta}\trivector)\lrcorner \lambda
    \end{split} \\
    & \overset{!}{=} z + r\cdot z + \trivector\lrcorner \zeta
\end{align}
where $z\in \Gamma(\Delta)$, $\zeta\in \Gamma(\mathcal{F}^{2}_{1})$ are of the form\footnote{The 0\textsuperscript{th} order piece of $z,\zeta$ should be given by the Dorfman derivative of the undeformed sections $V = v+\lambda$, $W=w+\mu$.}
\begin{align}
    z &= [v,w] + O(R) \qquad \zeta = v\lrcorner\dd\mu - w\lrcorner\dd\lambda + w\lrcorner(v\lrcorner F_{\bbC}) + O(R)
\end{align}
For this to be true to linear order in $R$ we need
\begin{align}
\begin{split}
    w\lrcorner(v\lrcorner\dd_{\Delta}r) + (v\lrcorner\dd_{\Delta} \trivector)\lrcorner \mu - (w\lrcorner\dd_{\Delta}\trivector)\lrcorner \lambda &= \trivector\lrcorner(w\lrcorner(v\lrcorner(F_{\bbC}))) \\
    &= w\lrcorner(v\lrcorner(j \trivector\lrcorner j^{2}F_{\bbC}))
\end{split}
\end{align}
This must be true for all $v,w,\lambda,\mu$ and hence we have
\begin{equation}
    \dd_{\Delta}\trivector = 0 \qquad \dd_{\Delta}r - j\trivector\lrcorner j^{2}F_{\bbC} = 0
\end{equation}

Now let's consider the 2-form piece. We have
\begin{align}
    \begin{split}
        \bigl(\ee^{-\theta-\tau}\cdot L_{V'}^{F_{\bbC}}W'\bigr)|_{\wedge^{2}T^{*}} &= v\lrcorner\dd\mu - w\lrcorner\dd\lambda + w\lrcorner(v\lrcorner F_{\bbC}) \\
        & \qquad + (r\cdot v + \trivector\lrcorner\lambda)\lrcorner\dd\mu + v\lrcorner\dd(r\cdot \mu) \\
        & \qquad + \dd((r\cdot v + \trivector\lrcorner \lambda)\lrcorner \mu) + \dd(v\lrcorner(r\cdot \mu))  \\
        & \qquad - (r\cdot w + \trivector\lrcorner \mu)\lrcorner \dd\lambda - w\lrcorner\dd(r\cdot \lambda) \\
        & \qquad + (r\cdot w + \trivector\lrcorner \mu) \lrcorner (v\lrcorner F_{\bbC}) + w\lrcorner((r\cdot v +\trivector\lrcorner \lambda)\lrcorner F_{\bbC}) \\
        & \qquad +w\lrcorner(v\lrcorner \dd\theta)
    \end{split} \\
    & \overset{!}{=} \zeta + r\cdot \zeta
\end{align}
For now, let us set $\lambda, \mu = 0$. We are left with
\begin{align}
        & w\lrcorner(v\lrcorner F_{\bbC}) (r\cdot w)\lrcorner(v\lrcorner F_{\bbC}) + w\lrcorner((r\cdot v)\lrcorner F_{\bbC}) + w\lrcorner (v\lrcorner \dd\theta) \\
        =& w\lrcorner(v\lrcorner F_{\bbC}) +  r\cdot(w\lrcorner(v\lrcorner F_{\bbC})) + w\lrcorner(v\lrcorner( - r\cdot F_{\bbC} + \dd\theta)) \\
        \overset{!}{=}& \zeta + r\cdot \zeta
\end{align}
For this to be the case, we need $\dd\theta - r\cdot F_{\bbC} \in \Gamma(\mathcal{F}^{4}_{1})$. This is equivalent to the statement that
\begin{equation}
    \pi_{1}(\dd\theta - r\cdot F_{\bbC}) = 0 \qquad \pi_{k}:\ext^{n}T^{*} \longrightarrow \ext^{n}T^{*}\quotient \mathcal{F}^{n}_{k}
\end{equation}
where we have introduced the projection operator $\pi_{k}$ as defined above for definiteness.

Now let's set $v,\mu = 0$. We have
\begin{align}
    &-w\lrcorner\dd\lambda - (r\cdot w)\lrcorner \dd\lambda - w\lrcorner \dd(r\cdot \lambda) + w\lrcorner((\trivector\lrcorner\lambda)\lrcorner F_{\bbC}) \\
    \begin{split}
        =& -w\lrcorner\dd\lambda - w^{\alpha}(r^{c}{}_{\alpha}\del_{c}\lambda_{ab} - \lambda_{bc}\del_{a}r^{c}{}_{\alpha} - \lambda_{ca}\del_{b}r^{c}{}_{\alpha}) \\
        & - r\cdot(w\lrcorner\dd\lambda) - (w\lrcorner\dd_{\Delta} r)\cdot \lambda + w\lrcorner((\trivector\lrcorner\lambda)\lrcorner F_{\bbC})
    \end{split} \\
    \begin{split}
        =&-w\lrcorner\dd\lambda - w^{\alpha}(r^{c}{}_{\alpha}\del_{c}\lambda_{ab} - \lambda_{bc}\del_{a}r^{c}{}_{\alpha} - \lambda_{ca}\del_{b}r^{c}{}_{\alpha}) \\
        & - r\cdot(w\lrcorner\dd\lambda) - (w\lrcorner(\dd_{\Delta}r - j\trivector\lrcorner j^{2}F_{\bbC}))\cdot\lambda
    \end{split} \\
    \overset{!}{=} & \zeta + r\cdot \zeta
\end{align}
This is implied by the fact that $\dd_{\Delta}r - j\trivector\lrcorner j^{2}F_{\bbC} = 0$.

Next, let's consider when $w = \lambda = 0$. We find
\begin{align}
    & v\lrcorner \dd\mu +(r\cdot v)\lrcorner \dd\mu +v\lrcorner\dd(r\cdot \mu) +\dd((r\cdot v)\lrcorner\mu) + \dd(v\lrcorner(r\cdot \mu)) + (\trivector\lrcorner\mu)\lrcorner(v\lrcorner F_{\bbC}) \\
    \begin{split}
        =& v\lrcorner\dd\mu + v^{\alpha}(r^{c}{}_{\alpha}\del_{c}\mu_{ab} - \mu_{bc}\del_{a}r^{c}{}_{\alpha} - \mu_{ca}\del_{b}r^{c}{}_{\alpha}) \\
        &+ r\cdot(v\lrcorner\dd\mu) + (v\lrcorner\dd_{\Delta}r)\cdot \mu + (\trivector\lrcorner\mu)\lrcorner(v\lrcorner F_{\bbC})
    \end{split} \\
    \begin{split}
        =& v\lrcorner\dd\mu + v^{\alpha}(r^{c}{}_{\alpha}\del_{c}\mu_{ab} - \mu_{bc}\del_{a}r^{c}{}_{\alpha} - \mu_{ca}\del_{b}r^{c}{}_{\alpha}) \\
        &+ r\cdot(v\lrcorner\dd\mu) + (v\lrcorner(\dd_{\Delta}r-j\trivector\lrcorner j^{2}F_{\bbC}))\cdot \mu
    \end{split} \\
    \overset{!}{=} & \zeta + r\cdot \zeta
\end{align}
This is again implied by the fact that $\dd_{\Delta}r - j\trivector\lrcorner j^{2}F_{\bbC} = 0$.

Finally for the 2-forms, we consider the case when $v,w=0$. We find that
\begin{align}
    & \; (\trivector\lrcorner\lambda)\lrcorner \dd\mu - (\trivector\lrcorner \mu)\lrcorner\dd\lambda +\dd((\trivector\lrcorner\lambda)\lrcorner\mu) \\
    \begin{split}
        = &\; 3(\trivector\lrcorner \lambda)^{a}\del_{[a}\mu_{bc]} - 3(\trivector\lrcorner\mu)^{a}\del_{[a}\lambda_{bc]} + 2\del_{[b|}((\trivector\lrcorner\lambda)^{a}\mu_{a|c]} \\
        &\; -((\dd_{\Delta}\trivector)\lrcorner\lambda)\cdot \mu
    \end{split} \\
    \overset{!}{=} &\; \zeta + r\cdot \zeta = \zeta
\end{align}
Note that we can consider $(\dd_{\Delta}\trivector)\lrcorner\lambda$ as an adjoint element and hence it has a natural action on $\mu$. Also, the final equality holds to linear order in $R$ since $\zeta \sim O(R)$ in this case. This case holds because $\dd_{\Delta} \trivector = 0$.

Now we just need to consider the 5-form pieces and show that they vanish. That is, we need
\begin{align}
\begin{split}
    \bigl(\ee^{-\theta-\tau}\cdot L^{F_{\bbC}}_{V'}W' \bigr)|_{\wedge^{5}T^{*}} &= -\mu\wedge \dd\lambda - (r\cdot\mu)\wedge \dd\lambda - \mu\wedge\dd(r\cdot\lambda) \\
    &\qquad + \mu\wedge(v\lrcorner(F_{\bbC} + \dd\theta)) + (r\cdot\mu)\wedge (v\lrcorner F_{\bbC}) \\
    & \qquad + \mu\wedge((r\cdot v + \trivector\lrcorner\lambda)\lrcorner F_{\bbC})
\end{split} \\
    & \overset{!}{=} 0
\end{align}
Let us first set $v = 0$. Then we have
\begin{align}
    & -\mu\wedge \dd\lambda - (r\cdot\mu)\wedge \dd\lambda - \mu\wedge\dd(r\cdot\lambda) +\mu\wedge((\trivector\lrcorner \lambda)\lrcorner F_{\bbC}) \\
    =& -(r\cdot\mu)\wedge \dd_{\Delta}\lambda - \mu\wedge\dd_{\Delta}(r\cdot \lambda) + \mu\wedge((j\trivector\lrcorner j^{2} F_{\bbC})\cdot \lambda) \\
    =& -(r\cdot\mu)\wedge \dd_{\Delta}\lambda -\mu\wedge ((\dd_{\Delta}r)\cdot \lambda) - \mu\wedge r\cdot\dd_{\Delta}\lambda + \mu\wedge ((j\trivector\lrcorner j^{2}F_{\bbC})\cdot \lambda) \\
    =& -r\cdot(\mu\wedge \dd_{\Delta}\lambda) -\mu\wedge((\dd_{\Delta} r - j\trivector\lrcorner j^{2}F_{\bbC})\cdot \lambda) \\
    =& \;0
\end{align}
This holds because any term $\sim \mu\wedge \dd\lambda = 0$ by virtue of the integrability of $\Delta$, and by the fact that $\dd_{\Delta}r - j\trivector\lrcorner j^{2}F_{\bbC} = 0$. Now let's consider instead $\lambda = 0$. We have
\begin{align}
    & \;\mu\wedge(v\lrcorner(F_{\bbC} + \dd\theta)) + (r\cdot \mu)\wedge (v\lrcorner F_{\bbC}) + \mu\wedge((r\cdot v)\lrcorner F_{\bbC}) \\
    =&\; \mu\wedge v\lrcorner \pi_{1}(\dd\theta) + (r\cdot\mu)\wedge (v\lrcorner F_{\bbC}) +\mu\wedge ((r\cdot v)\lrcorner F_{\bbC}) \\
    =&\; \mu\wedge (v\lrcorner (r\cdot F_{\bbC})) + (r\cdot\mu)\wedge (v\lrcorner F_{\bbC}) +\mu\wedge((r\cdot v)\lrcorner F_{\bbC}) \\
    =&\; (r\cdot\mu)\wedge(v\lrcorner F_{\bbC}) + \mu\wedge r\cdot(v\lrcorner F_{\bbC}) \\
    =&\; r\cdot (\mu\wedge(v\lrcorner F_{\bbC})) \\
    =& \; 0
\end{align}
This vanishes because $\mu\wedge(v\lrcorner F_{\bbC}) = 0$ by restrictions on $F_{\bbC}$ imposed by integrability of $L_{1}$.

Hence, we have found the integrability conditions for the deformations, and they are given by
\begin{align}
    0 &= \dd_{\Delta} \trivector \\
    0 &= \dd_{\Delta}r - j\trivector\lrcorner j^{2}F_{\bbC} \\
    0 &= \pi_{1}(\dd\theta - r\cdot F_{\bbC})
\end{align}

\bigskip

Now we need to consider the exactness conditions. These are given by
\begin{equation}
    \tilde{L}'_{1} = (1+L^{F_{\bbC}}_{V})\tilde{L}_{1}
\end{equation}
where $V = -v-\omega -\sigma \in \Gamma((T\quotient \mathfrak{F}^{1}_{0}) \oplus (\ext^{2}T^{*}\quotient \mathcal{F}^{2}_{1}) \oplus \ext^{5}T^{*})$. The minus signs are for convenience. Given $W = w+\mu \in \Gamma(\tilde{L}_{1})$, we have
\begin{align}
\begin{split}
    (1+L^{F_{\bbC}}_{V})W &= w - [v,w] \\
    & \qquad + \mu - \mathcal{L}_{v}\mu + w\lrcorner\dd\omega - w\lrcorner(v\lrcorner F_{\bbC}) \\
    & \qquad + w\lrcorner \dd\sigma + \mu\wedge \dd\omega + w\lrcorner(\omega\wedge F_{\bbC}) - \mu\wedge(v\lrcorner F_{\bbC})
\end{split} \\
\begin{split}
    &= (w - v^{a}\del_{a}w^{\alpha}) + (\dd_{\Delta}v)\cdot w \\
    & \qquad +(\mu - 3v^{a}\del_{[a}\mu_{bc]} - 2\del_{[b|}(v^{a}\mu_{a|c]}) + (\dd_{\Delta}v)\cdot \mu + w\lrcorner(\dd\omega - v\lrcorner F_{\bbC}) \\
    & \qquad +w\lrcorner(\dd\sigma + \omega\wedge F_{\bbC}) + \mu\wedge (\dd\omega - v\lrcorner F_{\bbC})
\end{split} \\
&= \ee^{\pi_{1}(\dd\omega - v\lrcorner F_{\bbC}) + (\dd\sigma + \omega\wedge F_{\bbC})}(1+\dd_{\Delta}v)\cdot(\tilde{w} + \tilde{\mu})
\end{align}
where $\tilde{W} = \tilde{w}+\tilde{\mu} \in \Gamma(\tilde{L}_{1})$, and we have introduced the projection $\pi_{1}:\ext^{n}T^{*}\rightarrow \ext^{n}T^{*}\quotient \mathcal{F}^{n}_{1}$ for definiteness. Hence, the exactness conditions are given by
\begin{align}
    r &= \dd_{\Delta} v \\
    \theta &= \pi_{1}(\dd\omega - v\lrcorner F_{\bbC}) \\
    \tau &= \dd\sigma +\omega\wedge F_{\bbC}
\end{align}
This reproduces the results at the end of section \ref{sec:E6 class 1 Moduli}.

\bigskip

If we assume the flux is trivial, which is not the case for any AdS solution, then we can take the complex twist $\ee^{\alpha+\beta}$ to be globally well-defined. In this case, then it is possible to show using the following deformation parameter
\begin{equation}
    R = \ee^{\alpha+\beta}(r + j\trivector \lrcorner j \alpha + \trivector + \theta + r\cdot \hat{\alpha} - \tfrac{1}{2}j\trivector\lrcorner j\alpha\cdot \hat{\alpha} + \mu - \theta\wedge \hat{\alpha})\ee^{-\alpha-\beta}
\end{equation}
that the cohomology defined by the flux-twisted derivatives above is isomorphic to
\begin{equation}
    H^{0}_{\Delta}(M,\ext^{3}T\quotient \mathfrak{F}^{3}_{2}) \oplus H^{1}_{\Delta}(M,T\quotient \mathfrak{F}^{1}_{0}) \oplus H^{3}_{\mathcal{F}_{1}}(M)\oplus H^{6}_{\dd}(M)
\end{equation}
We have had to assume something slightly stronger about the integrability conditions to show this. Namely that there exists some $\hat{\alpha}\in \Gamma(\mathcal{F}^{3}_{1})$ such that
\begin{equation}
    \dd\alpha = \dd\hat{\alpha}
\end{equation}
This is not directly implied by the involutivity conditions (which just states that $\dd\alpha\in \Gamma(\mathcal{F}^{4}_{1})$) but it may be implied by the vanishing of the moment map.

\subsection{Generalised \texorpdfstring{$\del\delb$}{del-delbar} Moduli}

Following the notation from section \ref{sec:E6 generic moduli}, the moduli of generic $\SUs{6}$ structures is counted by the cohomology of the following complex.
\begin{equation}\label{eq:E6 generic complex}
\begin{tikzcd}[column sep=huge, row sep = huge]
\Gamma(\ext^{2}\mathfrak{X}^{*})  \arrow[r, "D_{-1}"] & \Gamma(\ext^{3}\mathfrak{X}^{*}_{-1}) \arrow[r, "D_{-1}"] & \Gamma(\ext^{4}\mathfrak{X}^{*}_{-2}) \\
\Gamma(\ext^{5}\mathfrak{X}^{*}_{-1}) \arrow[r,"D_{-1}"] \arrow[ur,"D_{0}"] & \Gamma(\ext^{6}\mathfrak{X}^{*}_{-2}) \arrow[ur,"D_{0}"] &
\end{tikzcd}
\end{equation}
Here $D_{0}$ and $D_{-1}$ are operators coming from any torsion free $\USp{6}$ connection\footnote{We will always assume that we are deforming around a full supergravity background} decomposed into $\SUs{6}$ representations.
\begin{equation}
    D = D_{1} + D_{0} + D_{-1}
\end{equation}
In this section, we give give the cohomology of the complex above in terms of the cohomology of $D_{-1}$ provided the background satisfies the generalised $\del\delb$-lemma.

\begin{defn}
$D_{0}$, $D_{-1}$ are said to satisfy the \emph{generalised $\del\delb$-lemma} if the satisfy the following
\begin{equation}
    \image D_{0}\cap \ker D_{-1} \subseteq \image D_{-1}D_{0}
\end{equation}
\end{defn}

\noindent With this we can prove the following result.

\begin{propn}
If a background satisfies the generalised $\del\delb$-lemma, and $D_{0}$ defines a chain homomorphism $D_{0}:\Gamma(\ext^{\bullet}\mathfrak{X}^{*}_{\bullet})\rightarrow \Gamma(\ext^{\bullet - 2}\mathfrak{X}^{*}_{\bullet})$, then the cohomology of the complex \eqref{eq:E6 generic complex} is given by
\begin{equation}
    H^{3}_{D_{-1}}\oplus H^{6}_{D_{-1}}
\end{equation}
where $H^{p}_{D_{-1}}$ is the $p^{\text{th}}$ cohomology of the differential $D_{-1}$
\end{propn}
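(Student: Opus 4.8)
The plan is to realise the cohomology $\mathcal{H}=\ker\dd_2/\image\dd_1$ of the complex \eqref{eq:E6 generic complex} as an extension of $H^6_{D_{-1}}$ by $H^3_{D_{-1}}$, and then split it. Writing an element of the middle term as a pair $(c,d)$ with $c\in\Gamma(\ext^3\mathfrak{X}^*_{-1})$ and $d\in\Gamma(\ext^6\mathfrak{X}^*_{-2})$, and reading the arrows off the diagram, the maps are $\dd_1(a,b)=(D_{-1}a+D_0 b,\,D_{-1}b)$ and $\dd_2(c,d)=D_{-1}c+D_0 d$, where $a\in\Gamma(\ext^2\mathfrak{X}^*)$, $b\in\Gamma(\ext^5\mathfrak{X}^*_{-1})$. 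The first thing I would record is that the chain-homomorphism hypothesis is precisely the relation $D_0 D_{-1}+D_{-1}D_0=0$; together with $D_{-1}^2=0$ this gives $\dd_2\circ\dd_1=0$, so the complex is genuine. I would also note that $D_{-1}d=0$ holds automatically on $\ext^6\mathfrak{X}^*_{-2}$ since $\ext^7\mathfrak{X}^*=0$, so that $H^6_{D_{-1}}=\Gamma(\ext^6\mathfrak{X}^*_{-2})/\image(D_{-1}\colon\ext^5\to\ext^6)$.

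Next I would define $\Phi\colon\mathcal{H}\to H^6_{D_{-1}}$ by $[(c,d)]\mapsto[d]$, which is well defined because every element of $\image\dd_1$ has $d=D_{-1}b$. Surjectivity of $\Phi$ is the first place the generalised $\del\delb$-lemma enters: given $[d]$ I must solve $D_{-1}c=-D_0 d$. Since $D_{-1}d=0$ and $\{D_0,D_{-1}\}=0$, one has $D_{-1}(D_0 d)=0$, so $D_0 d\in\image D_0\cap\ker D_{-1}\subseteq\image D_{-1}D_0$; writing $D_0 d=D_{-1}(D_0 e)$ then furnishes $c=-D_0 e$ and hence a preimage of $[d]$.

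To identify $\ker\Phi$ with $H^3_{D_{-1}}$, I would take a class with $[d]=0$, write $d=D_{-1}b$, and subtract $\dd_1(0,b)$ to bring the representative to the form $(c',0)$ with $c'=c-D_0 b$. The closure condition then reads $D_{-1}c'=-D_0 d+D_0 D_{-1}b=-D_0(d-D_{-1}b)=0$, so $[c']\in H^3_{D_{-1}}$, and I would send $[(c,d)]\mapsto[c']$. The crux is well-definedness: independence of the representative $(c,d)$ modulo $\image\dd_1$ is a direct bookkeeping check, while independence of the choice of $b$ is the second appearance of the lemma — two choices differ by $b-b'\in\ker D_{-1}$, and $D_0(b-b')\in\image D_0\cap\ker D_{-1}\subseteq\image D_{-1}D_0\subseteq\image(D_{-1}\colon\ext^2\to\ext^3)$, so $[c']$ is unchanged in $H^3_{D_{-1}}$. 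Injectivity (if $[c']=0$ then $c'=D_{-1}a$ and $(c',0)=\dd_1(a,0)$) and surjectivity (each closed $c'$ gives $(c',0)\in\ker\dd_2\cap\ker\Phi$) are then immediate.

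These steps assemble into a short exact sequence of complex vector spaces
\[
0\longrightarrow H^3_{D_{-1}}\longrightarrow\mathcal{H}\xrightarrow{\;\Phi\;}H^6_{D_{-1}}\longrightarrow 0,
\]
which splits, giving $\mathcal{H}\cong H^3_{D_{-1}}\oplus H^6_{D_{-1}}$ as required. I expect the main obstacle to be the well-definedness of the map $\ker\Phi\to H^3_{D_{-1}}$: this is exactly where both the $\del\delb$-lemma and the careful tracking of the charge/degree grading (together with the signs forced by $\{D_0,D_{-1}\}=0$) are indispensable, whereas the surjectivity of $\Phi$ and the assembly of the exact sequence are comparatively routine once the lemma is in hand.
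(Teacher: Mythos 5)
Your proof is correct and takes essentially the same route as the paper's: both arguments invoke the generalised $\del\delb$-lemma at exactly the same two junctures --- once to solve $D_{-1}c = -D_{0}d$ so that classes lift from $H^{6}_{D_{-1}}$, and once to show independence of the choice of potential --- with your short-exact-sequence packaging $0\to H^{3}_{D_{-1}}\to\mathcal{H}\to H^{6}_{D_{-1}}\to 0$ (split automatically, as these are vector spaces) being equivalent to the paper's explicit mutually inverse maps $\theta,\psi$ together with its auxiliary quotient $\mathcal{K}\cong H^{6}_{D_{-1}}$. The only discrepancy is one of normalisation: in the paper the exactness condition reads $B=\tfrac{1}{2}D_{-1}E$ and the relation on $\Gamma(\ext^{5}\mathfrak{X}^{*}_{-1})$ is $D_{-1}D_{0}+\tfrac{1}{2}D_{0}D_{-1}=0$ (the factor $\tfrac{1}{2}$ coming from the precise adjoint projection in the Dorfman derivative), rather than your strict anticommutator $\{D_{0},D_{-1}\}=0$; this rescaling is harmless and all of your cancellations go through unchanged.
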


\begin{proof}
The cohomology of the complex \eqref{eq:E6 generic complex} is given by\footnote{The factor of $\tfrac{1}{2}$ in the quotient is due to the precise form of the projection $D\times_{\ad}V$ in $\SUs{6}$ indices.}
\begin{equation}
    \mathcal{H} = \frac{\{ A + B\in \Gamma(\ext^{3}\mathfrak{X}^{*}_{-1} \oplus \ext^{6}\mathfrak{X}^{*}_{-2}) \, | \, D_{-1}A + D_{0}B = 0 \}}{\{ A = D_{-1}C + D_{0}E,\, B = \tfrac{1}{2}D_{-1}E\,|\, C\in \Gamma(\ext^{2}\mathfrak{X}^{*}_{0}), \, E\in \Gamma(\ext^{5}\mathfrak{X}^{*}_{-1}) \}}
\end{equation}
Let us define a new quotient group by
\begin{equation}
    \mathcal{K} = \frac{\{ B\in \Gamma(\ext^{6}\mathfrak{X}^{*}_{-2}) \, | \, D_{0}B = 0 \}}{\{ B = D_{-1}E \, |\, E\in \Gamma(\ext^{5}\mathfrak{X}^{*}_{-1}),\, D_{0}E = 0 \}}
\end{equation}
and two maps
\begin{equation}
    \begin{array}{rclcrcl}
    \theta:H^{3}_{D_{-}}\oplus \mathcal{K} & \longrightarrow & \mathcal{H} & \qquad & \psi:\mathcal{H}&\longrightarrow & H^{3}_{D_{-}}\oplus \mathcal{K} \\
    \left[A\right]_{3}+[B]_{\mathcal{K}} & \longmapsto & [A+B]_{\mathcal{H}} & & [A+B]_{\mathcal{H}} & \longmapsto & [\tilde{A}]_{3} + [\tilde{B}]_{\mathcal{K}}
    \end{array}
\end{equation}
where $A, \tilde{A}\in \Gamma(\ext^{3}\mathfrak{X}^{*}_{-1})$, $B,\tilde{B} \in \Gamma(\ext^{6}\mathfrak{X}^{6}_{-2})$ and where the subscript denotes the cohomology group that class is a member of. $\tilde{A},\tilde{B}$ are defined from $[A+B]_{\mathcal{H}}$ in the following way. We have
\begin{align}
    0 &= D_{-}A + D_{0}B \\
    \Rightarrow \quad 0 &= D_{-}D_{0}B
\end{align}
So, using the generalised $\del\delb$-lemma we can write $D_{0}B = D_{-}D_{0}E$ for some $E\in \Gamma(\ext^{5}\mathfrak{X}^{*}_{-1})$. We then define
\begin{equation}
    \tilde{A} = A + D_{0}E \qquad \tilde{B} = B + \tfrac{1}{2}D_{-}E
\end{equation}
We need to check that these do define elements of $H^{3}_{D_{-}}$ and $\mathcal{K}$ respectively, and if the map $\psi$ is well defined. Firstly, we note that
\begin{equation}
\arraycolsep = 1.4pt
    \begin{array}{rclcrcl}
    D_{-}\tilde{A} &=& D_{-}A + D_{-}D_{0}E & \qquad & D_{0}\tilde{B} &=& D_{0}B + \tfrac{1}{2}D_{0}D_{-}E \\
    &=& D_{-}A + D_{0}B & & &=& D_{0}B - D_{-}D_{0}E \\
    &=& 0 & & &=& D_{0}B - D_{0}B \\
    &&&&&=&0
    \end{array}
\end{equation}
This shows that $[\tilde{A}]_{3}\in H^{3}_{D_{-}}$ and $[\tilde{B}]_{\mathcal{K}} \in \mathcal{K}$. Note here we have used the fact that, when evaluated on $\Gamma(\ext^{5}\mathfrak{X}^{*}_{-})$
\begin{equation}
    D_{-}D_{0} + \tfrac{1}{2}D_{0}D_{-} = 0
\end{equation}
which follows from the complex \eqref{eq:E6 generic complex}. The factor of $\tfrac{1}{2}$ comes from the way the Dorfman derivative acts. Now suppose that $[A=B]_{\mathcal{H}} = [A'+B']_{\mathcal{H}}$. Then, there exists $c\in \Gamma(\ext^{2}\mathfrak{X}^{*}_{0})$, $e\in \Gamma(\ext^{5}\mathfrak{X}^{*}_{-})$ such that
\begin{equation}
    A' = D_{-}c + D_{0}e \qquad B' = \tfrac{1}{2}D_{-}e
\end{equation}
From these, we define $E'$ such that $D_{0}B' = D_{-}D_{0}E'$. It is a simple check to see that we can choose $E' = E - e$. Then we have
\begin{equation}
\arraycolsep = 1.4pt
    \begin{array}{rclcrcl}
    \tilde{A}' &=& A' + D_{0}E' & \qquad & \tilde{B}' &=& B' + \tfrac{1}{2}D_{-}E' \\
    &=& A + D_{-}c + D_{0}e + D_{0}(E-e) & & &=& B + \tfrac{1}{2}D_{-}e + \tfrac{1}{2}D_{-}(E-e) \\
    &=& A+D_{0}E + D_{-}c &&&=& B + \tfrac{1}{2}D_{-}E \\
    &=& \tilde{A} + D_{-}c &&&=& \tilde{B}
    \end{array}
\end{equation}
Hence we see that
\begin{equation}
    [A+B]_{\mathcal{H}} = [A'+B']_{\mathcal{H}} \qquad \Rightarrow \qquad [\tilde{A}]_{3} = [\tilde{A}']_{3} \quad [\tilde{B}]_{\mathcal{K}} = [\tilde{B}']_{\mathcal{K}}
\end{equation}
Finally, since $E$ as defined above is not unique, we need to check that the map does not depend on the choice. Indeed, suppose
\begin{equation}
    D_{0}B = D_{-}D_{0}E = D_{-}D_{0}E' \qquad \Rightarrow \qquad D_{-}D_{0}(E-E') = 0
\end{equation}
Using the generalised $\del\delb$-lemma again, we can write $D_{0}(E'-E) = D_{-}D_{0}F$ for some $F\in \Gamma(\ext^{4}\mathfrak{X}^{*}_{0})$. Then we have
\begin{equation}
\arraycolsep = 1.4pt
    \begin{array}{rclcrcl}
    \tilde{A}' &=& A + D_{0}E' & \qquad & \tilde{B}' &=& B + \tfrac{1}{2}D_{-}E' \\
    &=& A + D_{0}E + D_{0}(E'-E) & & &=& B + \tfrac{1}{2}D_{-}E + \tfrac{1}{2}D_{-}(E-E') \\
    &=& A + D_{0}E +D_{-}D_{0}F & & &=& \tilde{B} + D_{-}e \\
    &=& \tilde{A} + D_{-}c & & & & 
    \end{array}
\end{equation}
where $c = D_{0}F \in \Gamma(\ext^{2}\mathcal{X}^{*}_{0})$, and $e = \tfrac{1}{2}(E-E' + D_{-}F) \in \Gamma(\ext^{5}\mathfrak{X}^{*}_{-1})$ is such that $D_{0}e = 0$. Hence we have
\begin{equation}
    D_{-}D_{0}E = D_{-}D_{0}E' \qquad \Rightarrow \qquad [\tilde{A}']_{3} = [\tilde{A}]_{3} \quad [\tilde{B}']_{\mathcal{K}} = [\tilde{B}]_{\mathcal{K}}
\end{equation}
Hence, the map $\psi$ is well defined. It is a simple check to see that $\theta$ is also well defined.

Now we show that $\psi,\theta$ are inverses of each other. Firstly,
\begin{align}
    \theta\circ\psi([A+B]_{\mathcal{H}}) &= \theta([\tilde{A}]_{3} + [\tilde{B}]_{\mathcal{K}}) \\
    &= [\tilde{A} + \tilde{B}]_{\mathcal{H}} \\
    &= [A + D_{0}E + B + \tfrac{1}{2}D_{-}E]_{\mathcal{H}} \\
    &= [A + B]
\end{align}
Therefore, $\theta\circ\psi = \mathbb{I}_{\mathcal{H}}$. Next consider,
\begin{align}
    \psi\circ \theta([A]_{3} + [B]_{\mathcal{K}}) &= \psi([A+B]_{\mathcal{H}}) \\
    &= [\tilde{A}]_{3} + [\tilde{B}]_{\mathcal{K}} \\
    &= [ A + D_{0}E]_{3} + [B + \tfrac{1}{2}D_{-}E]
\end{align}
But since $D_{0}B = 0$ by assumption, we can choose $E=0$. Hence,
\begin{equation}
    \psi\circ\theta([A]_{3} + [B]_{\mathcal{K}} ) = [A]_{3} + [B]_{\mathcal{K}} 
\end{equation}
So $\psi\circ\theta = \mathbb{I}_{H^{3}\oplus \mathcal{K}}$ and hence $\psi = \theta^{-1}$. Clearly $\theta$ and $\psi$ are homomorphisms. Hence,
\begin{equation}
    \mathcal{H} \cong H^{3}_{D_{-}}\oplus \mathcal{K}
\end{equation}

Now we want to show that $\mathcal{K}\cong H^{6}_{D_{-}}$. Again, let's define some maps
\begin{equation}
    \begin{array}{rclcrcl}
       \eta : \mathcal{K} & \longrightarrow & H^{6}_{D_{-}} & \qquad & \zeta : H^{6}_{D_{-}} & \longrightarrow & \mathcal{K} \\
       \left[B\right]_{\mathcal{K}} & \longmapsto & [B]_{6} & & [B]_{6} & \longmapsto & [\tilde{B}]_{\mathcal{K}}
    \end{array}
\end{equation}
where $\tilde{B}$ is defined by the following. For any $B\in \Gamma(\ext^{6}\mathfrak{X}^{*}_{-2})$ we have $D_{-}B = 0$. But we also assume that $D_{0}$ is a chain homomorphism, meaning that $D_{-}(D_{0}B) = 0$. Hence, using the generalised $\del\delb$-lemma, we can define an $E$ such that
\begin{equation}
    D_{0}B = D_{-}D_{0}E
\end{equation}
We therefore define $\tilde{B}$ as
\begin{equation}
    \tilde{B} = B + \tfrac{1}{2}D_{-}E \qquad \Rightarrow \qquad D_{0}\tilde{B} = D_{0}B + \tfrac{1}{2}D_{0}D_{-}E = 0
\end{equation}
A similar proof as above shows that these maps are well defined and are inverses of each other. Hence we have
\begin{equation}
    \mathcal{H} \cong H^{3}_{D_{-}} \oplus \mathcal{K} \cong H^{3}_{D_{-}} \oplus H^{6}_{D_{-}}
\end{equation}

\end{proof}

\subsection{Calabi--Yau Moduli}

Here we will show that the Calabi--Yau satisfies the generalised $\del\delb$-lemma and hence we can calculate its moduli using the formula above. The proof involves using a compact Calabi--Yau but the result holds more generally as one can calculate the moduli using a type 0 presentation of the ECS instead.

The ECS for a Calabi--Yau is
\begin{equation}
    \JUs = \frac{1}{2}\left( I - \vol - \vol^{\#} \right) \qquad L_{1} = \ee^{\ii\vol}\cdot(T^{1,0}\oplus \ext^{0,2}T^{*})
\end{equation}
Using the adjoint action of $\JUs$, we can decompose $E_{\bbC}$ and $\ad\tilde{F}_{\bbC}$ into eigenbundles
\begin{equation}
    E_{\bbC} = L_{1}\oplus L_{0} \oplus L_{-1} \qquad \ad\tilde{F}_{\bbC} = \ad P_{\bbR^{\pl}\times \Us{6}} \oplus S_{1}\oplus S_{-1} \oplus S_{2}\oplus S_{-2}
\end{equation}
The eigenbundles needed for the deformation problem laid out in the previous section are given explicitly by
\begin{align}
    \ext^{5}\mathfrak{X}^{*}_{-1} = L_{-} &= \left\{ \begin{array}{c|c}
        \bar{w} - \ii\bar{w}\lrcorner\vol & \bar{w}\in T^{0,1} \\
        \omega & \omega \in \ext^{2,0}T^{*}
    \end{array} \right\} \\
    \nonumber \\
    \ext^{2}\mathfrak{X}^{*}_{0} = L_{0} &= \left\{ \begin{array}{c|c}
        v - \ii v\lrcorner\vol & v\in T^{1,0}  \\
        \bar{v} + \ii\bar{v}\lrcorner \vol & \bar{v}\in T^{0,1} \\
        \theta & \theta \in \ext^{1,1}T^{*}
    \end{array} \right\} \\
    \nonumber \\
    \ext^{3}\mathfrak{X}^{*}_{-1} = S_{-1} &= \left\{ \begin{array}{c|c}
        \alpha(\tfrac{2}{3} + \tfrac{1}{3}\mathbb{I} +\ii\vol -\ii\vol^{\#}) & \alpha\in\bbC \\
        r & r\in T^{0,1}\otimes T^{*1,0} \\
        \beta+\ii\vol^{\#}\lrcorner \beta & \beta \in \ext^{2,1}T^{*} \\
        \gamma - \ii\vol^{\#}\lrcorner \gamma & \gamma \in \ext^{3,0}T^{*}
    \end{array} \right\} \\
    \nonumber \\
    \ext^{6}\mathfrak{X}^{*}_{-2} = S_{-2} &= \left\{ \begin{array}{c|c}
        \lambda + \ii\vol^{\#}\lrcorner \lambda & \lambda \in \ext^{3,0}T^{*} 
    \end{array} \right\}
\end{align}
Using the holomorphic 3-form $\Omega$ of the Calabi--Yau, we can define a chain isomorphism $\mathfrak{X}^{*} \simeq T^{*}$. Indeed, we have
\begin{align}
    \ext^{5}\mathfrak{X}^{*}_{-1} &\rightarrow \left\{ \begin{array}{c}
        \bar{w}\lrcorner\vol \in \ext^{3,2}T^{*} \\
        \omega \wedge \bar{\Omega}\in \ext^{2,3}T^{*}
    \end{array} \right\} \sim \ext^{5}T^{*} \\
    \nonumber \\
    \ext^{2}\mathfrak{X}^{*}_{0} & \rightarrow \left\{ \begin{array}{c}
        v\lrcorner \Omega \in \ext^{2,0}T^{*} \\
        \bar{v}\lrcorner \Omega \in \ext^{0,2}T^{*} \\
        \theta \in \ext^{1,1}T^{*}
    \end{array}\right\} \sim \ext^{2}T^{*} \\
    \nonumber \\
    \ext^{3}\mathfrak{X}^{*}_{-1} & \rightarrow \left\{\begin{array}{c}
    \alpha\Omega \in \ext^{3,0}T^{*} \\
    r\cdot \bar{\Omega} \in \ext^{1,2}T^{*} \\
    \beta \in \ext^{2,1}T^{*} \\
    (\bar{\Omega}^{\#}\lrcorner \gamma)\bar{\Omega} \in \ext^{0,3}T^{*} 
    \end{array}\right\} \sim \ext^{3}T^{*} \\
    \nonumber \\
    \ext^{6}\mathfrak{X}^{*}_{-2} & \rightarrow \left\{ \begin{array}{c}
    \lambda \wedge \bar{\Omega} \in \ext^{3,3}T^{*}
    \end{array} \right\} \sim \ext^{6}T^{*}
\end{align}
We can also take the torsion free compatible connection $\nabla$, and lift it to a generalised connection $D$ as in \cite{Coimbra:2011ky}. With this lift, and with the isomorphism above we find
\begin{equation}
    D_{-} \rightarrow \del \qquad D_{0} \rightarrow \Omega^{\#}\lrcorner \delb + \bar{\Omega}^{\#}\lrcorner \del
\end{equation}
where here $\del,\delb$ denote the projection of $\nabla$ onto the $T^{*1,0}, T^{*0,1}$ piece respectively.

We need to show that these operators satisfy the generalised $\del\delb$-lemma. We just need to show this for elements in $\ext^{5}T{*}$ and $\ext^{6}T^{*}$ for the proof to hold.

\begin{proof}
First take $\alpha\in \ext^{2,3}T^{*}$. Then we have
\begin{equation}
    D_{0}\alpha = (\Omega^{\#}\lrcorner\delb)\lrcorner\alpha + (\bar{\Omega}^{\#}\lrcorner \del)\alpha
\end{equation}
We can consider only the second term which is just $\bar{\Omega}^{\#}\lrcorner(\del\alpha)$. Suppose further that $D_{0}\alpha \in \ker D_{-} \sim \ker \del$. Then each term individually has to be in $\ker \del$. Since $h^{0,3} = 1$, we must have that, up to $\del$ exact terms
\begin{equation}
    \bar{\Omega}^{\#}\lrcorner(\del\alpha) = c\bar{\Omega} \qquad \Rightarrow \qquad \del\alpha = \tilde{c}\vol
\end{equation}
for some constants $c,\tilde{c}$. However, $\vol$ is not $\del$-exact and hence we must have $c= \tilde{c} = 0$. Therefore, $\del\alpha = 0$ and so $[\alpha]\in H^{2,3}_{\del} = 0$. Therefore, $\alpha$ is $\del$-exact and so $D_{0}\alpha= D_{0}D_{-}a \sim D_{-}D_{0}a$ for some $a\in \ext^{4}T^{*}$.

Now take $\beta \in \ext^{3,2}T^{*}$. Here we automatically have $\del\beta = 0$ and so $[\beta] \in H^{3,2}_{\del} = 0$. Therefore $\beta$ is $\del$-exact and so $D_{0}\beta = D_{0}D_{-}b \sim D_{-}D_{0} b$ for some $b \in \ext^{4}T^{*}$.

Finally, we take $\gamma\in \ext^{3,3}T^{*}$ and write this as $\gamma = c\vol + \del \psi$ for some constant $c$ and some $\psi \in \ext^{2,3}T^{*}$. For any constant, we have $D_{0}(c\vol) = 0$ since $D_{0}$ is built from the compatible connection $\nabla$. Therefore, we have
\begin{equation}
    D_{0}\gamma = D_{0}\del\psi = D_{0}D_{-}\psi \sim D_{-}D_{0}\psi
\end{equation}
This gives the result.
\end{proof}

Using the results of the previous section on the moduli of a background satisfying the generalised $\del\delb$-lemma, we see that the moduli of the Calabi--Yau are given by
\begin{equation}
    \mathcal{H} = H^{3}_{D_{-}}\oplus H^{6}_{D_{-}} \cong H^{3}_{\del}\oplus H^{6}_{\del}
\end{equation}
Note that, since $H^{p}_{\del} \cong H^{p}_{\dd}$ for a Calabi--Yau manifold, this agrees with the result obtained for the moduli of the Calabi--Yau calculated through a type 0 ECS, as discussed in section \ref{sec:class 0 moduli}.

\bibliographystyle{JHEP}
\bibliography{citations}

\end{document}